\documentclass[format=acmsmall, screen]{acmart}
\usepackage{acm-ec-26}
\usepackage{booktabs} %
\usepackage[ruled]{algorithm2e} %

\SetAlFnt{\small}
\SetAlCapFnt{\small}
\SetAlCapNameFnt{\small}
\SetAlCapHSkip{0pt}
\IncMargin{-\parindent}
\allowdisplaybreaks 

\usepackage[textsize=small]{todonotes}
\usepackage{graphicx} %
\usepackage{mathrsfs}
\usepackage{enumitem}
\usepackage{amsthm}
\usepackage[super]{nth}
\usepackage[toc,page]{appendix}
\usepackage{listings}
\usepackage{mathtools}
\usepackage{color}
\usepackage{tabularx}
\usepackage{amsmath}
\usepackage{csquotes}
\usepackage{multicol}
\usepackage{fancyvrb}
\usepackage{makecell}
\usepackage{dsfont}
\usepackage{booktabs}
\usepackage{mdframed}
\usepackage[toc,page]{appendix}
\usepackage{mathdots}
\usepackage{xspace}
\usepackage{empheq}
\usepackage{arydshln}
\usepackage{nicefrac}
\usepackage{dsfont}
\newcommand{\clrstr}{20} 
\newcommand{\strstr}{40} 
\usepackage[nameinlink]{cleveref}

\usepackage{pifont} %
\newcommand{\cmark}{\textcolor{green!30!gray}{\ding{51}}} %
\newcommand{\xmark}{\textcolor{red!40!gray}{\ding{55}}} %
\newcommand{\ifsv}{robustness to fully satisfied voters\xspace}
\newcommand{\Ifsv}{Robustness to fully satisfied voters\xspace}
\newcommand{\IFSV}{Robustness to Fully Satisfied Voters\xspace}
\newcommand{\happygreen}[1]{\textcolor{green!30!gray}{#1}}
\newcommand{\sadred}[1]{\textcolor{red!40!gray}{#1}}
\newcommand{\ejrppar}{\ejrp \cap \mathrm{exPareto}}

\usepackage{array}
\newcolumntype{L}[1]{>{\raggedright\let\newline\\\arraybackslash\hspace{0pt}}m{#1}}
\newcolumntype{C}[1]{>{\centering\let\newline\\\arraybackslash\hspace{0pt}}m{#1}}
\newcolumntype{R}[1]{>{\raggedleft\let\newline\\\arraybackslash\hspace{0pt}}m{#1}}

\renewcommand{\arraystretch}{1.15}

\usepackage{tikz}
\usetikzlibrary{automata, positioning, calc, shapes, arrows, fit}

\usepackage{thm-restate}

\DeclareMathOperator{\pjr}{PJR}
\DeclareMathOperator{\ejr}{EJR}
\DeclareMathOperator{\pjrp}{PJR+}
\DeclareMathOperator{\ejrp}{EJR+}
\DeclareMathOperator{\prl}{NPR}
\DeclareMathOperator{\Core}{Core}

\DeclareMathOperator{\LQ}{LQ}
\theoremstyle{plain}
\newtheorem{theorem}{Theorem}

\newtheorem{corollary}[theorem]{Corollary}
\newtheorem{example}{Example}

\newtheorem{proposition}[theorem]{Proposition}
\newtheorem{definition}{Definition}

\newtheorem{observation}[theorem]{Observation}
\usepackage{wrapfig}
\usepackage{wrapstuff}
\usepackage{etoolbox}
\usepackage{epigraph}

\newcommand{\natWitnessEjrp}{w^{\ejrp}}
\newcommand{\natWitnessPjrp}{w^{\pjrp}}

\setcitestyle{authoryear}

\title{An Axiomatic Analysis of Proportionality Notions in Approval-Based Multiwinner Voting}

\author{Chris Dong}
\affiliation{ \institution{Hasso Plattner Institute, Universität Potsdam} \country{Germany}}
\email{chrisshuyu.dong@hpi.de}

\author{Jannik Peters}
\affiliation{ \institution{Shanghai University of Finance and Economics} \country{China}}
\email{jannikpeters2512@gmail.com}

\begin{abstract}

Even though proportional representation is a fundamental goal in multiwinner voting and a plethora of proportionality notions has been introduced, the normative justifications for choosing one notion over another remain poorly understood. We address this by introducing the axiomatic study of proportionality notions in the approval-based multiwinner voting setting.
That is, we define axioms (or desirable properties) that ``good'' proportionality notions should possess. Using these axioms, we then provide axiomatic characterizations of two prominent recently introduced notions: PJR+ and EJR+ \citep[ACM EC'23]{BrPe23a}.
Our characterization proceeds in two parts. Firstly, we provide a characterization of refinements of PJR+ and EJR+. That is, we define axioms such that any notion satisfying these axioms must imply PJR+ (or EJR+, respectively). In particular, the fundamental axiom distinguishing PJR+ and EJR+ from their predecessors PJR and EJR is the classical axiom of monotonicity. Secondly, we introduce our framework of witness-based proportionality notions, that is, proportionality notions that certify ``misrepresentation'' via a witness set of misrepresented voters. In this class, we provide characterizations of PJR+ and EJR+ as the strongest (assuming certain axioms). Thus, by putting both directions together we obtain exact characterizations of both notions.
Among our results, it may be worth highlighting that any notion satisfying mild conditions (monotonicity, independence of losers, robustness to fully satisfied voters, and lower quota) refines PJR+. In this sense, PJR+ turns out to be the canonical minimal requirement that one may impose on proportionality. 
\end{abstract}

\begin{document}
\maketitle
\setcounter{tocdepth}{1} 
\tableofcontents
\newpage
\section{Introduction}

The \emph{multiwinner voting problem} \citep{FSST17a} considers the selection of a fixed-size set of candidates given the preferences of voters.
An important objective for multiwinner voting is \emph{proportional representation}, i.e., the selected subset of candidates should reflect the voters' preferences in a proportional manner. Indeed, proportionality is a natural way of ensuring fairness in many variations of the multiwinner voting setting.
For instance, civic participation platforms are interested in curating a ``home page'' that represents user viewpoints proportionally on the platform \citep{RML25a}. In participatory budgeting elections, each possible group of voters should be able to spend money proportional to its size \citep{PPS21a}. In many real-world elections, the chosen candidates should proportionally represent the electorate. For instance, Scotland uses the proportional STV voting rule to select councilors in local government elections \citep{BBMP25a}. 
Across all these applications, a key challenge is to find appropriate formalizations of proportionality: when can we argue that the elected candidates sufficiently or insufficiently represent the voters?

We study these questions in a setting with \emph{dichotomous} voter preferences, commonly referred to as \emph{approval-based committee (ABC) voting} \citep{LaSk22a}. Here, each voter approves or disapproves each candidate, and hence their preferences are represented by the subset of candidates that they approve.
Apart from the \emph{apportionment setting}, in which the candidates are partitioned into parties and each voter approves of (only) one party \citep{BaYo01a}, the ABC voting setting is arguably the multiwinner voting setting with the simplest preference structure. Therefore, understanding proportionality in ABC voting is key to gaining an intuition for stronger or more general settings. Indeed, proportionality in ABC voting has recently served as the backbone for developing proportionality guarantees (or proportional voting rules) in various settings. For instance, \citet{BrPe23a} and \citet{KePe23a} showed how to lift proportional fairness from ABC voting to ordinal preferences and proportional clustering, respectively. \citet{PPS21a} generalized both fairness notions and the Method of Equal Shares (MES) voting rule to participatory budgeting, with MES even being used in real-world participatory budgeting elections.

A significant part of the research on proportionality in ABC voting has focused on the development of proportionality notions.\footnote{Also known as proportionality axioms. To avoid confusion, throughout the paper, we will only use the term \emph{proportionality notion} and reserve the term  \emph{axiom} for the (not necessarily proportionality-related) concepts used to characterize proportionality notions.} %
These notions are binary properties that, for a given election, partition the set of committees into two: those that are proportional according to the notion and those that are not. By evaluating voting rules using these notions, one can then judge the quality of the rules. If a rule always outputs committees that satisfy the notion, it is viewed as being proportional. Consequently, proportionality notions are both useful for distinguishing between existing rules and for inspiring the design of new rules.

In turn, many different proportionality notions have been proposed over the last few years \citep{ABC+16a,PeSk20a,KKL25a,CaEl25a,LaNa25a}; some of these notions build on top of each other and create a logical hierarchy, while others observe that previous notions did not capture all facets of proportionality and formalize underexplored aspects.
Most of the proposed notions have the goal of preventing \emph{underrepresentation}: each group of voters should obtain at least as many candidates on the committee as they deserve. %
When does a group of voters deserve any representation? A recurring approach is that of cohesiveness: a group of voters deserves representation proportional to its size and how many candidates it approves unanimously. \citet{ABC+16a} and \citet{SFF+17a} initiated this line of proportionality notions by introducing the extended justified representation (EJR) and proportional justified representation (PJR) notions.

{However, \citet{BrPe23a} criticized both notions for relying on a strong form of cohesiveness: 
a group is eligible for representation by $\ell$ candidates only if the group is sufficiently large and its voters share at least $\ell$ commonly approved candidates.} In both synthetic and real-world instances, the intersections of approval ballots usually turn out to be too small. As a consequence, on many profiles, there are only a few groups of voters eligible for representation and thus PJR and EJR are undemanding in many instances. To address this issue, \citeauthor{BrPe23a} propose variants of these notions, called PJR+ and EJR+. The difference between the notions being that a group of voters can already be eligible for representation by multiple candidates if they unanimously approve a \emph{single} candidate not contained in the committee. Indeed, by focusing on single candidates and thus reducing the amount of cohesiveness required, PJR+ and EJR+ are more discriminating in simulations than their predecessors, and further have the added benefit of being verifiable in polynomial time. Both of these benefits, however, are primarily pragmatic, and have not been sufficient to establish a clear, normative case for only using PJR+ and EJR+.
Notably, there currently is no consensus on whether to use PJR and EJR \citep{DFPS25a,DBW+25a, AKRS24a}, or their strengthened versions PJR+ and EJR+ \citep{DoPe25a,BBC+24a,KSS25a}; with different papers adopting different choices, often without an explicit justification.  
Considering the growing body of proportionality notions, we ask: is there a principled perspective that establishes PJR+ and EJR+ as the canonical way to strengthen PJR and EJR?

\subsection{Our Contribution}
To address this problem, we initiate the study of analyzing proportionality notions via the \emph{axiomatic method}. A central approach of social choice is to axiomatically characterize voting rules. Our main conceptual contribution is to model proportionality notions themselves as (set-valued) voting rules that map each election instance to the set of committees deemed proportional, and then analyze these correspondences using desirable axioms. By taking this axiomatic approach, we obtain normative reasons (from the viewpoint of axiomatic characterizations) for choosing some notions over others. Concretely, our main technical contributions are axiomatic characterizations of PJR+ and EJR+. Put differently, we provide a collection of axioms such that PJR+ or EJR+ is the unique proportionality notion satisfying these, respectively. In particular, we identify sufficient conditions for notions to \emph{refine} PJR+ or EJR+ (that is, every committee deemed proportional by this notion must satisfy PJR+/EJR+) and to \emph{coarsen} them (thus selecting all committees they would select). As the basis of our axiomatic characterization, we use the behavior of the notions on party-list instances. On these instances, in which candidates are partitioned into parties and voters approve of exactly the candidates of a single party, we have a widely accepted ``ground-truth'' for proportionality: every party should receive at least its lower quota, which is its ``fair share'' of seats rounded down (i.e., if a party receives $31\%$ of the voters and there are $10$ seats, it should receive at least $3$ of these seats, provided that it runs at least $3$ candidates).

We begin in \Cref{sec:refinements} by characterizing PJR+ as the coarsest proportionality notion satisfying lower quota for party-lists, together with three technical axioms. Specifically, we distinguish PJR+ from PJR using the classical axiom of \emph{monotonicity}: if a committee is proportional according to a notion, then it should remain proportional if %
candidates on the committee gain additional approvals (while everything else stays the same).
Besides this, our characterization uses two axioms satisfied by nearly all proportionality notions: independence of losers (after candidates outside a proportional committee are deleted, the committee should stay proportional) and \ifsv (if a committee is proportional and a voter restricts their approvals to only approve candidates on the committee, then this committee should stay proportional). The mildness of the involved axioms establishes PJR+ as a canonical, minimal requirement for proportionality. We further provide a similar characterization for EJR+ by additionally imposing an axiom that we call independence of approval swaps. %

In \Cref{sec:Witness_Rationalizability} we introduce the concept of \emph{witness-based proportionality notions}. 
In this framework, we certify underrepresentation via a ``witness'' set of misrepresented voters. This captures the previously mentioned justified representation notions, and, for instance, core stability. In this section, we characterize EJR+ and PJR+ as the \emph{finest} notions among the family of ``cohesiveness-based'' and witness-based notions encompassing both EJR+ and PJR+, but also the coarser EJR and PJR. By combining the results %
of \Cref{sec:refinements,sec:Witness_Rationalizability}, %
we obtain exact characterizations of both notions.

\subsection{Related Work}
As mentioned, proportionality has been extensively studied, and we refer to the book by \citet{LaSk22a} for an introduction to ABC voting and proportionality therein.
\paragraph{Proportionality in ABC Voting} \citet{ABC+16a} initiated the study of proportionality notions in ABC voting by defining justified representation (JR), extended justified representation (EJR), and core stability. \citet{SFF+17a} then proposed proportional justified representation (PJR) as a notion that lies between JR and EJR. 

Closest to our work is that of \citet{BrPe23a}, who strengthened PJR and EJR by proposing their variants PJR+ and EJR+. While PJR and EJR are only demanding when the voters form cohesive groups, \citeauthor{BrPe23a} define PJR+ and EJR+ in a more robust way, where violations can be witnessed by a single unelected candidate. They prove that the new variants are both always satisfiable and can be verified in polynomial time and using simulations, they illustrate that PJR+ and EJR+ are more discriminating than their counterparts. 

Besides this, the list of proportionality notions is continuously expanding. For example, \citet{PeSk20a} consider laminar proportionality and priceability, \citet{PPS21a} and \citet{KKL25a} propose the notions of full justified representation and full (proportional) justified representation, and \citet{CaEl25a} modify existing proportionality notions by changing the quota they are based on from the Hare to the Droop quota. 

\paragraph{Proportionality Beyond ABC Voting} Outside of ABC voting, proportionality has been studied in related multiwinner voting settings, including participatory budgeting \citep{PPS21a,BFL+23a}, clustering \citep{ALM23a,KePe23a}, and social choice with ordinal preferences \citep{AzLe20x,ALP+25a,BBMP25a}. In all three areas, the proportionality notions are explicitly influenced by the ABC voting literature. There are further recent works expanding the ABC voting setting to, for instance, include constraints \citep{MPS23a,MMS23a} or to apply it to temporal settings \citep{CGP24a}. 

\paragraph{Axiomatic Characterizations in Multiwinner Voting} In their book on ABC voting, \citet{LaSk22a} explicitly highlight the lack of axiomatic characterizations of voting rules as a major open question. The few examples of such characterizations include the ones for Thiele and scoring rules \citep{LaSk21a, DoLe24a} and sequential valuation rules \citep{DoLe23a}. Similar characterizations have also been obtained in the ordinal multiwinner voting setting \citep{FPSN19a, SFS19a}. Finally, also in the ordinal setting, we want to highlight the work of  \citet{AzLe22a} which characterizes the proportionality notion of \emph{proportionality for solid coalitions} as exactly the set of outcomes of a particular voting rule. This differs from our characterizations, as they do not postulate desirable axioms for the notion (or then use these axioms to characterize it). 

\paragraph{Axiomatic Characterizations of Solution Concepts} Finally, let us mention that axiomatic characterizations of solution concepts are common in economics (and computation). For instance, in a recent work \citet{BrBr24a} characterized Nash equilibria in normal-form games. Other examples include characterizations of the core in cooperative games \citep{Pele85a} and two-sided matching \citep{SoTo92a}, of stability in matching markets \citep{KaTo24a}, and \citet{Hars55a}'s classical characterization of weighted utilitarianism.

\section{Approval-Based Multiwinner Voting}
In this section, we introduce the approval-based multiwinner voting setting, including the proportionality notions we study in this work.
\subsection{Notation and Definitions}
For $t \in \mathbb{N}^+$, we let $[t] = \{1, \dots, t\}$.
Let $N = [n]$ be a finite set of $n$ voters, $C = \{c_1, \dots, c_m\}$ a finite set of $m$ candidates, and $A=(A_i)_{i\in N}$ the \textit{approval profile} where each voter $i \in N$ submits an \textit{approval ballot} $A_i\subseteq C$. Each approval ballot $A_i$ is the set of candidates \emph{approved} by voter $i$. The goal is to elect a \textit{committee} $W\subseteq C$ of size $k\in \mathbb N^+$. Together, these form an \emph{approval-based multiwinner voting instance} (for short, \emph{instance}) $\mathcal{I} = (N, A, C, k)$. Throughout, we assume that $m \ge k$ for every instance. For a given candidate $c \in C$, we let $N_c \coloneqq \{i \in N \colon c\in A_i\}$ be the \emph{support} of $c$, and each $i\in N_c$ is a \emph{supporter} of $c$. We write $\binom{C}{k}$ for the set of all $k$-element subsets of $C$. 
Let $\mathfrak{I}$ be the set of all (valid) instances. 
A \emph{proportionality notion} is a function $f$, mapping every instance $\mathcal{I} = (N, A, C, k) \in \mathfrak{I}$ to a set of committees $f(\mathcal{I}) \subseteq {\binom{C}{k}}$ that satisfy the notion; i.e., a committee $W$ satisfies $f$ in $\mathcal{I}$ if and only if $W\in f(\mathcal{I})$. If $f(\mathcal{I}) \neq \emptyset$ for every instance $\mathcal{I} \in \mathfrak{I}$, we say that $f$ is \emph{always satisfiable}.

For convenience, when we specify the argument of $f$ as a tuple, we write $f(N,A,C,k)$ instead of $f((N,A,C,k))$.
For two proportionality notions $f$ and $f'$, we say that $f$ is a \emph{refinement} of $f'$ and write $f \subseteq f'$ if $f(\mathcal{I}) \subseteq f'(\mathcal{I})$ for every instance $\mathcal{I} \in \mathfrak{I}$. Equivalently, $f'$ \emph{coarsens} $f$.  We further single out two ``trivial'' proportionality notions.
The \emph{universal notion} $\mathrm{univ}(N,A,C,k) =\binom{C}{k}$ selects all size-$k$ committees, and the \emph{empty notion} $\mathrm{empt}(N,A,C,k) = \emptyset$ selects none.

Given an instance $\mathcal{I}=(N,A,C,k)$ and a subset $C'\subseteq C$ with $|C'|\ge k$,
we write $A_{\mid C'} = (A_i\cap C')_{i\in N}$ for the restriction of $A$ to $C'$ and define $\mathcal{I}_{\mid C'} \coloneqq (N, A_{\mid C'}, C', k)$ to be the instance restricted to $C'$.

We now introduce a structured class of instances.
We say that $\mathcal I=(N,A,C,k)$ is a \emph{party-list instance} if the candidate set $C$ can be partitioned into non-empty \emph{parties} $C_1,\dots,C_p$ (i.e., $C=\biguplus_{x=1}^p C_x$) such that for every voter
$i\in N$ there exists a (unique) $x\in[p]$ with $A_i=C_x$. For each $x\in[p]$, let $N_x \coloneqq \{i\in N \colon A_i=C_x\}$ and $n_x \coloneqq |N_x|$.

\subsection{Proportionality Notions}
Over the years, many proportionality notions have been proposed. For instance, \citet*[Figure~1]{KKL25a} list \emph{nine} of them, while the book by \citet[Figure~4.1]{LaSk22a} lists \emph{ten}. In the main part of our paper, we focus on four of them: the classical notions of extended and proportional justified representation (EJR and PJR) \citep{ABC+16a, SFF+17a} as well as their recent extensions PJR+ and EJR+ \citep{BrPe23a}.
The first such notions in ABC voting were developed by \citet*{ABC+16a}, who introduced core stability, justified representation, and extended justified representation (EJR). These three proportionality notions are all based on a game-theoretic notion of stability.\footnote{For further research on game-theoretic aspects of proportionality notions, see the recent work of \citet{HKMS24a}.} Intuitively, letting $q \coloneqq n/k$ denote the (Hare) quota, a sufficiently cohesive group of at least $q$ voters should be able to secure one seat, and a group of at least $\ell q$ voters should be able to secure $\ell$ seats.
The strongest formalization of this intuition is \emph{core stability}:
no group of at least $\ell n/k$ voters should be able to propose a set $T$ of $\ell$ candidates such that every member of the group strictly prefers $T$ to the elected committee $W$.
Here, we interpret preferences via approval utilities, i.e., voter $i$ strictly prefers $T$ to $W$ if $|T \cap A_i| > |W \cap A_i|$. This notion, however, appears to be very demanding: no voting rule is currently known to guarantee a core-stable committee for every instance, and it remains open whether a core-stable committee always exists \citep{Pete25a, BTC+26a}. In contrast to core stability, the earliest proportionality notions that are achievable by known rules do not aim to represent every voter group, but only groups that are sufficiently large and sufficiently cohesive.
\begin{definition}[$\ell$-Cohesive Group]
    Given an instance $\mathcal{I} = (N, A, C, k)$ and $\ell \in [k]$, a group $N'\subseteq N$ of voters is \emph{$\ell$-large} if $\lvert N'\rvert \ge \ell \frac{n}{k}$.
    We additionally say that $N'$ is \emph{$\ell$-cohesive} if
    \(
    \left\lvert \bigcap_{i \in N'} A_i \right\rvert \ge \ell.
    \)
\end{definition}
Intuitively, $\ell$-cohesive groups can be viewed as ``sub-parties'' (though they may overlap). The justified-representation family of notions requires that these groups are adequately represented in the chosen committee, with the two main notions being proportional justified representation (PJR) \citep{SFF+17a} and extended justified representation (EJR) \citep{ABC+16a}.
\begin{definition}[PJR and EJR]
    Given an instance $\mathcal{I} = (N, A, C, k)$ we say that a committee $W$ satisfies: 
    \begin{itemize}
        \item \emph{proportional justified representation (PJR)} if for every $\ell \in [k]$ and $\ell$-large and $\ell$-cohesive group $N'$ it holds that $\left\lvert \bigcup_{i \in N'} (A_i \cap W) \right\rvert\ge \ell$ \citep{SFF+17a};
        \item \emph{extended justified representation (EJR)} if for every $\ell \in [k]$ and $\ell$-large and $\ell$-cohesive group $N'$ there exists an $i \in N'$ with $\left\lvert A_i \cap W\right\rvert \ge \ell$ \citep{ABC+16a}.
    \end{itemize}
    We refer to the set of all committees satisfying PJR (respectively EJR) in the given instance as $\pjr(\mathcal{I})$ (respectively $\ejr(\mathcal{I})$).
\end{definition}
PJR requires the group to \emph{collectively} obtain $\ell$ approved winners, whereas EJR requires that some \emph{individual} in the group obtains $\ell$ approved winners.
By definition, any committee satisfying EJR also satisfies PJR.  Neither PJR nor EJR has been without criticism. In particular, \citet{BrPe23a} criticized the reliance on cohesive groups, which makes both PJR and EJR less demanding to satisfy, yet computationally hard to verify. Further, they criticized the lack of robustness of both proportionality notions: their guarantees are significantly weaker if groups are almost cohesive, yet do not fully meet the strong threshold of all approving the same $\ell$ candidates.
Motivated by these concerns, \citeauthor{BrPe23a} proposed variants that remove the requirement that the witnessing groups be $\ell$-cohesive and instead only require agreement on a single unselected candidate.
\begin{definition}[PJR+ and EJR+]
    Given an instance $\mathcal{I} = (N, A, C, k)$ we say that a committee $W$ satisfies: 
    \begin{itemize}
        \item \emph{proportional justified representation+ (PJR+)} if for every $\ell \in [k]$, $c \notin W$, and $\ell$-large group $N' \subseteq N_c$ it holds that $\left\lvert \bigcup_{i \in N'} (A_i \cap W) \right\rvert\ge \ell$ \citep{BrPe23a};\footnote{As pointed out by \citeauthor{BrPe23a}, PJR+ is equivalent to the IPSC notion of \citet{AzLe21a} restricted to approval-based multiwinner voting.}
        \item \emph{extended justified representation+ (EJR+)} if for every $\ell \in [k]$, $c \notin W$, and $\ell$-large group $N' \subseteq N_c$ there exists an $i \in N'$ with $\left\lvert A_i \cap W \right\rvert\ge \ell$ \citep{BrPe23a}.
    \end{itemize}
    We refer to the set of all committees satisfying PJR+ (respectively EJR+) in the given instance as $\pjrp(\mathcal{I})$ (respectively $\ejrp(\mathcal{I})$).
\end{definition}
Indeed, both PJR+ and EJR+ can be verified in polynomial time. Further, \citeauthor{BrPe23a} argue that both notions are seemingly more robust than PJR and EJR and experimentally observe that in synthetic data both are rarer than their ``cohesive predecessors''. We note that EJR+ refines both EJR and PJR+, while EJR and PJR+ are incomparable (i.e., there exist committees satisfying EJR but not PJR+ and committees satisfying PJR+ but not EJR). Further, committees satisfying EJR+ (and hence also EJR, PJR, and PJR+) always exist, for instance, those chosen by the Method of Equal Shares \citep{PeSk20a, BrPe23a}.

\section{Characterizing Refinements of PJR+ and EJR+} 
\label{sec:refinements}
The goal of this section is twofold: (i) we introduce the general concept of studying proportionality notions from an axiomatic perspective, and (ii) we provide an axiomatic distinction between PJR/EJR and their ``$+$'' variants, thereby formalizing the observation of \citeauthor{BrPe23a} that PJR and EJR are non-robust. For this, we introduce desirable, abstract properties (axioms) that reasonable proportionality notions should satisfy. Through this lens, we analyze in what sense the previously defined proportionality notions are reasonable and which weaknesses they may have. Using these axioms, we characterize PJR+ and EJR+ as the \emph{coarsest} (with regard to $\subseteq$, i.e., with the largest set of admissible committees) proportionality notions satisfying certain desirable properties. Equivalently, every proportionality notion satisfying these properties is a refinement of PJR+ (respectively, EJR+). Our key distinguishing axiom is monotonicity: if a committee $W$ satisfies the notion (and hence is proportional according to it), then it should continue to satisfy the notion after some voters additionally approve members of $W$. As we show in this section, neither PJR nor EJR satisfies this axiom. Omitted proofs appear in \Cref{app:missing}.

\subsection{Axioms for Proportionality Notions}

As our starting point, we adopt the axiom of lower quota for party-list instances \citep{BLS18a}, arguably one of the weakest possible proportionality requirements. As mentioned in the preliminaries, so-called party-list instances only allow for very restricted profiles: the candidates are partitioned into parties and each voter approves all candidates of exactly one party. The class of party-list instances has two main advantages: (i) it allows us to represent real-world parliamentary apportionment elections and lets us leverage the well-established theory of proportional representation for apportionment \citep{BaYo01a}, and (ii) for party-list instances, it is significantly easier to agree on which committees are proportional than in the full approval-based multiwinner setting. The behavior of voting rules on party-list instances was also used by \citet{LaSk21a} and \citet{DoLe24a} in their characterizations of voting rules in ABC voting.

In particular, lower quota for party-list instances requires that from a party $C_x \subseteq C$ approved by the set of voters $N_x \subseteq N$ we select at least $\min\left(\lvert C_x\rvert, \lfloor k\frac{n_x}{n}\rfloor\right)$ candidates from $C_x$, i.e., the party's lower quota, capped by party size. For instance, in an election with $k = 50$, if a party fielding $20$ candidates receives a third of the votes, it should get at least $\min(\lfloor\frac{50}{3}\rfloor, 20) = 16$ seats. In particular, for a given party-list instance we denote by $\LQ(\mathcal{I})$ the set of all committees $W$ for which $\lvert W \cap C_x\rvert \ge \min\left(\lfloor k \frac {n_x}{n}\rfloor, \lvert C_x\rvert\right)$ for all parties $C_x$.
\begin{definition}
A proportionality notion $f$ satisfies \emph{lower quota (for party-lists)} if $f(\mathcal{I}) \subseteq \LQ(\mathcal{I})$ for every party-list instance $\mathcal I$.
\end{definition}
We take lower quota on party-list instances as our baseline proportionality requirement, following an approach suggested by \citet[p.~47]{LaSk22a}:
``One approach to reasoning about proportionality of voting rules is to first identify a class of well-structured preference profiles where the concept of proportionality can be intuitively captured.'' 
They then use party-list instances to compare voting rules with apportionment methods.
In contrast, we use the apportionment setting as a foundation for reasoning about proportionality notions on arbitrary instances.
On party-list instances, all previously defined notions (including core stability) coincide with lower quota.\footnote{For this proposition, we write $\Core(\mathcal I)$ for the set of core-stable committees in $\mathcal I$.}
\begin{restatable}{proposition}{lqpleq}
    Let $\mathcal{I} = (N,A,C, k)$ be a party-list instance. Then 
    \(
    \LQ(\mathcal{I}) = \pjr(\mathcal{I}) = \pjrp(\mathcal{I}) = \ejr(\mathcal{I}) = \ejrp(\mathcal{I}) = \Core(\mathcal{I}).\)
\label{prop:lq}
\end{restatable}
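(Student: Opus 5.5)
We prove the equalities by closing a cycle of inclusions. The easy chain is $\Core(\mathcal I) \subseteq \ejrp(\mathcal I)$? No — core does not obviously imply EJR+ in general. Instead we use the general implications EJR+ $\subseteq$ EJR $\subseteq$ PJR and EJR+ $\subseteq$ PJR+ $\subseteq$ PJR, which hold on every instance. We then show two things on party-list instances: $\pjr(\mathcal I) \subseteq \LQ(\mathcal I)$ and $\LQ(\mathcal I) \subseteq \ejrp(\mathcal I)$. These give $\LQ = \pjr = \pjrp = \ejr = \ejrp$. For the core we separately show $\Core(\mathcal I) \subseteq \LQ(\mathcal I)$ and $\LQ(\mathcal I) \subseteq \Core(\mathcal I)$.

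\emph{PJR (and Core) implies LQ.} Fix a party $C_x$ and set $\ell = \min(\lfloor k n_x/n\rfloor, |C_x|)$. If $\ell = 0$ there is nothing to show. Otherwise $N_x$ is $\ell$-large, since $n_x \ge \ell n/k$. It is also $\ell$-cohesive, since the voters in $N_x$ commonly approve $C_x$ and $|C_x| \ge \ell$. PJR then gives $|\bigcup_{i\in N_x}(A_i\cap W)| = |W\cap C_x| \ge \ell$.

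For the core, suppose $|W \cap C_x| < \ell$. Take any $T \subseteq C_x$ with $|T| = \ell$. Every voter in $N_x$ gets utility $\ell$ from $T$ but less than $\ell$ from $W$, and $N_x$ is $\ell$-large. This is a core violation.

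\emph{LQ implies EJR+ and Core.} This is the main step. Let $W \in \LQ(\mathcal I)$. Suppose EJR+ fails via $c\notin W$, some $\ell$, and an $\ell$-large group $N'\subseteq N_c$ in which every voter has fewer than $\ell$ approved winners.
\begin{itemize}
\item Since $c$ belongs to a unique party $C_x$, we get $N' \subseteq N_c = N_x$, so $n_x \ge |N'| \ge \ell n/k$ and hence $\lfloor k n_x/n\rfloor \ge \ell$.
\item Every voter in $N'$ has exactly $|W\cap C_x|$ approved winners, so $|W\cap C_x| < \ell \le \lfloor kn_x/n\rfloor$.
\item LQ then forces $|W\cap C_x| \ge |C_x|$, so $C_x \subseteq W$. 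This contradicts $c\in C_x\setminus W$.
\end{itemize}

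For the core, suppose a group $S$ with $|S|\ge \ell n/k$ and a set $T$ with $|T| = \ell$ block $W$.
\begin{itemize}
\item Partition $S$ by party into $S_x = S\cap N_x$, and set $t_x = |T\cap C_x|$, so $\sum_x t_x = \ell$.
\item Since $\sum_x |S_x| \ge \frac{n}{k}\sum_x t_x$, some $x$ with $S_x \neq \emptyset$ satisfies $|S_x| \ge t_x n/k$. If instead every nonempty $S_x$ had $|S_x| < t_x n/k$, summing would give a strict inequality, using $t_x \ge 0$ for parties with $S_x$ empty.
\item For that $x$ we have $n_x \ge |S_x| \ge t_x n/k$, so $\lfloor kn_x/n\rfloor \ge t_x$.
\item LQ then gives $|W\cap C_x| \ge \min(t_x, |C_x|) = t_x$, because $t_x \le |C_x|$.
\item But voters in $S_x$ strictly prefer $T$, so $t_x > |W\cap C_x|$. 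This is a contradiction.
\end{itemize}

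The main obstacle is this core direction, specifically the averaging argument that locates a party $x$ whose share $t_x$ of the deviation $T$ is justified by the size of $S_x$. The remaining steps are direct unpacking of the definitions.
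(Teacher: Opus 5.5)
Your proof is correct and takes essentially the same route as the paper. Both use the general inclusions $\ejrp \subseteq \ejr \subseteq \pjr$ and $\ejrp \subseteq \pjrp \subseteq \pjr$, together with $\pjr \subseteq \LQ$ and $\LQ \subseteq \ejrp$ on party-lists, and for the core an averaging (mediant-inequality) argument that finds a party whose share of $T$ is justified by its share of the deviating voters. The only differences are that you prove $\pjr \subseteq \LQ$ and $\Core \subseteq \LQ$ directly where the paper cites the literature, and you treat the $|C_x|$ cap and the parties with $S_x = \emptyset$ more explicitly than the paper does.
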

In light of \Cref{prop:lq}, we further note that the lower quota (for party-lists) axiom can also be strengthened: instead of requiring that on party-list instances all committees satisfying the proportionality notion adhere to the lower quota, one could also require that the proportionality notion chooses \emph{precisely} the committees satisfying lower quota. By \Cref{prop:lq}, all notions introduced above satisfy this. %
However, other proportionality notions studied in the literature, for instance, priceability \citep{PeSk20a}, are not lower quota extensions (see \Cref{sec:compliance}).
\begin{definition}[Lower Quota Extension]
    A proportionality notion $f$ is a \emph{lower quota extension} if for every party-list instance $\mathcal{I}$ it holds that $f(\mathcal{I}) = \LQ(\mathcal{I})$.
\end{definition}
\begin{wrapstuff}[r,type=figure,width=5cm]
    \centering
    \begin{tikzpicture}
    [yscale=0.6,xscale=0.8,
    voter/.style={anchor=south}]
    
        \foreach \i in {1,...,4}
    		\node[voter] at (\i-0.5, -1) {$\i$};
        
        \draw[fill=teal!\clrstr] (0, 0) rectangle (2, 1);
        \draw[fill=teal!\clrstr] (0, 1) rectangle (2, 2);
        \draw[fill=teal!\clrstr] (0, 2) rectangle (1, 3);
        \draw[fill=magenta!\strstr, ultra thick] (2, 0) rectangle (4, 1);
        \draw[fill=magenta!\strstr, ultra thick] (2, 1) rectangle (4, 2);
        \draw[fill=magenta!\strstr, ultra thick] (2, 2) rectangle (4, 3);
        \draw[fill=magenta!\strstr, ultra thick] (2, 3) rectangle (4, 4);
        
        \node at ( 1, 0.5) {$c_{1}$};
        \node at ( 1, 1.5) {$c_{2}$};
        \node at ( 0.5, 2.5) {$c_{3}$};
        \node at ( 3, 0.5) {$c_{4}$};
        \node at ( 3, 1.5) {$c_{5}$};
        \node at ( 3, 2.5) {$c_{6}$};
        \node at ( 3, 3.5) {$c_{7}$};
    \end{tikzpicture}
    \caption{Example for Lower Quota and Independence of Losers.}
    \label{fig:one_add}
\end{wrapstuff}
In general, however, lower quota (and lower quota extension) only constrains behavior on party-list instances. Hence, a proportionality notion may satisfy them and still output arbitrary (even unreasonable) committees on non-party-list instances. To illustrate this point, consider the instance depicted in \Cref{fig:one_add}. In this and subsequent examples, the voters are depicted as indices on the x-axis, and each candidate is represented by a bar placed above the voters approving it. Concretely, the instance has $n=4$ with ballots $A_1=\{c_1,c_2,c_3\}$, $A_2=\{c_1,c_2\}$, and $A_3=A_4=\{c_4,c_5,c_6,c_7\}$.
 The candidates outlined in bold indicate a chosen committee. We note that the instance is not a party-list instance because of candidate $c_3$. Hence, a notion could select the bolded (and severely non-proportional) committee $\{c_4, \dots, c_7\}$ for $k = 4$ without violating lower quota for party-lists.
\wrapstuffclear
We therefore introduce further desirable properties that allow us to connect our choices on party-list instances to the choices on the full domain. For convenience, we provide a list of introduced axioms in \Cref{tab:roadmap-axioms}, as well as which proportionality notions among PJR, EJR, PJR+, and EJR+ satisfy them.
First, we define the axiom of \emph{independence of losers}: if a committee satisfies a proportionality notion, it should still satisfy this notion, even if unselected candidates are removed from the instance, as intuitively removing these candidates should not make the chosen committee less proportional.
\wrapstuffclear
\begin{definition}
    A proportionality notion $f$ satisfies \emph{independence of losers} if for every instance $\mathcal{I} = (N, A, C, k)$, every committee $W \in f(\mathcal{I})$ and $c \in C \setminus W$ it holds that 
    \(
    W \in f(\mathcal{I}_{\mid C\setminus \{c\}}).
    \)
\end{definition}
We note that by iterating, the axiom also applies to removing any subset of unselected candidates.
Independence of losers is reminiscent of the contraction axiom $\alpha$ \citep{Cher54a}. Contraction states that if an outcome is chosen, reducing the feasible set should not change that as long as the outcome remains feasible.\footnote{We study a strengthening of independence of losers in \Cref{sec:strongIoL}. The strengthened version also entails a condition reminiscent of the expansion axiom $\gamma$ \citep{Sen70a}, which is the natural counterpart of $\alpha$. Interestingly, this strengthening characterizes proportionality notions that justify violations via single candidates.} A variant of independence of losers has also been studied by \citet{DoLe23a, DoLe24a} in their characterizations of (sequential) Thiele rules. Returning to the instance depicted in~\Cref{fig:one_add}, the committee $\{c_4, \dots, c_7\}$ would not be proportional for any proportionality notion satisfying lower quota for party-lists and independence of losers, as after removing $c_3$ the remaining instance is a party-list instance, in which both $c_1$ and $c_2$ need to be selected for the party $\{c_1, c_2\}$, as $\min(2, \lfloor 4\frac{2}{4}\rfloor) = 2$ candidates need to be selected from it.

Second, we consider how fully satisfied voters affect proportionality. In this paper, we focus on notions aimed at preventing \emph{underrepresentation}, i.e., every group of voters should obtain at least as many representatives as it deserves. Through this lens, a voter for whom every approved candidate is contained in the committee is as well-represented as possible. Therefore, such voters should never be eligible for claiming a proportionality violation. Taking this thought further, if  a committee $W$ is acceptable according to a proportionality notion (and thus does not cause underrepresentation), and some voters restrict their ballots so that they approve only candidates in $W$, then these voters become irrelevant for underrepresentation: they cannot be part of a group witnessing that the committee fails to represent them. Hence, after this change $W$ should still be acceptable according to the proportionality notion.

\begin{definition}
    A proportionality notion $f$ satisfies %
    \emph{robustness to fully satisfied voters} if for every instance $\mathcal{I} = (N, A, C, k)$, every committee $W \in f(\mathcal{I})$, and each approval profile $A'$ such that for every $i\in N$ we have either $A'_i = A_i$ or $A'_i \subseteq A_i \cap W$, it holds that
    \(
    W \in f(N, A', C, k).
    \)
\end{definition}
This axiom is satisfied by all the notions we have studied so far: if a voter changes their approval ballot to be a subset of the committee committee members they previously approved, then this voter can no longer be part of any ``violating group'' for the given notion.

Third, we turn our attention to a classical axiom that will be crucial in distinguishing PJR and EJR from PJR+ and EJR+: monotonicity. To motivate this axiom, consider a committee $W$ which does not cause underrepresentation. If some voters add approvals for candidates in $W$ (and only for candidates in $W$) to their ballots, then these voters are better represented than before. Intuitively, if a voter is already well-represented by a committee, then making this voter even ``more represented'' should not lead to this committee becoming non-proportional. Therefore, $W$ should still not cause underrepresentation after this change of the profile.

\begin{definition}
    A proportionality notion $f$ satisfies \emph{monotonicity} if for every instance $\mathcal{I} = (N, A, C, k)$, every committee $W \in f(\mathcal{I})$ and approval profile $A' \supseteq A$ (i.e., $A'_i \supseteq A_i$ for every $i \in N$) with $A'_i \setminus A_i \subseteq W$ for all $i \in N$ it holds that
    \(
    W \in f(N, A', C, k).
    \)
\end{definition}
Monotonicity is indeed satisfied by PJR+ and EJR+.\footnote{Our definition of monotonicity can be seen as a stronger version of the ``candidate monotonicity without additional voters'' axiom of \citet{SaFi17a}. This axiom would require that for a given proportionality notion $f$ and committee $W \in f(\mathcal{I})$ if an approval is added to a candidate $c\in W$ there must exist \emph{some} committee $W'$ which is proportional according to $f$ in the new instance and contains $c$. This axiom would indeed be satisfied by all our proportionality notions, as for all of them, and any candidate approved by at least one voter, there always exists a proportional committee containing this candidate \citep{BFNK19a, DFPS25a}.} For instance, for EJR+, after adding approvals to candidates in $W$, for any $c \notin W$, the set $N_c$ consisting of the supporters of $c$ remains unchanged, while $\lvert A_i \cap W\rvert$ can only increase for each $i\in N_c$. Hence, no new EJR+ violations can occur. The same reasoning also applies to PJR+. However, both PJR and EJR can fail monotonicity, as we demonstrate by the following example.
\begin{example}
\label{exp:pjr_mon}
    \begin{wrapstuff}[r,type=figure,width=5cm]
    \centering
    \begin{tikzpicture}
    [yscale=0.6,xscale=0.8,
    voter/.style={anchor=south}]
    
        \foreach \i in {1,...,4}
    		\node[voter] at (\i-0.5, -1) {$\i$};
        
        \draw[fill=teal!\strstr, ultra thick] (0, 0) rectangle (2, 1);
        \draw[fill=teal!\clrstr] (0, 1) rectangle (2, 2);
        \draw[fill=teal!\strstr, ultra thick] (0, 2) rectangle (1, 3);
        \draw[fill=magenta!\strstr, ultra thick] (2, 0) rectangle (4, 1);
        \draw[fill=magenta!\strstr, ultra thick] (2, 1) rectangle (4, 2);
        \draw[fill=magenta!\strstr, ultra thick] (2, 2) rectangle (4, 3);
        \draw[fill=magenta!\strstr, ultra thick] (2, 3) rectangle (4, 4);
        
        \node at ( 1, 0.5) {$c_{1}$};
        \node at ( 1, 1.5) {$c_{2}$};
        \node at ( 0.5, 2.5) {$c_{3}$};
        \node at ( 3, 0.5) {$c_{4}$};
        \node at ( 3, 1.5) {$c_{5}$};
        \node at ( 3, 2.5) {$c_{6}$};
        \node at ( 3, 3.5) {$c_{7}$};
    \end{tikzpicture}
    \caption{Example for PJR, EJR, and monotonicity.}
    \label{fig:pjr_mon}
\end{wrapstuff}
To see that PJR and EJR do not satisfy monotonicity, consider the instance depicted in \Cref{fig:pjr_mon} with $k = 6$. The depicted committee (shown in bold) $\{c_1, c_3, c_4, c_5, c_6, c_7\}$ satisfies PJR and EJR for $k = 6$ as the group of voters $\{1,2\}$ is only $2$-cohesive, and voter $1$ approves two candidates in the outcome, while the group $\{3,4\}$ is fully satisfied. However, if voter $2$ additionally approves candidate $c_3$, the committee no longer satisfies PJR or EJR, as now $\{1,2\}$ are  $3$-large and $3$-cohesive, but only receive two candidates in the outcome.

We note that already in the original instance the bolded outcome $\{c_1, c_3, c_4, c_5, c_6, c_7\}$ neither satisfies PJR+ nor EJR+ as witnessed by candidate $c_2$ together with voters $1$ and $2$ and $\ell = 3$.
\end{example}

Intuitively, there are two opposing approaches to representation: one which views the group as a whole, and one which considers each voter within the group individually.  We will see soon that these approaches distinguish PJR+ from EJR+, as PJR+ focuses on the representation allotted to groups, while EJR+ focuses on single voters. For now, we focus on the latter approach. Since we only have dichotomous preferences at hand, without further information, the utility of a single voter depends only on the \emph{number} of winning candidates they approve, as opposed to their identities. We formalize this intuition via the following axiom.
\begin{definition}\label{def:independence_of_approval_swaps}
    A proportionality notion $f$ satisfies \emph{independence of approval swaps} if for every instance $\mathcal{I} = (N, A, C, k)$, every committee $W \in f(\mathcal{I})$ and approval profile $A'$ such that $A'_i\setminus W = A_i\setminus W$ and $\lvert A'_i \cap W\rvert = \lvert A_i \cap W\rvert$ for all $i \in N$, it holds that
    \(
    W \in f(N, A', C, k).
    \)
\end{definition}
This axiom requires that if a committee $W$ is proportional according to $f$, and some voters change their approval ballot by approving a different set of candidates in $W$ of the same cardinality as before, then $W$ remains proportional according to $f$. The fact that EJR+ satisfies the axiom is part of the reason why EJR+ can be verified in polynomial time: the identity of the candidates inside the committee approved by some voter does not matter, the only thing that matters about the committee is how many candidates this voter approves. 

\begin{table}[t]
  \small
  \centering
  \setlength{\tabcolsep}{4.5pt}
  \begin{tabular}{lllll}
    \toprule
      & \textbf{PJR} & \textbf{EJR} & \textbf{PJR+} & \textbf{EJR+} \\
    \midrule
    Lower quota for party-lists
      & \cmark & \cmark & \cmark$^\star$ & \cmark$^\star$ \\
    Independence of losers
      & \cmark & \cmark & \cmark$^\star$ & \cmark$^\star$ \\
    \Ifsv
      & \cmark & \cmark & \cmark$^\star$ & \cmark$^\star$ \\
    Monotonicity
      & \xmark & \xmark & \cmark$^\star$ & \cmark$^\star$ \\
    Independence of approval swaps
      & \xmark & \xmark & \xmark & \cmark$^\star$ \\
    \bottomrule
  \end{tabular}
  \caption{Overview of the axioms used in our characterizations.
  Monotonicity separates the classical notions (\textsc{PJR}, \textsc{EJR}) from their robust variants (\textsc{PJR+}, \textsc{EJR+});
  adding independence of approval swaps separates \textsc{EJR+} from \textsc{PJR+}. A $^\star$ indicates that this axiom is used in the characterization of the respective notion. }
  \label{tab:roadmap-axioms}
\end{table}

\subsection{Characterizing Refinements of PJR+}
Using the first four axioms, we can now characterize PJR+ as the \emph{coarsest} proportionality notion satisfying them. That is, we show that any proportionality notion satisfying independence of losers, monotonicity, \ifsv, and lower quota for party-lists selects only committees satisfying PJR+.
We further show that PJR+ indeed satisfies all four axioms. As a consequence, as long as one takes these four axioms to be granted, %
PJR+ is the least restrictive (or coarsest) acceptable proportionality notion consistent with these axioms.
\wrapstuffclear  
\begin{theorem}
   For any proportionality notion $f$ satisfying independence of losers, monotonicity, \ifsv, and lower quota for party-lists, it holds that $f \subseteq \pjrp$. Further, $\pjrp$ satisfies all four axioms.
   \label{thm:PJRP_Refinement}
\end{theorem}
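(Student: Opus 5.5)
We argue by contradiction. Suppose $f$ satisfies the four axioms and $W \in f(\mathcal{I})$ for some $\mathcal{I}=(N,A,C,k)$, but $W \notin \pjrp(\mathcal{I})$. Then there are $\ell \in [k]$, a candidate $c \notin W$ and an $\ell$-large group $N' \subseteq N_c$ such that $T \coloneqq \bigcup_{i\in N'}(A_i\cap W)$ satisfies $|T| \le \ell-1$. The plan is to transform $\mathcal{I}$, using only the three invariance axioms, into a party-list instance in which $W$ is still selected by $f$ but violates lower quota.

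\textbf{Step 1 (independence of losers).} Remove every unselected candidate other than $c$, one at a time. This gives $W \in f(N, A_{\mid W\cup\{c\}}, W\cup\{c\}, k)$; note $|W\cup\{c\}| = k+1 \ge k$. Now every voter $i\in N'$ has ballot $\{c\}\cup(A_i\cap W) \subseteq \{c\}\cup T$.

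\textbf{Step 2 (robustness to fully satisfied voters).} Every voter $i \notin N'$ replaces her ballot by $A_i \cap (W\setminus T)$, which is a subset of her current ballot intersected with $W$. This ballot may be empty, which is allowed since ballots are arbitrary subsets of $C$. Voters in $N'$ keep their ballots, so $W$ remains in $f$. In particular, voters outside $N'$ who supported $c$ lose that approval.

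\textbf{Step 3 (monotonicity).} Each voter in $N'$ adds the candidates of $T$ she does not yet approve, and each voter outside $N'$ adds the missing candidates of $W\setminus T$. All added candidates lie in $W$, so $W$ stays in $f$. The resulting instance is a party-list instance with parties $T\cup\{c\}$, supported exactly by $N'$, and $W\setminus T$, supported by $N\setminus N'$. The party $W\setminus T$ is nonempty because $|W| = k \ge \ell > |T|$. If $N\setminus N'$ is empty, the second party simply receives no voters, and we should check that the party-list definition tolerates this. If it does not, we instead let one voter of $N'$ be handled so that some voter approves $W\setminus T$; I expect this degenerate case to be the only fiddly point, and it is harmless since then $|N'| = n$.

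\textbf{Step 4 (contradiction with lower quota).} The number of voters is unchanged, so $k|N'|/n \ge \ell$ and hence $\lfloor k|N'|/n\rfloor \ge \ell \ge |T|+1 = |T\cup\{c\}|$. Lower quota therefore demands that $W$ contain all of $T\cup\{c\}$. But $c \notin W$, which contradicts $f(\mathcal{I}'') \subseteq \LQ(\mathcal{I}'')$ for the final instance $\mathcal{I}''$. Hence $f \subseteq \pjrp$.

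For the converse, we check that $\pjrp$ satisfies each axiom.
\begin{itemize}
\item Lower quota follows from \Cref{prop:lq}.
\item Independence of losers: deleting some $c' \notin W$ leaves $N_{c''}$ for every remaining loser $c''$ unchanged, and it leaves every $A_i\cap W$ unchanged. So any violation in the reduced instance is already a violation in $\mathcal{I}$.
\item Robustness to fully satisfied voters: a modified voter approves no candidate outside $W$, so she belongs to no $N_{c''}$ with $c'' \notin W$. Any witnessing group in the new profile therefore consists of unmodified voters, and it would have witnessed a violation in $\mathcal{I}$ as well.
\item Monotonicity: supporter sets of losers are unchanged, and the sets $A_i\cap W$ only grow, as already argued in the text.
\end{itemize}
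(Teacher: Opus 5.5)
Your proof is correct and is essentially the paper's argument: restrict to $W\cup\{c\}$ via independence of losers, then use robustness to fully satisfied voters and monotonicity to reach the two-party instance with parties $T\cup\{c\}$ and $W\setminus T$ (you apply these two axioms in a slightly different order than the paper), and contradict lower quota; your checks that PJR+ satisfies the four axioms also match. The degenerate case $N'=N$ needs no fallback, because the paper's definition of a party-list instance only requires every voter to approve exactly one non-empty party and does not require every party to have a supporter, and the paper's own proof tacitly relies on the same reading.
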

\begin{proof}
    \textbf{Part 1:}
    Let $f$ be a proportionality notion satisfying the four axioms, and consider an instance $\mathcal{I} = (N, A, C, k)$ and a committee $W \in f(\mathcal{I})$.
    Our goal is to show that $W$ also satisfies PJR+ in the instance $\mathcal{I}$.
    Fix $\ell \in [k]$ and let $c \in C\setminus W$ be any candidate outside of $W$. Consider an $\ell$-large group $N'\subseteq N_c$  of voters. It remains to show that $\lvert \bigcup_{i \in N'} (A_i \cap W) \rvert \ge \ell$. To this end, we will now modify the approval profile $A$ using independence of losers, monotonicity, and robustness to fully satisfied voters in such a way that we can apply the lower quota for party-lists axiom.

    We abbreviate $S \coloneqq \bigcup_{i \in N'} (A_i \cap W)$. Our goal is to show that $S$ is of cardinality at least $\ell$.
    First, if $W = S$ we are already done, as $\ell$ can be at most $k$. Hence, we can assume that $S\subset W$. To bring more structure into the approval profile, we restrict our attention to the candidate set $C' \coloneqq W \cup \{c\}$. 
    That is, we consider the instance $\mathcal{I}_1= (N, A_{\mid C'},C', k )$. By starting from $\mathcal I$ and iteratively removing candidates $d\in C\setminus C'$ from consideration, we eventually reach $\mathcal I_1$. Since $f$ satisfies independence of losers, we can apply this iteratively and obtain $W \in f(\mathcal{I}_1)$. Recall that, by definition of $\mathcal{I}_1$, each $i \in N$ has the approval ballot $A_i\cap C'$.

    Now, we modify $\mathcal{I}_1$ further to create an instance $\mathcal{I}_2= (N, A',C', k )$ as follows. We make $N'$ agree on the same set of winners by adding $S$ to the ballots of all voters in $N'$. More precisely, we set $A'_i \coloneqq (A_i \cap C')\cup S$ for all $i\in N'$. For all $i \notin N'$, we do not change the approval ballots, i.e., $A'_i \coloneqq A_i \cap C'$. Since  $S\subseteq C'$, we have $A'_i\subseteq C'$ for all $i\in N$, and hence $\mathcal{I}_2$ is a well-defined instance.
    Note that as we obtained $\mathcal{I}_2$ from $\mathcal{I}_1$ by adding candidates from $S\subseteq W$ to the approval ballots of some voters, monotonicity of $f$ implies that $W \in f(\mathcal{I}_2)$. By definition of $\mathcal{I}_2$, all voters in $N'$ approve exactly the ``party'' $S\cup\{c\}$.

    Finally, we transform $\mathcal{I}_2$ into a party-list instance $\mathcal{I}_3= (N, A'', C', k)$ by applying robustness to fully satisfied voters and monotonicity. For all $i \in N\setminus N'$, we change the approval ballot to $A''_i \coloneqq W \setminus S$, while all $i \in N'$ keep their approval ballots $A''_i = A'_i$. This was a valid modification according to robustness to fully satisfied voters and monotonicity. Formally, for each $i \in N \setminus N'$ we first apply \ifsv to replace $A'_i$ by $(A'_i \cap W) \setminus S$, and then apply monotonicity to add all remaining candidates from $W \setminus S$. Therefore we get that $W \in f(\mathcal I_3)$. As $S \subset W$ the profile is structured as follows: every voter in $N'$ approves exactly $S \cup \{c\}$, while the voters in $N \setminus N'$ approve exactly $W \setminus S \neq \emptyset$. Hence, $\mathcal{I}_3$ is a party-list instance with parties $C_1 = S \cup \{c\}$ and voters $N_1 = N'$ as well as $C_2 = W \setminus S$ and voters $N_2 = N \setminus N'$. Lower quota for party-lists applied to $C_1$ now implies for $W \in f(\mathcal{I}_3)$, that 
    \(
    \lvert W \cap C_1\rvert = \lvert W \cap (S \cup \{c\})\rvert \ge \min\left(\left\lfloor k\frac{\lvert N'\rvert}{n}\right\rfloor, \lvert S \cup \{c\}\rvert\right) \ge \min(\ell, \lvert S \rvert + 1).
    \)
    Since $\lvert S\rvert < \lvert S\rvert +1$ and the above inequality is true, it holds that $\ell < \lvert S\rvert +1$ and thus $\lvert S \rvert \ge \ell$, as desired. This concludes the proof that $W$ satisfies PJR+ and hence $f\subseteq \pjrp$.

    \textbf{Part 2:}
    To prove that $\pjrp$ satisfies all axioms consider an instance $\mathcal{I} = (N, A, C,k)$ and let $W\in \pjrp(\mathcal I)$. By \Cref{prop:lq} PJR+ satisfies lower quota for party-lists.

    For all other three axioms, the proof that PJR+ satisfies them follows the same structure: we start with an instance $\mathcal{I} = (N, A, C, k)$ and $W \in \pjrp(\mathcal I)$. Then, we modify $\mathcal{I}$ to an instance $\mathcal{I}' = (N, A', C', k)$ and need to show that $W \in \pjrp(\mathcal{I}')$. To show this, we assume that we are given a $c \in C' \setminus W$, $\ell \in [k]$ and $\ell$-large group $N' \subseteq N_c$ in the instance $\mathcal{I}'$. We now need to prove that $\left\lvert\bigcup_{i \in N'} (A'_i \cap W) \right\rvert \ge \ell$. 
    
    For independence of losers, we notice that since $W$ satisfies PJR+ and $c$ is present in $\mathcal{I}$ it must hold that $\left\lvert\bigcup_{i \in N'} (A'_i \cap W) \right\rvert = \left\lvert\bigcup_{i \in N'} (A_i \cap W) \right\rvert \ge \ell$. Hence, $W$ also satisfies PJR+ in $\mathcal{I}'$ and therefore PJR+ satisfies independence of losers.

    For monotonicity, since the voters only added approvals of candidates from $W$, it also holds that $c\in \bigcap_{i\in N'} (A_i \setminus W)$. Thus, since $W\in \pjrp(\mathcal I)$, we get that $\lvert \bigcup_{i\in N'} (A_i \cap W)\rvert \ge \ell$, and since $ A'_i \cap W \supseteq   A_i \cap W$ for all voters $i\in N$, it especially holds that $\lvert \bigcup_{i\in N'} (A'_i \cap W)\rvert \ge \ell$. This proves that there are no witnesses of a $\pjrp$ violation in $\mathcal {I'}$. Therefore, $W \in \pjrp(\mathcal {I'}) $, proving that PJR+ satisfies monotonicity.  
    
    For robustness to fully satisfied voters, for all voters $i\in N'$ we have $A'_i \setminus W \neq \emptyset$, and hence $A'_i\subseteq A_i\cap W$ cannot hold. Therefore, $N'$ consists entirely of unchanged voters and we must have $A'_i = A_i$.
    Since $W$ satisfies PJR+ in $\mathcal I$, this implies that $\lvert \bigcup_{i\in N'} (A'_i \cap W)\rvert = \lvert \bigcup_{i \in N'} (A_i \cap W)\rvert \ge \ell $. This proves that PJR+ satisfies robustness to fully satisfied voters.\qedhere

\end{proof}
We illustrate the proof idea behind \Cref{thm:PJRP_Refinement} with an example. Rather than following the proof forward, we present the contrapositive intuition. For a proportionality notion $f$ which satisfies the axioms, instance $\mathcal{I}$, and committee $W \in f(\mathcal{I})$ not satisfying PJR+ we use independence of losers, monotonicity, and \ifsv\ to transform $\mathcal{I}$ into a party-list instance $\mathcal{I}_3$ on which $W$ violates lower quota. As $f$ satisfies all three axioms, it must also hold that $W \in f(\mathcal{I}_3)$, thereby showing that $f$ does not satisfy lower quota for party-lists.
\begin{example}[Proof Sketch for \Cref{thm:PJRP_Refinement}]
Consider the following instance $\mathcal I$ with $n=4$ and $k=4$, in which the committee $W=\{c_1,c_2,c_3,c_4\}$ (highlighted in magenta) fails PJR+, due to the witness $N'=\{1,2\}$, $\ell = 2$, and candidate $d\notin W$. Note that candidate $c_1$ is approved by voters $1$ and $4$.

\begin{center}
\resizebox{\linewidth}{!}{
\begin{tabular}{@{}c@{\hspace{0.8em}}c@{\hspace{0.8em}}c@{\hspace{0.8em}}c@{}}

\begin{tikzpicture}
[yscale=0.6,xscale=0.8,
 voter/.style={anchor=south}]
    \foreach \i in {1,...,4}
        \node[voter] at (\i-0.5, -1) {\scriptsize $\i$};

    \draw[fill=teal!\strstr, thick]        (0, 0) rectangle (3, 1);
    \node at (1.5, 0.5) {\scriptsize $d$};

    \draw[fill=magenta!\strstr, ultra thick]  (3, 0) rectangle (4, 1);
    \node at (3.5, 0.5) {\scriptsize $c_{4}$};

    \draw[fill=magenta!\strstr, ultra thick]  (0, 1) rectangle (1, 2);
    \node at (0.5, 1.5) {\scriptsize $c_{1}$};

    \draw[fill=magenta!\strstr, ultra thick]  (2, 1) rectangle (4, 2);
    \node at (3, 1.5) {\scriptsize $c_{3}$};

    \draw[fill=teal!\strstr, thick]        (0, 2) rectangle (1, 3);
    \node at (0.5, 2.5) {\scriptsize $a$};

    \draw[fill=magenta!\strstr, ultra thick]  (2, 2) rectangle (3, 3);
    \node at (2.5, 2.5) {\scriptsize $c_{2}$};

    \draw[fill=teal!\strstr, thick]        (3, 2) rectangle (4, 3);
    \node at (3.5, 2.5) {\scriptsize $b$};

    \draw[fill=magenta!\strstr, ultra thick]  (3, 3) rectangle (4, 4);
    \node at (3.5, 3.5) {\scriptsize $c_{1}$};
\end{tikzpicture}

&

\begin{tikzpicture}
[yscale=0.6,xscale=0.8,
 voter/.style={anchor=south}]
    \foreach \i in {1,...,4}
        \node[voter] at (\i-0.5, -1) {\scriptsize $\i$};

    \draw[fill=teal!\strstr, thick]        (0, 0) rectangle (3, 1);
    \node at (1.5, 0.5) {\scriptsize $d$};

    \draw[fill=magenta!\strstr, ultra thick]  (3, 0) rectangle (4, 1);
    \node at (3.5, 0.5) {\scriptsize $c_{4}$};

    \draw[fill=magenta!\strstr, ultra thick]  (0, 1) rectangle (1, 2);
    \node at (0.5, 1.5) {\scriptsize $c_{1}$};

    \draw[fill=magenta!\strstr, ultra thick]  (2, 1) rectangle (4, 2);
    \node at (3, 1.5) {\scriptsize $c_{3}$};

    \draw[fill=magenta!\strstr, ultra thick]  (2, 2) rectangle (3, 3);
    \node at (2.5, 2.5) {\scriptsize $c_{2}$};

    \draw[fill=magenta!\strstr, ultra thick]  (3, 2) rectangle (4, 3);
    \node at (3.5, 2.5) {\scriptsize $c_{1}$};
\end{tikzpicture}

&

\begin{tikzpicture}
[yscale=0.6,xscale=0.8,
 voter/.style={anchor=south}]
    \foreach \i in {1,...,4}
        \node[voter] at (\i-0.5, -1) {\scriptsize $\i$};

    \draw[fill=teal!\strstr, thick]        (0, 0) rectangle (3, 1);
    \node at (1.5, 0.5) {\scriptsize $d$};

    \draw[fill=magenta!\strstr, ultra thick]  (3, 0) rectangle (4, 1);
    \node at (3.5, 0.5) {\scriptsize $c_{4}$};

    \draw[fill=magenta!\strstr, ultra thick]  (0, 1) rectangle (2, 2);
    \node at (1, 1.5) {\scriptsize $c_{1}$};

    \draw[fill=magenta!\strstr, ultra thick]  (2, 1) rectangle (4, 2);
    \node at (3, 1.5) {\scriptsize $c_{3}$};

    \draw[fill=magenta!\strstr, ultra thick]  (2, 2) rectangle (3, 3);
    \node at (2.5, 2.5) {\scriptsize $c_{2}$};

    \draw[fill=magenta!\strstr, ultra thick]  (3, 2) rectangle (4, 3);
    \node at (3.5, 2.5) {\scriptsize $c_{1}$};
\end{tikzpicture}

&

\begin{tikzpicture}
[yscale=0.6,xscale=0.8,
 voter/.style={anchor=south}]
    \foreach \i in {1,...,4}
        \node[voter] at (\i-0.5, -1) {\scriptsize $\i$};

    \draw[fill=teal!\strstr, thick]        (0, 0) rectangle (2, 1);
    \node at (1, 0.5) {\scriptsize $d$};

    \draw[fill=magenta!\strstr, ultra thick]  (0, 1) rectangle (2, 2);
    \node at (1, 1.5) {\scriptsize $c_{1}$};

    \draw[fill=magenta!\strstr, ultra thick]  (2, 0) rectangle (4, 1);
    \node at (3, 0.5) {\scriptsize $c_{4}$};

    \draw[fill=magenta!\strstr, ultra thick]  (2, 1) rectangle (4, 2);
    \node at (3, 1.5) {\scriptsize $c_{3}$};

    \draw[fill=magenta!\strstr, ultra thick]  (2, 2) rectangle (4, 3);
    \node at (3, 2.5) {\scriptsize $c_{2}$};
\end{tikzpicture}

\\[-0.25em]
\multicolumn{1}{c}{\scriptsize\textbf{$\mathcal I$}} &
\multicolumn{1}{c}{\scriptsize\textbf{$\mathcal I_1$}} &
\multicolumn{1}{c}{\scriptsize\textbf{$\mathcal I_2$}} &
\multicolumn{1}{c}{\scriptsize\textbf{$\mathcal I_3$}} \\

\end{tabular}
}
\end{center}

Since $\bigcup_{i\in N'} (A_i\cap W) = \{c_1\}$, our goal is to create a party-list profile in which all voters in $N'$ approve of $d$ and $c_1$, while all voters in $N\setminus N'$ approve of $W\setminus \{c_1\}$.

\begin{itemize}
    \item First, we remove all candidates not in $W\cup\{d\}$ (i.e., $a$ and $b$) from $\mathcal I$ using independence of losers. We arrive at the simpler $\mathcal I_1$.
    \item Next, we use the monotonicity of $f$ and add approvals to all voters in $N'$, until they all approve every candidate in $\bigcup_{i\in N'} (A_i\cap W)$. The resulting instance is $\mathcal I_2$. In this case, we only need to add the approval of $c_1$ for voter $2$.
    \item Finally, we modify the approvals of $N\setminus N' = \{3,4\}$. First, we apply {\ifsv} to remove any approvals of $d$ or $c_1$. In particular, we apply \ifsv to remove the approval of $d$ for voter $3$ and the approval of $c_1$ for voter $4$. Then, we use monotonicity to let all voters in $N\setminus N'$ approve of $W\setminus \{c_1\}$. We hence arrive at instance $\mathcal I_3$.
    \item The instance $\mathcal I_3$ is a party-list instance with two parties of equal size (two voters each), so lower quota requires each party to receive at least $\lfloor k\cdot 2/4\rfloor=2$ seats.
But $W$ selects only one candidate from $\{d,c_1\}$, hence $W\notin \LQ(\mathcal I_3)$.
\end{itemize}
\end{example}

Interestingly, one can achieve a similar result without independence of losers, robustness to fully satisfied voters, and lower quota for party-lists:
if we take PJR as granted but do not accept violations of monotonicity, we also naturally arrive at refinements of PJR+. Hence, monotonicity is the differentiating factor between PJR and PJR+.
\begin{restatable}{proposition}{pjrptopjr}\label{prop:pjr_monotonicity}
    Let $f$ be a proportionality notion such that $f \subseteq \pjr$ and $f$ satisfies monotonicity. Then $f \subseteq \pjrp$.
\end{restatable}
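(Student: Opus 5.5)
We argue by contraposition. Suppose $W \in f(\mathcal{I})$ but $W \notin \pjrp(\mathcal{I})$. Then there are $\ell \in [k]$, a candidate $c \notin W$, and an $\ell$-large group $N' \subseteq N_c$ with $S \coloneqq \bigcup_{i \in N'} (A_i \cap W)$ satisfying $\lvert S \rvert < \ell$. We use monotonicity to build an instance in which $W$ still lies in $f$ but violates PJR. Since $f \subseteq \pjr$, this is a contradiction.

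\textbf{The modified profile.} Let $A'$ be given by $A'_i \coloneqq A_i \cup (W \setminus S)$ for every $i \in N'$, and $A'_i \coloneqq A_i$ for $i \notin N'$. Only candidates of $W$ are added, so monotonicity gives $W \in f(N, A', C, k)$.

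\textbf{Counting the representation of $N'$.} Every voter of $N'$ now approves $c$ and all of $W \setminus S$. The candidates of $S$ are each approved by some member of $N'$, but not necessarily by all of them. Hence
\[
\Bigl\lvert \bigcap_{i \in N'} A'_i \Bigr\rvert \ \ge\ \lvert W \setminus S \rvert + 1 \ =\ k - \lvert S \rvert + 1 .
\]
Meanwhile $\bigcup_{i \in N'} (A'_i \cap W) = W$ has size $k$, so there is no PJR violation at level $\ell$ yet. This is the main obstacle: adding all of $W \setminus S$ makes the group cohesive but also fully represents it.

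\textbf{Adding only part of the missing winners.} To fix this, I would add only $\ell - 1 - \lvert S \rvert$ candidates of $W \setminus S$ (if $W \setminus S$ has at least that many, which holds since $\ell \le k$). Call this set $T$, so $\lvert T \rvert = \ell - 1 - \lvert S \rvert$. Now every voter of $N'$ unanimously approves $T \cup \{c\}$, but that has only $\ell - \lvert S \rvert$ elements. The common approvals within $S$ are still not guaranteed.

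The remedy is to also make $S$ unanimous among $N'$, which is allowed since $S \subseteq W$. Set $A'_i \coloneqq A_i \cup S \cup T$ for $i \in N'$. Then $\bigcap_{i \in N'} A'_i \supseteq S \cup T \cup \{c\}$, which has exactly $\lvert S \rvert + (\ell - 1 - \lvert S \rvert) + 1 = \ell$ elements. So $N'$ is $\ell$-large and $\ell$-cohesive. Its representation satisfies
\[
\Bigl\lvert \bigcup_{i \in N'} (A'_i \cap W) \Bigr\rvert = \lvert S \cup T \rvert = \ell - 1 < \ell ,
\]
so $W$ violates PJR in $(N, A', C, k)$. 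Monotonicity still gives $W \in f(N, A', C, k)$, contradicting $f \subseteq \pjr$.

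It remains to check that a suitable $T$ exists. We need $\lvert W \setminus S \rvert = k - \lvert S \rvert \ge \ell - 1 - \lvert S \rvert$, which holds because $\ell \le k$. The case $\lvert S \rvert = \ell - 1$ simply gives $T = \emptyset$.
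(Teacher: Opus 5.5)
Your proof is correct and follows the paper's argument: by contraposition, use monotonicity to make every voter in $N'$ approve all of $S=\bigcup_{i\in N'}(A_i\cap W)$, so that $N'$ becomes cohesive on $S\cup\{c\}$ while its set of represented winners remains $S$. The only difference is your padding set $T$, which keeps the PJR violation at level $\ell$. The paper skips $T$ and reads off the violation at level $\lvert S\rvert+1\le\ell$, since an $\ell$-large group is also $(\lvert S\rvert+1)$-large.
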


Note that \Cref{prop:pjr_monotonicity} does not follow from the fact that PJR satisfies all axioms but monotonicity in \Cref{thm:PJRP_Refinement}, because there are refinements of PJR which satisfy monotonicity but not all other axioms. For instance, priceability is a refinement of PJR which fails to satisfy \ifsv (see \Cref{sec:compliance}), and hence \Cref{thm:PJRP_Refinement} is not applicable. Nonetheless, we can apply \Cref{prop:pjr_monotonicity}: since priceability refines PJR and satisfies monotonicity, it also refines PJR+.

\subsection{Characterizing Refinements of EJR+}
We now build on the previous characterization and extend it to EJR+. Recall that PJR+ and EJR+ take two different perspectives on representation: while PJR+ considers the satisfaction of each group of voters as a whole, EJR+ considers the satisfaction of each voter within a group individually. Since for approval utilities the satisfaction of an individual voter is only measured in terms of the number of approved winning candidates, adding independence of approval swaps (\Cref{def:independence_of_approval_swaps}) to the axioms of \Cref{thm:PJRP_Refinement} excludes PJR+ and characterizes EJR+ as the coarsest notion satisfying this collection of axioms.
\begin{restatable}{theorem}{ejrprefc}
     Let $f$ be a proportionality notion satisfying independence of losers, monotonicity, \ifsv, independence of approval swaps, and lower quota for party-lists. Then $f \subseteq \ejrp$. Further, $\ejrp$ satisfies all five axioms.
     \label{ejrp-ref-char}
\end{restatable}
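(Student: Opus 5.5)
The plan mirrors the proof of \Cref{thm:PJRP_Refinement}. The one new ingredient is that independence of approval swaps lets us rearrange the committee members approved by the witness group so that they coincide across the group, rather than merely taking their union.

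Fix $W \in f(\mathcal{I})$, $\ell\in[k]$, $c\notin W$, and an $\ell$-large group $N'\subseteq N_c$. Suppose for contradiction that $\lvert A_i\cap W\rvert \le \ell-1$ for every $i\in N'$, and let $s \coloneqq \max_{i\in N'} \lvert A_i \cap W\rvert \le \ell-1$. Note that $s<\ell\le k=\lvert W\rvert$. We transform the instance as follows.
\begin{enumerate}
\item Apply independence of losers to restrict to $C'=W\cup\{c\}$; then $W\in f(\mathcal I_1)$.
\item Fix a set $S\subseteq W$ with $\lvert S\rvert=s$. For each $i\in N'$, use independence of approval swaps to replace $A_i\cap W$ by an arbitrary $\lvert A_i\cap W\rvert$-subset of $S$. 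This keeps $A_i\setminus W$ (which contains $c$) and the cardinality of $A_i\cap W$ unchanged, so it is legal.
\item Apply monotonicity to add the rest of $S$ to each $i\in N'$. Now every voter in $N'$ approves exactly $S\cup\{c\}$.
\end{enumerate}

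For voters outside $N'$, proceed exactly as in \Cref{thm:PJRP_Refinement}. Use \ifsv to shrink each ballot to $(A_i\cap W)\setminus S$, and then monotonicity to enlarge it to $W\setminus S$, which is nonempty since $s<k$. The resulting instance is a party-list instance with parties $S\cup\{c\}$, supported by $N'$, and $W\setminus S$, and we have $W\in f$ of this instance. Lower quota gives
\[
s=\lvert W\cap(S\cup\{c\})\rvert\ge\min(\ell,s+1)=s+1,
\]
a contradiction. Hence some $i\in N'$ has $\lvert A_i\cap W\rvert\ge \ell$, i.e.\ $W$ satisfies EJR+.

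One detail is the case $N=N'$, where there is no second party. In that case the instance is already a single-party party-list instance, and lower quota gives $\lvert W \cap (S\cup\{c\})\rvert\ge\min(k,s+1)=s+1$, again a contradiction. Also, if some voter in $N\setminus N'$ ends with an empty ballot, this does not arise, because $W\setminus S\neq\emptyset$. The main subtlety is checking that step 2 is a legal application of the swap axiom: the swap must alter only the $W$-part of each ballot, and each voter's new $W$-part lies inside $S$, which has size $s\ge\lvert A_i\cap W\rvert$.

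For the second part, lower quota follows from \Cref{prop:lq}. Independence of losers, monotonicity and \ifsv follow as in Part~2 of the proof of \Cref{thm:PJRP_Refinement}, with the union count replaced by individual counts.
\begin{itemize}
\item Deleting losers does not change any $A_i\cap W$.
\item Monotonicity leaves supporters of $c\notin W$ unchanged and only increases $\lvert A_i\cap W\rvert$.
\item Under \ifsv, any voter with $c\in A'_i\setminus W$ cannot be a modified voter, so $A'_i=A_i$.
\end{itemize}
For independence of approval swaps, $N_c$ is unchanged for every $c\notin W$, since $A'_i\setminus W=A_i\setminus W$, and each $\lvert A_i\cap W\rvert$ is preserved. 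Hence the EJR+ condition is literally identical before and after the swap.
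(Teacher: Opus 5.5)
Your proof is correct and follows essentially the paper's route: use independence of approval swaps to nest the group's approved winners inside a common set of size $\max_{i\in N'}\lvert A_i\cap W\rvert$, then run the PJR+ party-list argument. The only difference is that the paper applies \Cref{thm:PJRP_Refinement} as a black box to the prefix-swapped profile, which also yields \Cref{cor:pjrp_ioas_ejrp} immediately, whereas you inline that construction. (In your case $N'=N$ the instance is not single-party: $W\setminus S$ is a voterless party. The paper's definition of party-list instances allows this, so your argument is unaffected.)
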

\begin{proof}
    \textbf{Part 1:} Consider any instance $\mathcal{I} = (N, A, C, k)$ and committee $W \in f(\mathcal{I})$. 
    
    Further, take any $c \notin W$, $\ell \in [k]$, and $\ell$-large set $N'\subseteq N_c$. We define $S \coloneqq \bigcup_{i \in N'}(A_i \cap W)$ and enumerate $S = \{c'_1, \dots, c'_{\lvert S\rvert}\}$. For voter $i \in N'$ we create the approval ballot $A'_i = (A_i \setminus W) \cup \{c'_1, \dots, c'_{\lvert A_i \cap W \rvert}\}$ while for any $i \notin N'$ we set $A'_i = A_i$ and consider the instance $\mathcal{I}' \coloneqq (N, A', C, k)$. Thus, for any $i \in N$ we have $A_i \setminus W = A'_i\setminus W$ and $\lvert A_i \cap W\rvert = \lvert A'_i \cap W\rvert$. Hence, by independence of approval swaps this implies that $W \in f(\mathcal{I}')$. Further, as $f$ satisfies independence of losers, monotonicity, robustness to fully satisfied voters, and lower quota for party-lists we know that it is a PJR+ refinement. Thus, $W$ satisfies PJR+ in $\mathcal{I}'$. However, now, we observe that $N' \subseteq N_c$ still holds in $\mathcal{I}'$ as the approval ballots outside of $W$ did not change.     
    Therefore, by applying PJR+ we get that $\left\lvert \bigcup_{i \in N'} (A'_i \cap W) \right\rvert \ge \ell$. Now, by construction, we know that for all $i \in N'$ the set $A'_i \cap W$ is a prefix of the same ordering of $S$. Thus, it must hold that $\ell \le \left\lvert \bigcup_{i \in N'} (A'_i \cap W) \right\rvert = \max_{i \in N'}\left\lvert A'_i \cap W \right\rvert$. Since this voter $i$ maximizing the number of approvals, approved the same number of candidates in $\mathcal{I}$ and $\mathcal{I}'$, it also must hold that $\left\lvert A_i \cap W \right\rvert \ge \ell$. Consequently, $W$ satisfies EJR+ in $\mathcal{I}$.

    \textbf{Part 2:} As it largely mirrors the proofs for PJR+ we delegate showing that EJR+ satisfies all five axioms to \Cref{app:missing}.
 \end{proof}
In \Cref{app:robust} we show that all axioms used in this section are necessary for obtaining the characterizations: dropping any one axiom, there exists a proportionality notion that satisfies the remaining axioms but is strictly coarser than PJR+ (respectively, EJR+).

As a corollary of our proof, we immediately obtain a second perspective on EJR+: any notion refining PJR+ and satisfying independence of approval swaps must also be a refinement of EJR+.
\begin{corollary}\label{cor:pjrp_ioas_ejrp}
    Let $f$ be a proportionality notion such that $f \subseteq \pjrp$ and $f$ satisfies independence of approval swaps. Then, $f \subseteq \ejrp$.
\end{corollary}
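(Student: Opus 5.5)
The plan is to reuse Part~1 of the proof of \Cref{ejrp-ref-char} verbatim, replacing the step where PJR+ is derived from the four axioms by the hypothesis $f \subseteq \pjrp$. That step was the only place where independence of losers, monotonicity, \ifsv, and lower quota entered. The corollary is therefore a direct extraction from that argument.

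Concretely, fix an instance $\mathcal{I} = (N,A,C,k)$, a committee $W \in f(\mathcal{I})$, a candidate $c \notin W$, some $\ell \in [k]$, and an $\ell$-large group $N' \subseteq N_c$. We need a voter $i \in N'$ with $\lvert A_i \cap W\rvert \ge \ell$.

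\begin{enumerate}
\item Set $S \coloneqq \bigcup_{i\in N'}(A_i\cap W)$ and fix an enumeration $S = \{c'_1,\dots,c'_{\lvert S\rvert}\}$.
\item Define $A'_i \coloneqq (A_i\setminus W)\cup\{c'_1,\dots,c'_{\lvert A_i\cap W\rvert}\}$ for $i\in N'$, and $A'_i \coloneqq A_i$ otherwise. This is well defined because $\lvert A_i\cap W\rvert \le \lvert S\rvert$ for every $i \in N'$.
\item For every voter, $A'_i\setminus W = A_i\setminus W$ and $\lvert A'_i\cap W\rvert = \lvert A_i\cap W\rvert$. Hence independence of approval swaps gives $W\in f(\mathcal{I}')$ for $\mathcal{I}' \coloneqq (N, A', C, k)$, and the hypothesis $f \subseteq \pjrp$ gives $W\in\pjrp(\mathcal{I}')$.
\item Since the ballots outside $W$ are unchanged, we still have $c \notin W$ and $N'\subseteq N_c$ in $\mathcal{I}'$, and $N'$ is still $\ell$-large because $N$ is unchanged. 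PJR+ therefore yields $\lvert\bigcup_{i\in N'}(A'_i\cap W)\rvert\ge\ell$.
\item The sets $A'_i\cap W$ for $i \in N'$ are prefixes of one common ordering of $S$, so they form a chain. Their union thus equals the largest of them, with cardinality $\max_{i\in N'}\lvert A'_i\cap W\rvert = \max_{i\in N'}\lvert A_i\cap W\rvert$. Hence some $i\in N'$ has $\lvert A_i\cap W\rvert\ge\ell$, which is exactly the EJR+ condition.
\end{enumerate}

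There is no substantial obstacle. The one point that needs checking is that the swap construction keeps the witness valid: $c$ stays outside $W$, the group $N'$ stays inside $N_c$, and $\ell$-largeness is preserved. All three follow because only approvals inside $W$ are permuted and the voter set is untouched.
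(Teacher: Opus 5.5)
Your proposal is correct and follows the paper's route exactly. The paper obtains this corollary from Part~1 of the proof of \Cref{ejrp-ref-char}, with the hypothesis $f \subseteq \pjrp$ substituted for the step where the four axioms were used to derive PJR+; you use the same prefix-swap construction and the same chain argument to pass from the PJR+ guarantee to a single voter with $\ell$ approved winners.
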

Note that this corollary even holds for refinements of PJR+ that do not satisfy monotonicity, \ifsv, or independence of losers.
Combining \Cref{cor:pjrp_ioas_ejrp} and \Cref{prop:pjr_monotonicity}, we obtain the following result.
\begin{restatable}{corollary}{ejrppjrpcon}\label{prop:ejr_mon}
    Let $f$ be a proportionality notion such that (i) $f \subseteq \pjr$, and (ii) $f$ satisfies independence of approval swaps and monotonicity. Then, $f \subseteq \ejrp$.
\end{restatable}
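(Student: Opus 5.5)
This follows by chaining the two earlier results, and nothing new is needed. Let $f$ satisfy (i) and (ii). Since $f \subseteq \pjr$ and $f$ satisfies monotonicity, \Cref{prop:pjr_monotonicity} gives $f \subseteq \pjrp$. Now $f$ refines PJR+ and satisfies independence of approval swaps, so \Cref{cor:pjrp_ioas_ejrp} gives $f \subseteq \ejrp$.

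The only point to check is that \Cref{cor:pjrp_ioas_ejrp} really needs only $f \subseteq \pjrp$ and independence of approval swaps. In the proof of \Cref{ejrp-ref-char}, the other axioms were used solely to obtain $f \subseteq \pjrp$. Given that inclusion, the argument runs as follows. Fix $W \in f(\mathcal{I})$, a candidate $c \notin W$, some $\ell \in [k]$, and an $\ell$-large group $N' \subseteq N_c$.

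\begin{enumerate}
\item Enumerate $S = \bigcup_{i\in N'}(A_i\cap W)$.
\item For each $i \in N'$, replace $A_i \cap W$ by a prefix of this enumeration of the same size; leave all other voters unchanged. This is an approval swap, so $W \in f(\mathcal{I}') \subseteq \pjrp(\mathcal{I}')$.
\item Ballots outside $W$ are unchanged, so $N' \subseteq N_c$ still holds in $\mathcal{I}'$. PJR+ then gives $|\bigcup_{i\in N'}(A'_i\cap W)| \ge \ell$.
\item The sets $A'_i \cap W$ are nested prefixes, so this union has size $\max_{i\in N'}|A'_i\cap W|$. This equals $\max_{i\in N'}|A_i\cap W|$, so some $i \in N'$ has $|A_i\cap W| \ge \ell$, which is EJR+.
\end{enumerate}

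For completeness, I would also confirm that \Cref{prop:pjr_monotonicity}, whose proof is deferred, applies with only the stated hypotheses. Suppose $W \in f(\mathcal I)$ violates PJR+, witnessed by $c \notin W$, $\ell$, and $N' \subseteq N_c$ with $|S| < \ell$. By monotonicity, add to every voter in $N'$ the ballot part $S$ together with further winners from $W \setminus S$. This keeps $W$ in $f$ and hence in $\pjr$.

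This step is the only delicate one, and I would handle it as follows. It suffices to add just $S$ to every voter in $N'$. Then $N'$ is $(|S|+1)$-cohesive, since all its members approve $S \cup \{c\}$. It is also $\ell$-large, and hence $(|S|+1)$-large, because $|S|+1 \le \ell$. Yet the group receives only $|S|$ winners, so $W$ violates PJR, a contradiction. Since that proposition is stated earlier, I would simply invoke it.
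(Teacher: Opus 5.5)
Your proposal is correct and is exactly the paper's proof: it chains \Cref{prop:pjr_monotonicity}, which gives $f \subseteq \pjrp$, with \Cref{cor:pjrp_ioas_ejrp}, which gives $f \subseteq \ejrp$. Your supplementary checks of both ingredients are also sound, except for one slip in your re-derivation of \Cref{prop:pjr_monotonicity}: drop the initial phrase about also adding winners from $W \setminus S$, since that could destroy the violation. Keep only the version in which each voter in $N'$ gains exactly $S$, which is what you then use.
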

 Since every refinement of EJR is a refinement of PJR, too, \Cref{prop:ejr_mon} entails that EJR+ is a natural refinement of EJR.
Further, while EJR+ is the \emph{coarsest} notion satisfying the five axioms from \Cref{ejrp-ref-char}, some refinements of EJR+ do the same. 
We discuss a non-trivial such notion, called \emph{near perfect representation} (NPR), in \Cref{app:NPR}.

\section{Witness-Based Proportionality}\label{sec:Witness_Rationalizability}

Recall that we focus on notions capturing \emph{underrepresentation}, that is, proportionality notions preventing groups of voters from being less represented than they deserve to be.
For a chosen committee to be proportional, it should, intuitively, represent all groups within the electorate sufficiently. Thus, following this intuition, if a committee violates proportionality, then some group of voters is not properly represented. %
Indeed, the proportionality notions we have considered so far provide ``explanations'' in the form of underrepresented groups of voters for why a given committee is non-proportional. %
For instance, if a committee does not satisfy EJR+, we can identify a concrete subset of the voters who could challenge the selected committee and make a reasonable claim that their group is not as well represented as it deserves to be. We call such underrepresented groups of voters \emph{witnesses}, as they witness that the given committee is not proportional.

Our goal in this section is to formalize this concept of \textit{witness-based} proportionality and use it to obtain \emph{exact} characterizations of EJR+ and PJR+. %
First, a mandatory requirement for witnesses is that of locality: whether a group of voters is not properly represented should only depend on the group and its approvals, and not on the approval ballots of voters outside this group. 
If, instead, we claimed that $N'$ is a witness of a proportionality violation but $N'$ could become well-represented by (only) changing the approval ballots of other voters, then the violation would not truly be caused by $N'$. Instead, the culprit would be the interplay between $N'$ and the other voters, and therefore we should have provided a different witness. Hence, before we can define witness-basedness, we require a formal notion of locality.
\begin{definition}[Local Embeddings]
Let $\mathcal{I} = (N, A, C, k)$ and $\hat{\mathcal{I}} = (\hat N, \hat A, \hat C, k)$ be two instances and $W \in \binom{C}{k}$ as well as $\hat{W} \in \binom{\hat{C}}{k}$ be two committees. Further, let $N' \subseteq N$ and $\hat{N'} \subseteq \hat{N}$ be two subsets of voters. We say that $(N', W)$ can be \emph{locally embedded} into $(\hat{N'}, \hat{W})$ if there exists a bijective function $\Phi_{N'} \colon N' \to \hat{N'}$ and an injective function $\Phi_{C} \colon W\cup\bigcup_{i \in N'} A_i \to \hat{W}\cup\bigcup_{i \in \hat{N'}} \hat{A}_i $ such that $\Phi_{C}(W) = \hat W$ and for all $j \in N'$ and $c \in W\cup\bigcup_{i \in N'} A_i$ it holds that \(c \in A_j \text{ if and only if } \Phi_{C}(c) \in \hat A_{\Phi_{N'}(j)}.\)
    
\end{definition}
Equivalently $(N', W)$ locally embeds into $(\hat N', \hat W)$ if the bipartite incidence structure between voters in $N'$ and candidates in $W \cup \bigcup_{i \in N'} A_i$ can be injectively mapped into the corresponding structure for $\hat{N}'$ while mapping $W$ onto $\hat W$. In particular, approvals of voters in $N\setminus N'$ are not relevant for this embedding. Intuitively, a local embedding preserves how voters in $N'$ vote for the committee $W$. Voters outside of $N'$ and votes from $N'$ for candidates outside the image of $\Phi_{C}$ can behave arbitrarily.

To illustrate the concept of local embeddings, consider the following example.
\begin{example}
\begin{wrapstuff}[r,type=figure,width=5cm]
    \centering
    \begin{tikzpicture}
    [yscale=0.6,xscale=0.8,
    voter/.style={anchor=south}]
    
        \foreach \i in {1,...,6}
    		\node[voter] at (\i-0.5, -1) {$\i$};
        
        \draw[fill=magenta!\clrstr] (0, 0) rectangle (4, 1);
        \draw[fill=magenta!\clrstr] (0, 1) rectangle (1, 2);
        \draw[fill=magenta!\clrstr] (1, 1) rectangle (3, 2);
        \draw[fill=magenta!\clrstr] (4, 1) rectangle (6, 2);
        \draw[fill=magenta!\clrstr] (3,2) rectangle (6, 3);
        \draw[fill=magenta!\clrstr] (4,0) rectangle (6, 1);
        \node at ( 2, 0.5) {$c_{1}$};
        \node at ( 2, 1.5) {$c_{3}$};
        \node at ( 0.5, 1.5) {$c_{2}$};
        \node at ( 5, 1.5) {$c_{4}$};
        \node at ( 4.5, 2.5) {$c_{5}$};
        \node at ( 5, 0.5) {$c_{6}$};
    \end{tikzpicture}
    \caption{Example for local embeddings.}
    \label{fig:embed}
\end{wrapstuff}
As an example, consider the instance depicted in \Cref{fig:embed} with $k = 3$. First, we claim that in this instance itself we can embed $(\{1,2,3\}, \{c_1, c_2, c_3\})$ into  $(\{4,5,6\}, \{c_4, c_5, c_1\})$, namely by mapping voter $1$ to $4$, $2$ to $5$, and $3$ to $6$ as well as $c_1$ to $c_5$, $c_2$ to $c_1$, and $c_3$ to $c_4$. The approvals of candidate $c_1$ outside of this set of voters are irrelevant. Note that we cannot embed $(\{4,5,6\}, \{c_4, c_5, c_1\})$ into $(\{1,2,3\}, \{c_1, c_2, c_3\})$, though. This, for instance, can be observed by the fact that voter $6$ approves three candidates in total, while every voter in $\{1,2,3\}$ approves at most two. 
\wrapstuffclear
\end{example}

We are now ready to introduce our framework of witness-based proportionality notions.
Formalizing our previously discussed intuition, a witness for a proportionality violation should remain a witness if we can locally embed it into another instance with the same number of voters.  
\begin{definition}[Witness-Based Proportionality Notions]\label{def:witness_Based}
    Let $f$ be a proportionality notion. We call a function $w$ mapping every instance $\mathcal{I} = (N, A, C, k)$ and committee $W \in \binom{C}{k}$ to a \emph{set of witnesses} $w(\mathcal{I}, W) \subseteq 2^N \setminus \{\emptyset\}$ \emph{a witness function of $f$} if
    \begin{itemize}
        \item[i)] $W \in f(\mathcal{I})$ if and only if  $w(\mathcal{I}, W) = \emptyset$;
        \item[ii)] if $N' \in w(\mathcal{I}, W)$ and $(N', W)$ can be locally embedded into $(\hat{N}', \hat{W})$ in an instance $\hat{\mathcal{I}} = (\hat{N}, \hat{A}, \hat{C}, k)$ with $\lvert N\rvert = \lvert \hat{N}\rvert$, then $\hat{N}' \in w(\hat{\mathcal{I}},\hat{W})$.
    \end{itemize}
    If $f$ has a witness function, we also say that it is \emph{witness-based}.
\end{definition}
On its own, the requirement of being witness-based is not very restrictive. For any proportionality notion $f$, a potential witness function is the trivial assignment $w(\mathcal{I},W)=\{N\}$ if $W\notin f(\mathcal{I})$ and $w(\mathcal{I},W)=\emptyset$ otherwise.
Under mild assumptions (i.e., anonymity, neutrality, independence of unapproved candidates, and independence of losers), this assignment is a witness function of $f$ (see \Cref{app:wb}). 
Nonetheless, this framework is very helpful as we can impose additional desirable axioms on the witness functions. We refer to \Cref{tab:big-fingerprint} in \Cref{sec:compliance} for a list of all introduced axioms.%

Note that, so far, each proportionality notion was satisfied if a condition held for all subsets $N'\subseteq N$. Conversely, if a committee is not proportional, there must exist some subset $N'\subseteq N$ not meeting this condition. We illustrate this via the example of EJR+.
Let $\mathcal{I} = (N, A, C, k)$ be an instance and $W$ be a committee. For this committee, we say that $N' \subseteq N$ is a \emph{natural witness} for an EJR+ violation of $W$ in $\mathcal I$, if there exist $\ell\in[k]$ and $c\in C\setminus W$ such that $N'\subseteq N_c$,
$|N'|\ge \ell\frac{n}{k}$, and $|A_i\cap W|<\ell$ for all $i\in N'$.  We call the function $\natWitnessEjrp$ selecting \[\natWitnessEjrp (\mathcal{I}, W) \coloneqq \{N' \subseteq N \colon N' \text{ is a natural witness for an EJR+ violation of } W \text{ in } \mathcal{I}\}\] the \emph{natural witness function} of EJR+. Similarly, natural witness functions can be defined for all previously introduced proportionality notions.
We formally show that the function $\natWitnessEjrp$ is indeed a witness function for EJR+.
\begin{observation}
    $\ejrp$ is a witness-based proportionality notion. In particular, $\natWitnessEjrp$ is a witness function of $\ejrp$ according to \Cref{def:witness_Based}. 
\end{observation}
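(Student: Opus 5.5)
To verify the observation, I would check the two conditions of \Cref{def:witness_Based} for $\natWitnessEjrp$ directly. First, every natural witness is non-empty. Since $\ell \ge 1$, we have $\lvert N'\rvert \ge \ell\frac{n}{k} > 0$, so $\natWitnessEjrp(\mathcal I, W) \subseteq 2^N\setminus\{\emptyset\}$.

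\textbf{Condition (i).} This is essentially definitional. $W \notin \ejrp(\mathcal I)$ holds exactly when there are $\ell\in[k]$, $c\notin W$ and an $\ell$-large $N'\subseteq N_c$ such that $\lvert A_i\cap W\rvert<\ell$ for every $i\in N'$. That is precisely the statement that some natural witness exists. Hence $W\in\ejrp(\mathcal I)$ if and only if $\natWitnessEjrp(\mathcal I,W)=\emptyset$.

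\textbf{Condition (ii).} This is the substantive step. Take $N'\in\natWitnessEjrp(\mathcal I,W)$, witnessed by $\ell$ and $c$, together with a local embedding $(\Phi_{N'},\Phi_C)$ of $(N',W)$ into $(\hat N',\hat W)$ in $\hat{\mathcal I}$, where $\lvert\hat N\rvert=n$. I would show that $\hat N'$ is a natural witness for the same $\ell$ and the candidate $\hat c\coloneqq\Phi_C(c)$. Note that $c\in A_i$ for $i\in N'$, so $c$ lies in the domain of $\Phi_C$. We check four things.
\begin{enumerate}
\item $\hat c\notin\hat W$. Since $\Phi_C$ is injective and $\Phi_C(W)=\hat W$, while $c\notin W$, its image cannot lie in $\hat W$.
\item $\hat N'\subseteq \hat N_{\hat c}$. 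Each $\hat j\in\hat N'$ equals $\Phi_{N'}(j)$ for some $j\in N'$ by bijectivity. Since $c\in A_j$, the embedding property gives $\hat c\in\hat A_{\hat j}$.
\item $\hat N'$ is $\ell$-large. We have $\lvert\hat N'\rvert=\lvert N'\rvert\ge \ell\frac{n}{k}=\ell\frac{\lvert\hat N\rvert}{k}$, using the bijection, equal voter counts, and the shared committee size $k$.
\item The representation bound carries over. Because $\Phi_C$ restricted to $W$ is a bijection onto $\hat W$ and preserves approvals, we get $\hat A_{\Phi_{N'}(j)}\cap\hat W=\Phi_C(A_j\cap W)$. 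Therefore $\lvert\hat A_{\Phi_{N'}(j)}\cap\hat W\rvert=\lvert A_j\cap W\rvert<\ell$.
\end{enumerate}
Together these show $\hat N'\in\natWitnessEjrp(\hat{\mathcal I},\hat W)$.

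The main point requiring care is the last step. The embedding guarantees only the ``if and only if'' for candidates in the domain $W\cup\bigcup_{i\in N'}A_i$. It is therefore important that $\hat W=\Phi_C(W)$ lies entirely in the image, so that no candidate of $\hat W$ outside the image could add approvals. This is exactly why the definition requires $\Phi_C(W)=\hat W$. The equal-$n$ assumption is what makes $\ell$-largeness transfer.
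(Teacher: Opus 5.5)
Your proposal is correct and follows essentially the same route as the paper. Condition (i) holds by definition. For condition (ii), you transport the same $\ell$ and the candidate $\Phi_C(c)$ through the embedding, using bijectivity of $\Phi_{N'}$ with $\lvert \hat N\rvert = \lvert N\rvert$ for $\ell$-largeness and $\Phi_C(W)=\hat W$ to preserve each $\lvert A_j\cap W\rvert$. Your explicit checks that witnesses are non-empty and that $\Phi_C(c)\notin\hat W$ (via injectivity) spell out steps the paper leaves implicit.
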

\begin{proof}
    Firstly, $W$ satisfying EJR+ is indeed equivalent to $\natWitnessEjrp(\mathcal{I}, W) = \emptyset$, as $\natWitnessEjrp(\mathcal{I}, W)$ by definition contains precisely the subsets of voters that cause EJR+ violations. 

    Secondly, let $N' \in \natWitnessEjrp(\mathcal{I}, W)$ and consider another instance $\hat{\mathcal{I}} = (\hat N, \hat A, \hat C, k)$ and $(\hat{N'}, \hat{W})$ such that $\lvert\hat N\rvert = \lvert N\rvert$ and
    $(N', W)$ can be locally embedded into $(\hat{N'}, \hat{W})$  via $\Phi_{N'}$ and $\Phi_{C}$. Our goal is to show $\hat{N}'\in \natWitnessEjrp(\hat{\mathcal{I}},\hat W)$.
 As $N'$ is a natural witness for $W$ violating EJR+, there exist $\ell \in [k]$ and $c \in \bigcap_{i \in N'} (A_i \setminus W)$ such that $\lvert N'\rvert \ge \ell \frac{n}{k}$ and $\lvert A_i \cap W\rvert < \ell$ for all $i \in N'$. Via our local embedding, we also obtain that $\Phi_{C}(c) \in \bigcap_{i \in \hat{N'}} \left(\hat{A}_i \setminus \Phi_{C}(W)\right)$ and since $\Phi_{N'}$ is a bijection we know that $\lvert \hat{N'}\rvert = \lvert N'\rvert \ge \ell \frac{n}{k}$. Further, since $\Phi_C$ bijectively maps $W$ onto $\hat{W}$ and preserves the approvals of voters in $\hat{N'}$ for any $i \in \hat{N'}$ we get that $\lvert \hat{A}_i \cap \hat{W}\rvert = \lvert A_{\Phi_{N'}^{-1}(i)} \cap W\rvert < \ell $. In other words, $\hat{N'}$ 
 is a natural witness for an EJR+ violation of $\hat{W}$, i.e., $\hat{N}' \in \natWitnessEjrp(\hat{\mathcal{I}}, \hat{W})$. 
\end{proof}

For our first axiom we impose on witness functions, we exploit a commonality between %
our previously introduced notions: cohesiveness. To motivate 
cohesiveness, we ask the following question: when can a group of voters rightfully claim to be underrepresented?
Let us first answer this question for the well-structured case of party-list instances. Here, voters naturally form groups based on the party they support. They are underrepresented whenever their party does not obtain sufficiently many seats. Crucially, in this case, the supporters of the party can point to the unelected party members. These candidates function as a claim to representation, but also as a solution to the underrepresentation caused by the current committee.
Outside of the party-list domain, parties generally cease to exist. However, the same idea can be generalized: a group of voters should only be able to claim underrepresentation, if they can point to concrete candidates that represent the entire group of voters, but have not been elected into the committee. Indeed, this approach ensures that the underrepresented group does not merely consist of individually dissatisfied voters, but that the dissatisfaction has a common root.

\begin{definition}[Cohesiveness-Based Proportionality Notions]
    Let $f$ be a proportionality notion with witness function $w$. We say that $(f,w)$ is \emph{cohesiveness-based} if for every instance $\mathcal{I}$ and committee $W$ such that $W \notin f(\mathcal{I})$, it holds that $\bigcap_{i \in N'} (A_i \setminus W) \neq \emptyset$ for every $N' \in w(\mathcal{I}, W)$. 
\end{definition}
Indeed, all the previous notions (except for the core) together with their natural witness functions are cohesiveness-based. 

For the second axiom that we impose on witness functions, we argue the following: a voter who is content with the committee has no incentive to participate in a collective complaint. Therefore, if a group of voters witnesses a violation according to a proportionality notion, each of its voters should be dissatisfied with their representation. If we instead claimed that a set of voters $N'$ is a witness for a proportionality violation but includes voters satisfied with the committee, then we unnecessarily enlarged the witness and should have provided a smaller one instead. To formalize this consider a single dissatisfied voter within a witness. If all other voters were to adapt their preferences, then each of these voters would remain individually discontent. Additionally, the group of voters would now be completely unified in their complaint, and therefore have an even stronger collective claim to underrepresentation than before. Consequently, this group of voters should remain a witness for the proportionality violation.

\begin{definition}[Individual discontentment]
Let $f$ be a witness-based proportionality notion with witness function $w$.
We say that $(f,w)$ satisfies \emph{individual discontentment} if the following holds: for any instance $\mathcal I=(N,A,C,k)$, any committee $W$, and any witness $N'\in w(\mathcal I,W)$:
for every $j\in N'$ let $\mathcal I^{(j)}=(N,A^{(j)},C,k)$ be the instance where
$A^{(j)}_i = A_i$ for all $i\in N\setminus N'$ and $A^{(j)}_i = A_j$ for all $i\in N'$.
Then $N'\in w(\mathcal I^{(j)},W)$.
\end{definition}

Indeed, all of our previously studied notions %
satisfy individual discontentment, except for the EJR+ refinement NPR (see \Cref{app:NPR}). 
We consider EJR+ to demonstrate the axiom.
\begin{proposition}
    EJR+ together with its natural witness function is cohesiveness-based and satisfies individual discontentment.
\end{proposition}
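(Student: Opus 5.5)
Both properties follow directly from unpacking the definition of a natural witness; no earlier result beyond the Observation that $\natWitnessEjrp$ is a witness function of $\ejrp$ is needed.

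\emph{Cohesiveness.} Take any instance $\mathcal I$, committee $W\notin\ejrp(\mathcal I)$, and $N'\in\natWitnessEjrp(\mathcal I,W)$. By definition there are $\ell\in[k]$ and $c\in C\setminus W$ with $N'\subseteq N_c$. Since $c\in A_i$ for all $i\in N'$ and $c\notin W$, we get $c\in\bigcap_{i\in N'}(A_i\setminus W)$. This intersection is therefore nonempty; note that $N'\neq\emptyset$ because witnesses are nonempty by definition.

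\emph{Individual discontentment.} Fix $N'\in\natWitnessEjrp(\mathcal I,W)$ with its $\ell$ and $c$, and fix $j\in N'$. In $\mathcal I^{(j)}$ every $i\in N'$ has ballot $A_j$, and the voter set $N$ is unchanged, so $n$ and the threshold $\ell\frac nk$ are the same. We check the three conditions for $N'$ to be a natural witness in $\mathcal I^{(j)}$, using the same $\ell$ and $c$:
\begin{itemize}
\item $c\in A_j$ because $j\in N'\subseteq N_c$, so $N'\subseteq N^{(j)}_c$.
\item $|N'|\ge \ell\frac nk$ holds as before.
\item For each $i\in N'$, $|A^{(j)}_i\cap W|=|A_j\cap W|<\ell$, since $j$ itself satisfied this bound in $\mathcal I$.
\end{itemize}
Hence $N'\in\natWitnessEjrp(\mathcal I^{(j)},W)$.

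There is no real obstacle. The only point needing care is that $n$ is unchanged in $\mathcal I^{(j)}$, so the largeness threshold is preserved, and that $c\notin W$ still holds because $W$ is fixed.
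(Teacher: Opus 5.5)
Your proof is correct and follows essentially the same route as the paper: cohesiveness-basedness is read off directly from the definition via the unselected candidate $c$, and individual discontentment is verified by checking that the same $\ell$ and $c$ still certify $N'$ as a natural witness once every voter in $N'$ adopts $A_j$. Your explicit remarks that $n$ and hence the threshold $\ell\frac{n}{k}$ are unchanged, and that $c\in A_j$ because $j\in N'\subseteq N_c$, spell out steps the paper leaves implicit.
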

\begin{proof}
    $(\ejrp,\natWitnessEjrp)$ being cohesiveness-based follows from the definition.
    To prove satisfaction of individual discontentment, let %
    $N'$ be a natural witness for some $W$ violating EJR+. Hence, there exist $\ell \in [k]$ and $c \notin W$ such that $\lvert N'\rvert \ge \ell\frac{n}{k}$, $c\in A_i$ and $\lvert A_i \cap W \rvert < \ell$ for all $i \in N'$. Now for the modified instance $\mathcal{I}' = (N, A', C, k)$, with $A'_i = A_j $ for an arbitrary $j \in N'$ and for all $i \in N'$ (while the approvals outside of $N'$ stay the same), it still holds that $c\in A'_i$ and $\lvert A'_i \cap W \rvert < \ell$ for all $i \in N'$. Moreover, since it still holds that $\lvert N'\rvert \ge \ell \frac{n}{k}$, $N'$ remains a natural witness for an EJR+ violation.
\end{proof}

Before we come to our characterizations of PJR+ and EJR+, we briefly highlight that two of the previously studied axioms (independence of losers and \ifsv) are satisfied by all witness-based proportionality notions whose witnesses are cohesiveness-based. Hence, they will not be necessary for the later characterizations.
\begin{restatable}{lemma}{wbiol}
Let $f$ be a proportionality notion with witness function $w$. Then $f$ satisfies independence of losers. If $(f,w)$ is cohesiveness-based, $f$ also satisfies \ifsv.
\label{lem:witness_iol_ifsv}
\label{general_lem}
\end{restatable}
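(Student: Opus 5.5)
Both claims go by contraposition through property ii) of \Cref{def:witness_Based}. Let $W \in f(\mathcal I)$, so $w(\mathcal I, W)=\emptyset$ by property i). Let $\mathcal I'$ be the modified instance, and suppose for contradiction that $W\notin f(\mathcal I')$. Then property i) gives a witness $N'\in w(\mathcal I',W)$. The plan is to show that $(N',W)$ locally embeds into $(N',W)$ within $\mathcal I$, with both maps the identity. Since the voter set is unchanged, we have $|N|=|N|$, so property ii) yields $N'\in w(\mathcal I,W)\neq\emptyset$, a contradiction.

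\emph{Independence of losers.} Take $\mathcal I'=\mathcal I_{\mid C\setminus\{c\}}$ with $c\notin W$, so that $A'_i=A_i\setminus\{c\}$. We embed $(N',W)$ from $\mathcal I'$ into $\mathcal I$ via $\Phi_{N'}=\mathrm{id}$ and $\Phi_C=\mathrm{id}$ on $W\cup\bigcup_{i\in N'}A'_i$. This set lies in $C\setminus\{c\}$ and is contained in $W\cup\bigcup_{i\in N'}A_i$, so the map is injective with codomain correct, and $\Phi_C(W)=W$. For any $d$ in the domain we have $d\neq c$, hence $d\in A'_j$ iff $d\in A_j$. This is exactly the required equivalence, so no cohesiveness assumption is needed here.

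\emph{Robustness to fully satisfied voters.} Now $A'_i=A_i$ or $A'_i\subseteq A_i\cap W$ for each $i$. The key step, and the only real obstacle, is to show that every $i\in N'$ is unchanged. Cohesiveness of $(f,w)$ applied in $\mathcal I'$, where $W\notin f(\mathcal I')$, gives some $c\in\bigcap_{i\in N'}(A'_i\setminus W)$. So every $i\in N'$ satisfies $A'_i\not\subseteq W$, and in particular $A'_i\not\subseteq A_i\cap W$. Hence $A'_i=A_i$ for all $i\in N'$.

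With this in hand, the identity maps again give a local embedding of $(N',W)$ in $\mathcal I'$ into $(N',W)$ in $\mathcal I$. The approvals of voters in $N'$ coincide in the two instances, and voters outside $N'$ are irrelevant to the definition. The candidate set $C$ is unchanged, so all domains and codomains match. Property ii) then gives the contradiction as above. The rest is routine checking that the domain of $\Phi_C$ matches the definition, which holds trivially.
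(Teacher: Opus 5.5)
Your proof is correct and matches the paper's argument. It argues by contraposition through property ii) with identity embeddings, and uses cohesiveness-basedness in the modified instance to show that every member of the witness keeps its original ballot. The only difference is cosmetic: you handle all modified voters at once, whereas the paper states the step for a single voter $j$ with $A_j \subseteq W$.
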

\subsection{Characterizing EJR+}
The fact that NPR (a stronger notion than EJR+) does not satisfy individual discontentment is not a coincidence. We show that EJR+ is the strongest (i.e., most demanding or \emph{finest}) lower quota extension among all witness-based notions
whose witnesses are cohesiveness-based and satisfy individual discontentment.
Equivalently, every such notion $f$ is a coarsening of EJR+. 
\begin{theorem}\label{thm:ejrp_coarsening}
    Let $f$ be a lower quota extension and let $w$ be a witness function of 
    $f$ such that $(f,w)$ is cohesiveness-based and satisfies individual discontentment.
    Then $\ejrp \subseteq f$. 
\end{theorem}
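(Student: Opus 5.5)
We argue by contraposition: given $W\in\ejrp(\mathcal I)$, we show $w(\mathcal I,W)=\emptyset$, so that $W\in f(\mathcal I)$ by property i) of witness functions. Suppose instead that some $N'\in w(\mathcal I,W)$ exists. The plan is to use individual discontentment to make $N'$ uniform, then locally embed it into a party-list instance on which $W$ meets lower quota. There the lower quota extension property forces the witness set to be empty, while property ii) forces $N'$ to stay a witness.

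\textbf{Step 1 (choose the imitated voter).} Since $(f,w)$ is cohesiveness-based, there is some $c\in\bigcap_{i\in N'}(A_i\setminus W)$, so $N'\subseteq N_c$. Set $\ell^\ast\coloneqq\lfloor k|N'|/n\rfloor$.
\begin{itemize}
\item If $\ell^\ast\ge 1$, then $N'$ is $\ell^\ast$-large. Since $W$ satisfies EJR+, some $j\in N'$ has $|A_j\cap W|\ge\ell^\ast$.
\item If $\ell^\ast=0$, let $j\in N'$ be arbitrary.
\end{itemize}
In both cases we apply individual discontentment with this $j$. This gives $N'\in w(\mathcal I^{(j)},W)$, where every voter of $N'$ now has ballot $A_j$, and $c\in A_j\setminus W$.

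\textbf{Step 2 (embed into a party-list instance).} We build $\hat{\mathcal I}$ with the same voter set $N$, the same $k$, and committee $\hat W=W$. Voters in $N'$ keep ballot $A_j$. We split on whether $W\setminus A_j$ is empty.
\begin{itemize}
\item \emph{Case $W\setminus A_j\neq\emptyset$.} Take candidate set $A_j\cup W$, and let every voter in $N\setminus N'$ approve exactly $W\setminus A_j$. If $N\setminus N'=\emptyset$, simply drop this second party. This is a party-list instance with parties $C_1=A_j$ and $C_2=W\setminus A_j$. It has $|A_j\cup W|\ge k$ candidates.
\item \emph{Case $W\subseteq A_j$.} Take candidate set $A_j$ and let every voter approve $A_j$. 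This is a one-party instance.
\end{itemize}
In both cases the identity maps on $N'$ and on $W\cup A_j$ form a local embedding of $(N',W)$ in $\mathcal I^{(j)}$ into $(N',W)$ in $\hat{\mathcal I}$. This holds because only the approvals of voters in $N'$ need to be preserved, and those are unchanged. Since $|\hat N|=|N|$, property ii) yields $N'\in w(\hat{\mathcal I},W)$.

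\textbf{Step 3 (lower quota holds).} We check $W\in\LQ(\hat{\mathcal I})$ in each case.
\begin{itemize}
\item \emph{Case $W\setminus A_j\neq\emptyset$.} For $C_1$ we have $|W\cap C_1|=|A_j\cap W|\ge\ell^\ast=\lfloor k n_1/n\rfloor$. For $C_2$, all of it is elected, so $|W\cap C_2|=|C_2|\ge\min(\lfloor kn_2/n\rfloor,|C_2|)$.
\item \emph{Case $W\subseteq A_j$.} Lower quota only asks for $\min(k,|A_j|)=k$ seats from $A_j$, which $W\subseteq A_j$ provides.
\end{itemize}
Because $f$ is a lower quota extension, $W\in f(\hat{\mathcal I})$. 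Hence $w(\hat{\mathcal I},W)=\emptyset$ by property i), contradicting $N'\in w(\hat{\mathcal I},W)$.

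The delicate point is Step 2. A naive construction that gives the remaining voters a fresh dummy party can break lower quota for that party when $W\subseteq A_j$. The case split, merging everyone into one party in that case, is what I expect to need the most care. The other thing to verify carefully is that the chosen $j$ really satisfies $|A_j\cap W|\ge\ell^\ast$; this is exactly where EJR+ (rather than PJR+) is used.
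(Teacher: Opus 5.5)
Your proof is correct and follows the paper's argument essentially step for step. You use EJR+ to pick a voter $j\in N'$ with $|A_j\cap W|\ge\lfloor k|N'|/n\rfloor$, copy $A_j$ across $N'$ via individual discontentment, and then locally embed $(N',W)$ into the one- or two-party instance on $W\cup A_j$ where $W$ meets lower quota. Your extra sub-case $N\setminus N'=\emptyset$ is vacuous: $N'=N$ gives $\ell^\ast=k$, so $W\subseteq A_j$.
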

\begin{proof}
    We prove this statement by contraposition.
    Let $(f,w)$ be cohesiveness-based and satisfy individual discontentment. Further, let $\mathcal{I}$ be an instance with $W \in \ejrp(\mathcal{I}) \setminus f(\mathcal{I})$. Our goal is to provide a (party-list) instance on which $f$ is not a lower quota extension.

    As $W \notin f(\mathcal{I})$ there exists a witness $N' \in w(\mathcal{I} ,W)$. As $(f,w)$ is cohesiveness-based, there further exists a candidate $c \in \bigcap_{i \in N'} (A_i \setminus W)$ (and $N'$ cannot be empty). Let $\ell \coloneqq \left\lfloor k\frac{|N'|}{n}\right\rfloor$ (possibly $\ell=0$).
Then $|N'| \ge \ell \frac{n}{k}$. As $W$ satisfies EJR+ we know that there is a voter $i \in N'$ with $t \coloneqq \lvert A_i \cap W \rvert \ge \ell$. 

    Now first consider the instance $\mathcal{I}_2 = (N, A', C, k)$ with $A'_j = A_i$ for each $j \in N'$ and $A'_j = A_j$ for each $j \notin N'$. As $(f,w)$  satisfies individual discontentment, we get that $N' \in w(\mathcal{I}_2, W)$. 
    
Let $C^\star \coloneqq W \cup A_i$ and define the instance $\mathcal I_3=(N,A'',C^\star,k)$ by setting 
$A''_j \coloneqq A_i$ for any $j\in N'$, $A''_j \coloneqq W \setminus A_i$ for any $j \in N \setminus N'$ if $W \setminus A_i \neq \emptyset$ and $A''_j \coloneqq A_i$ for any $j \in N \setminus N'$ if $W \setminus A_i = \emptyset$.
 We claim that $\mathcal{I}_3$ is indeed a party-list instance and that $W$ satisfies lower quota on $\mathcal{I}_3$. If $W\setminus A_i = \emptyset$, $W$ satisfies lower quota for party-list by definition, as the instance just contains a single party from which all $k$ candidate are selected. Otherwise, the instance has the parties $A_i$ and $W \setminus A_i$. Lower quota for party-lists now requires that at least $\min\left(\lvert A_i\rvert, \left\lfloor k \frac{\lvert N'\rvert}{n}\right\rfloor\right) = \min(\lvert A_i\rvert,\ell) =  \ell$ candidates are selected from $A_i$. As $\lvert W \cap A_i\rvert = t  \ge \ell$ this is given. Moreover, the voters of the second party are fully satisfied and, thus, $W$ satisfies lower quota for party-lists in this instance. This proves the claim.
 However, we notice that $(N', W)$ in $\mathcal{I}_2$ can be locally embedded  into $(N', W)$ in $\mathcal{I}_3$ (as the ballot of no $j\in N'$ changes). By definition of witness functions, $N'\in w(\mathcal I_2, W)$ implies $N'\in w(\mathcal I_3, W)$. In other words, $W\notin f(\mathcal I_3)$, but $W$ satisfies lower quota, hence $f$ is not a lower quota extension. \qedhere
\end{proof}
Hence, among the widely studied family of cohesiveness-based ``justified representation'' notions, EJR+ is the most demanding one. 
As a consequence, we further obtain an \emph{axiomatic characterization of EJR+} among the cohesiveness- and witness-based proportionality notions. For this purpose, we only need to add monotonicity and independence of approval swaps to \Cref{thm:ejrp_coarsening}.
\begin{corollary}\label{cor:ejrp_Char}
    Let $f$ be a lower quota extension satisfying monotonicity and independence of approval swaps and let $w$ be a witness function of $f$ such that $(f,w)$ is cohesiveness-based and satisfies individual discontentment. Then $f = \ejrp$.
    \label{cor:ejrpluschar}
\end{corollary}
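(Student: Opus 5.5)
The plan is to show the two inclusions $\ejrp \subseteq f$ and $f \subseteq \ejrp$ separately. Each follows from a result already proved, once we check that that result's hypotheses hold.

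\textbf{First inclusion ($\ejrp \subseteq f$).} This is immediate from \Cref{thm:ejrp_coarsening}. Its hypotheses are that $f$ is a lower quota extension, that $w$ is a witness function, that $(f,w)$ is cohesiveness-based, and that $(f,w)$ satisfies individual discontentment. All of these are assumed verbatim in the corollary.

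\textbf{Second inclusion ($f \subseteq \ejrp$).} Here I would invoke \Cref{ejrp-ref-char}, which requires five axioms: independence of losers, monotonicity, robustness to fully satisfied voters, independence of approval swaps, and lower quota for party-lists.
\begin{itemize}
\item Monotonicity and independence of approval swaps are assumed directly.
\item Lower quota for party-lists follows because $f$ is a lower quota extension. Indeed, $f(\mathcal I) = \LQ(\mathcal I)$ on party-list instances trivially gives $f(\mathcal I) \subseteq \LQ(\mathcal I)$.
\item The remaining two axioms come from \Cref{lem:witness_iol_ifsv}. Since $f$ has a witness function $w$, it satisfies independence of losers. Since $(f,w)$ is cohesiveness-based, it also satisfies robustness to fully satisfied voters.
\end{itemize}
Hence all five axioms hold, and \Cref{ejrp-ref-char} yields $f \subseteq \ejrp$.

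Combining the two inclusions gives $f = \ejrp$. There is no real obstacle here. The only point needing care is to notice that the two axioms not listed in the corollary (independence of losers and robustness to fully satisfied voters) are supplied by \Cref{lem:witness_iol_ifsv}, and that being a lower quota extension is stronger than lower quota for party-lists.
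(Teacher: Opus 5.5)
Your proposal is correct and follows essentially the same route as the paper's proof. The paper also gets $\ejrp \subseteq f$ from \Cref{thm:ejrp_coarsening}, and gets $f \subseteq \ejrp$ from \Cref{ejrp-ref-char}, using \Cref{lem:witness_iol_ifsv} for independence of losers and robustness to fully satisfied voters and the lower quota extension property for lower quota for party-lists.
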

\begin{proof}
    By \Cref{thm:ejrp_coarsening} we know that $\ejrp\subseteq  f$. From \Cref{lem:witness_iol_ifsv} we get that $f$ also satisfies independence of losers and \ifsv. Since $f$ is a lower quota extension, it satisfies lower quota for party-lists. Thus, we can apply \Cref{ejrp-ref-char} and also obtain $f \subseteq \ejrp $. As a consequence, we get that $f = \ejrp$.
\end{proof}
In \Cref{app:robust} we show that all axioms used for this characterization are necessary. If any one of them is removed, a proportionality notion different from EJR+ satisfies the remaining ones. 
\subsection{Characterizing PJR+}
To distinguish PJR+ from EJR+, we consider the approach to representation that focuses on the satisfaction of the group as a whole. From this perspective, we must view all candidates that the group supports and the collective representation that the group obtains. This is in accord with the original motivation of \citet{SFF+17a} for PJR, who favor this viewpoint over the focus on individual voters.
When voters within the group exchange and adapt each other's approvals, the group as a whole still supports the same set of candidates. Therefore, the group should remain eligible for more representation.
To formulate this in the language of witness-based proportionality notions, consider voters $i,j$ in a witness set $N'$ for a proportionality violation. If voter $i$ adopts some of $j$'s approvals, i.e., $A_i\subseteq A_i' \subseteq A_i\cup A_j$, then the collectively allocated set of candidates does not change, and hence $N'$ should stay a witness. According to EJR+, it could be that $N'$ witnesses a violation in $A$, but that in $A'$ no violation can be detected anymore. However, the same is not true for PJR+.

\begin{definition}[Merge-proofness]\label{def:merge_proof}
Let $f$ be a witness-based proportionality notion with witness function $w$.
We say that $(f,w)$ satisfies \emph{merge-proofness} if for every instance $\mathcal{I}=(N,A,C,k)$,
every committee $W$, and every witness $N'\in w(\mathcal{I},W)$, the following holds.
For every profile $A'$ such that $A'_i=A_i$ for all $i\in N\setminus N'$ and
\(
A_i \subseteq A'_i \subseteq \bigcup_{j\in N'} A_j 
\)\text{ for all } $i\in N'$,
we have $N'\in w(\mathcal{I}',W)$ where $\mathcal{I}'\coloneqq (N,A',C,k)$.
\end{definition}

\begin{restatable}{proposition}{pjrpmergeproof}\label{prop:pjrp_merge_proof}
PJR+ with its natural witness function is cohesiveness-based and satisfies merge-proofness.
\end{restatable}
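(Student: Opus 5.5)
The plan is to follow the template of the analogous statement for EJR+. First I define the natural witness function of PJR+: $N'\subseteq N$ is a natural witness for a PJR+ violation of $W$ in $\mathcal I$ if there exist $\ell\in[k]$ and $c\in C\setminus W$ with $N'\subseteq N_c$, $|N'|\ge \ell\frac{n}{k}$, and $\lvert\bigcup_{i\in N'}(A_i\cap W)\rvert<\ell$. Call this function $\natWitnessPjrp$. I then check the three required properties: that it is a witness function, that it is cohesiveness-based, and that it is merge-proof.

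\emph{Witness function.} Condition (i) holds by definition, since the natural witnesses are exactly the groups that violate PJR+. For condition (ii), take a local embedding $(\Phi_{N'},\Phi_C)$ of $(N',W)$ into $(\hat N',\hat W)$ in an instance $\hat{\mathcal I}$ with $|\hat N|=n$. I set $\hat c\coloneqq\Phi_C(c)$. Because $\Phi_C$ maps $W$ onto $\hat W$ and is injective, $\hat c\notin\hat W$. Because approvals are preserved, every voter in $\hat N'$ approves $\hat c$. The size bound $|\hat N'|=|N'|\ge \ell\frac{n}{k}$ carries over since $\Phi_{N'}$ is a bijection and both instances have $n$ voters.

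The one step needing care beyond the EJR+ argument is the union term. Preservation of approvals gives
\[
\hat A_{\Phi_{N'}(j)}\cap\hat W=\Phi_C(A_j\cap W)\quad\text{for each } j\in N',
\]
and then injectivity of $\Phi_C$ yields $\lvert\bigcup_{i\in\hat N'}(\hat A_i\cap\hat W)\rvert=\lvert\bigcup_{j\in N'}(A_j\cap W)\rvert<\ell$. The set identity holds because every element of $\hat W$ is $\Phi_C(d)$ for some $d\in W$, and $\Phi_C(d)$ is approved exactly when $d$ is.

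\emph{Cohesiveness-based.} This is immediate: the witnessing candidate $c$ lies in $\bigcap_{i\in N'}(A_i\setminus W)$.

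\emph{Merge-proofness.} Let $A'$ be given with $A_i\subseteq A'_i\subseteq\bigcup_{j\in N'}A_j$ for $i\in N'$ and all other ballots unchanged. Since $A_i\subseteq A'_i$, we still have $c\in A'_i$ for every $i\in N'$, so $c$ remains an unelected commonly approved candidate. Using both inclusions, $\bigcup_{i\in N'}(A'_i\cap W)=\bigcup_{i\in N'}(A_i\cap W)$. Therefore the collective representation stays below $\ell$, and $n$ and $|N'|$ are unchanged. Hence $N'$ is still a natural witness. No real obstacle arises; the only delicate point is the union-preservation in the embedding step.
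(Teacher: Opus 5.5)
Your proof is correct and follows the paper's argument. Cohesiveness-basedness comes directly from the witnessing candidate $c$, and merge-proofness holds because $A_i\subseteq A'_i$ keeps $c$ commonly approved while $A'_i\subseteq\bigcup_{j\in N'}A_j$ prevents $\lvert\bigcup_{i\in N'}(A'_i\cap W)\rvert$ from growing; you get equality and keep the same $\ell$. Your extra check that $\natWitnessPjrp$ satisfies the local-embedding condition (ii), including the union-preservation step, is left implicit in the paper by analogy with the EJR+ observation, so your version is slightly more complete.
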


This additional axiom is sufficient to characterize coarsenings of PJR+. 

\begin{restatable}{theorem}{pjrpstr}\label{thm:pjrp_merge_lq}
Let $f$ be a lower quota extension and let $w$ be a witness function of $f$ such that $(f,w)$ is cohesiveness-based and satisfies merge-proofness. Then $\pjrp \subseteq f$.
\end{restatable}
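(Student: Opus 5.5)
We argue by contraposition, following the template of the proof of \Cref{thm:ejrp_coarsening} but with merge-proofness taking the role of individual discontentment. Suppose $W \in \pjrp(\mathcal{I}) \setminus f(\mathcal{I})$ for some instance $\mathcal{I}=(N,A,C,k)$. Then there is a witness $N' \in w(\mathcal{I},W)$. Because $(f,w)$ is cohesiveness-based, $N'$ is nonempty and some candidate $c \in \bigcap_{i\in N'}(A_i\setminus W)$ exists. Set $\ell \coloneqq \lfloor k|N'|/n\rfloor$, so that $N'\subseteq N_c$ is $\ell$-large. Let $U \coloneqq \bigcup_{j\in N'} A_j$ and $S \coloneqq U\cap W$. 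If $\ell\ge 1$, PJR+ of $W$ in $\mathcal{I}$ gives $|S|\ge \ell$. If $\ell=0$, this inequality holds trivially.

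Next we use merge-proofness to make $N'$ perfectly cohesive. Define $A'$ by $A'_i \coloneqq U$ for every $i\in N'$ and $A'_i\coloneqq A_i$ otherwise. Since $A_i\subseteq U$ for each $i \in N'$, merge-proofness gives $N'\in w(\mathcal{I}',W)$, where $\mathcal{I}'=(N,A',C,k)$. Every voter of $N'$ now approves exactly $U$, and $|U\cap W|=|S|\ge\ell$.

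We then build a party-list instance into which $(N',W)$ locally embeds. Let $C^\star\coloneqq W\cup U$ and define $\mathcal{I}''=(N,A'',C^\star,k)$. Set $A''_j\coloneqq U$ for $j\in N'$. For $j\notin N'$, set $A''_j\coloneqq W\setminus U$ if this set is nonempty, and $A''_j\coloneqq U$ otherwise. In the first case the parties are $U$ and $W\setminus U$. They are disjoint, and together they cover $C^\star$, so $\mathcal{I}''$ is a party-list instance. In the second case $W\subseteq U = C^\star$, so there is a single party from which all $k$ selected candidates come, and lower quota holds trivially. Now check lower quota in the first case. Party $U$ is supported by $N'$ and possibly nobody else, so its requirement is $\min(|U|,\ell)\le\ell\le |W\cap U|$. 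Party $W\setminus U$ is entirely contained in $W$. Hence $W\in\LQ(\mathcal{I}'')$.

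Finally, the identity maps embed $(N',W)$ in $\mathcal{I}'$ locally into $(N',W)$ in $\mathcal{I}''$. The ballots of $N'$ agree, the candidate domain $W\cup U$ is the same, $W$ maps onto $W$, and $|N|$ is unchanged. By property ii) of witness functions, $N'\in w(\mathcal{I}'',W)$, so $W\notin f(\mathcal{I}'')$. This contradicts $f(\mathcal{I}'')=\LQ(\mathcal{I}'')$, since $f$ is a lower quota extension.

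The step that needs the most care is the lower-quota check, because it must be exact. Voters in $N$ with $N'\subsetneq N_U$ must not inflate $U$'s lower quota, so we give every non-$N'$ voter the ballot $W\setminus U$ whenever that set is nonempty. The degenerate case $W\subseteq U$ is handled as the single-party case above. We also need $C^\star$ to have size at least $k$, which holds because $W\subseteq C^\star$. Apart from these checks, the argument is a direct adaptation of the proof of \Cref{thm:ejrp_coarsening}.
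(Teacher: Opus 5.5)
Your proposal is correct and follows essentially the same route as the paper's proof. Both arguments use merge-proofness to give every voter in $N'$ the ballot $\bigcup_{j\in N'}A_j$, and then locally embed into the two-party instance with parties $\bigcup_{j\in N'}A_j$ and $W\setminus\bigcup_{j\in N'}A_j$, where $W$ meets lower quota. Your treatment of the degenerate and validity cases (single party, $|C^\star|\ge k$, exact support of the first party) is more explicit than the paper's.
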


By simply adding monotonicity to \Cref{thm:pjrp_merge_lq}, we obtain a characterization of PJR+.
\begin{corollary}\label{cor:pjrp_Char}
Let $f$ be a lower quota extension satisfying monotonicity and let $w$ be a witness function of $f$ such that $(f,w)$ is cohesiveness-based and satisfies merge-proofness. Then $f = \pjrp$.
    \label{pjrpluscharwitness}
\end{corollary}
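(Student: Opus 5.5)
The corollary combines a coarsening result and a refinement result, in the same way as \Cref{cor:ejrpluschar}. For the inclusion $\pjrp \subseteq f$, I would apply \Cref{thm:pjrp_merge_lq} directly. Its hypotheses are exactly the ones assumed here: $f$ is a lower quota extension, and $(f,w)$ is cohesiveness-based and merge-proof.

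For the reverse inclusion $f \subseteq \pjrp$, I would invoke \Cref{thm:PJRP_Refinement}. This requires checking its four axioms for $f$.
\begin{itemize}
\item \emph{Monotonicity} is assumed.
\item \emph{Lower quota for party-lists} follows because a lower quota extension satisfies $f(\mathcal I) = \LQ(\mathcal I)$ on party-list instances, which in particular gives $f(\mathcal I) \subseteq \LQ(\mathcal I)$.
\item \emph{Independence of losers} holds for every witness-based notion by \Cref{lem:witness_iol_ifsv}.
\item \emph{\Ifsv} holds by the same lemma, since $(f,w)$ is cohesiveness-based.
\end{itemize}
\Cref{thm:PJRP_Refinement} then yields $f \subseteq \pjrp$, and combining the two inclusions gives $f = \pjrp$.

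Since every ingredient is already stated in the paper, I expect no real obstacle. The only point needing care is confirming that \Cref{lem:witness_iol_ifsv} needs nothing beyond a witness function, plus cohesiveness-basedness for \ifsv; both are available here. Individual discontentment and independence of approval swaps are not required, which is what separates this characterization from the one for EJR+.
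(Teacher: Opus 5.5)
Your proposal is correct and is essentially the paper's argument. The paper leaves this corollary's proof implicit, but it mirrors the proof of \Cref{cor:ejrpluschar}: \Cref{thm:pjrp_merge_lq} gives $\pjrp \subseteq f$, while \Cref{lem:witness_iol_ifsv}, the passage from lower quota extension to lower quota for party-lists, and \Cref{thm:PJRP_Refinement} give $f \subseteq \pjrp$.
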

In \Cref{app:robust} we show that all axioms used to obtain this characterization are indeed needed.

\section{Discussion, Limitations, and Future Work}
We initiated the study of axiomatically characterizing proportionality notions in approval-based multiwinner voting. In particular, we axiomatically studied the notions of EJR+ and PJR+ and obtained axiomatic characterizations of them. We further managed to establish monotonicity as a key distinction between the classical PJR and EJR and the newer PJR+ and EJR+ notions. In \Cref{app:robust} we additionally survey the axiomatic compliance of most remaining proportionality notions, for instance, core stability, FJR, and priceability.
Our work poses several fundamental limitations; overcoming these may yield new insights into proportionality.
Firstly, with our axioms imposed on witness-based notions, we only capture the justified representation strand of proportionality notions.  While %
these capture basic ideas of proportionality, %
they may fail to rule out intuitively unfair outcomes. For example, following the basic intuition of \citet{PeSk20a}, consider the instance with $k = 6$ and committee $\{c_1\}\cup \{c_6,\dots, c_{10}\}  $ depicted in \Cref{fig:laminar}. 
\begin{wrapstuff}[r,type=figure,width=5cm]
    \centering
    \begin{tikzpicture}
    [yscale=0.6,xscale=0.8,
    voter/.style={anchor=south}]
    
        \foreach \i in {1,...,4}
    		\node[voter] at (\i-0.5, -1) {$\i$};
        
        \draw[fill=teal!40, ultra thick] (0, 0) rectangle (2, 1);
        \draw[fill=teal!\clrstr] (0, 1) rectangle (1, 2);
        \draw[fill=teal!\clrstr] (1, 1) rectangle (2, 2);
        \draw[fill=teal!\clrstr] (0,  2) rectangle (1, 3);
        \draw[fill=teal!\clrstr] (1, 2) rectangle (2, 3);
        \draw[fill=magenta!40, ultra thick] (2, 0) rectangle (4, 1);
        \draw[fill=magenta!40,ultra thick ] (2, 1) rectangle (4, 2);
        \draw[fill=magenta!40, ultra thick] (2, 2) rectangle (4, 3);
        \draw[fill=magenta!40, ultra thick] (2, 3) rectangle (4, 4);
        \draw[fill=magenta!40, ultra thick] (2, 4) rectangle (4, 5);
        \node at ( 1, 0.5) {$c_{1}$};
        \node at ( 0.5, 1.5) {$c_{2}$};
        \node at ( 1.5, 1.5) {$c_{3}$};
        \node at ( 0.5, 2.5) {$c_{4}$};
        \node at ( 1.5, 2.5) {$c_{5}$};
        \node at ( 3, 0.5) {$c_{6}$};
        \node at ( 3, 1.5) {$c_{7}$};
        \node at ( 3, 2.5) {$c_{8}$};
        \node at ( 3, 3.5) {$c_{9}$};
        \node at ( 3, 4.5) {$c_{10}$};
    \end{tikzpicture}
    \caption{Example instance for laminar proportionality.}
    \label{fig:laminar}
\end{wrapstuff}
This committee satisfies EJR+, but is not ``intuitively proportional.'' Voters $1$ and $2$ are only represented by a single candidate, while they would deserve $3$. On the other hand, voters $3$ and $4$ are severely overrepresented. This can be interpreted as a failure of the laminar proportionality property, but also as a failure of priceability or related proportionality notions \citep{PeSk20a}. 
Such notions regard both under- and overrepresentation, and therefore require axioms outside of the witness-based framework.
Reconciling this seems to us like an important possibility for future work that could open the door to new, stronger (or maybe just different) proportionality notions, for instance by building on top of laminar proportionality instead of lower quota for party-lists. 
\wrapstuffclear
A second major limitation (even within the family of justified representation notions) stems from the fact that we only considered cohesiveness-based proportionality notions. With this, we are not able to capture several important notions, for instance FJR \citep{PPS21a} or FPJR \citep{KKL25a} and our witness-based characterizations only apply to a smaller (yet still natural) set of notions. One promising candidate for a characterization beyond cohesiveness-based proportionality notions is the sub-core of \citet{MSW22a} which we state here in an equivalent formulation (see \Cref{app:sub-core} for a formal proof of this equivalence). 
\begin{definition}
    A committee $W$ is in the sub-core if for every subset $C' \subseteq C \setminus W$ of candidates, $\ell \in [k]$, and $N' \subseteq \{i \in N \colon A_i \cap C' \neq \emptyset\}$ of size at least $\ell \frac{n}{k}$ it holds that 
    \(
    \left\lvert\bigcup_{i \in N'} (A_i \cap W) \right\rvert > \ell - \lvert C'\rvert.
    \) 
\end{definition}
It is easy to see that the sub-core indeed generalizes PJR+ (indeed PJR+ corresponds to the sub-core when restricted to $\lvert C'\rvert = 1$). It further satisfies all axioms satisfied by PJR+ except for being cohesiveness-based while (just like PJR+) still being implied by priceability. This poses a problem for finding a direct characterization of PJR+ without resorting to the notion being cohesiveness-based. Similarly, we find it an interesting question whether there is any characterization of the core (despite its existence being an open question). 
For instance, is the core the strongest proportionality notion under some reasonable axioms? Note that at least one of these ``reasonable'' axioms must be violated by EJR+ as the core is not a refinement of EJR+.

Thirdly, in our proofs we assume the lower quota for party-lists axiom to be given. While this might seem like a reasonable restriction, it would be interesting to find out if there are any normative justifications for this. Furthermore, lower quota for party-lists is not the only fairness notion in apportionment settings. For instance, instead of using the Hare quota to define lower quota, one could use the stronger Droop quota $\frac{n}{k+1}$ leading to a different yet still achievable family of proportionality notions \citep{CaEl25a}. We would suspect our characterizations to also work when replacing the Hare quota by the Droop quota. Besides this, we would be interested to see if our axiomatic approach could be applied to aid the design of \emph{overrepresentation notions}, a topic severely underrepresented in the approval-based multiwinner voting literature so far (see the workshop paper of \citet{LaNa25a} for some recent progress in the development of overrepresentation notions inspired by the concept of upper quota in apportionment instances). For overrepresentation, it might be necessary to move on from witness-based notions. 

Going beyond approval-based multiwinner voting, we see several possibilities for applying the axiomatic method to analyze fairness or proportionality notions in other areas of computational social choice. Firstly, there are several settings which are very closely related to approval-based multiwinner voting. For instance, in the approval-based apportionment setting of \citet{BGP+24a} EJR and EJR+ coincide and the core is known to be always satisfiable. Perhaps, in this setting a characterization of the core is possible. Other interesting possible avenues for future work include the ordinal preference setting (in which \citet{BrPe23a} highlighted a similar robustness tradeoff between their new ``rank-based'' notions and the classical notion of PSC \citep{Dumm84a}) or the fair mixing/randomized committee selection settings. Particularly, following a recent paper by \citet{SuVo24a} providing an axiomatic difference between their GRP notion and the GFS notion of \citet{ALS+23a} seems like an interesting open question.  

Finally, we would be interested to see if one can apply the axiomatic method to the setting of fair division. In recent work, \citet{GaSh25a} explored the relationship between \emph{22} different fairness notions in the fair division setting. This paints an even more convoluted picture than in the ABC voting literature. Finding axiomatic differences between fairness notions in fair division could contribute to disentangling this picture.

\bibliographystyle{ACM-Reference-Format}
\bibliography{abb, algo_fork,randomcitations}
\appendix
\section{Missing Proofs}
\label{app:missing}
\lqpleq*
\begin{proof}
    It is already known that $\pjr(\mathcal{I}) \subseteq \LQ(\mathcal{I})$ \citep[Proposition~6]{BLS18a}. Together with the known inclusions $\Core(\mathcal{I}) \subseteq \ejr(\mathcal{I}) \subseteq \pjr(\mathcal{I})$ \citep{LaSk22a} and $\ejrp(\mathcal{I}) \subseteq \pjrp(\mathcal{I}) \subseteq \pjr(\mathcal{I})$ \citep{BrPe23a} it remains to show that $\LQ(\mathcal{I}) \subseteq \ejrp(\mathcal{I})$ and $\LQ(\mathcal{I}) \subseteq \Core(\mathcal{I})$. 

    For the first inclusion, consider a party-list instance $\mathcal{I}$ and let $W$ be a committee not satisfying EJR+ in $\mathcal{I}$. Thus, there is a candidate $c \notin W$, $\ell \in [k]$, and an $\ell$-large set of voters $N'\subseteq N_c$ with $\lvert A_i \cap W \rvert < \ell$ for all $i \in N'$. As $\mathcal{I}$ is a party-list instance, there must exist a party $C_x \subseteq C$ with $c \in C_x$ and $A_i = C_x$ for all $i \in N'$. Thus, we have $N'\subseteq N_x$. As $N'$ is $\ell$-large, we get that $\left\lfloor \frac{k \lvert N'\rvert}{n}\right\rfloor \ge \left\lfloor \frac{k \frac{\ell n}{k}}{n}\right\rfloor = \ell$.  As less than $\ell$ candidates are selected from $C_x$ the committee $W$ also does not satisfy lower quota for party-lists in $\mathcal{I}$.

    On the other hand, assume that $W$ does not satisfy core stability. Thus, there exist a set $N' \subseteq N$ deviating together with a set of candidates $C' \subseteq C$ such that $\lvert N'\rvert \ge \frac{\lvert C'\rvert n}{k}$ and $\lvert A_i \cap W \rvert < \lvert A_i \cap C'\rvert$ for all $i \in N'$. As $\mathcal{I}$ is a party-list instance, we can partition $N'$ and $C'$ into sets $N'_1, \dots, N'_t$ and $D_1, \dots, D_t$ such that for any voter $i \in N'_j$ it holds that $A_i \cap C' = D_j$ and hence, by re-enumeration of the parties, $N'_j\subseteq N_j$. By the mediant-inequality, we obtain that 
    \[
    \frac{n}{k} \le \frac{\lvert N'\rvert}{\lvert C'\rvert} = \frac{\sum_{j = 1}^t \lvert N'_j\rvert }{\sum_{j = 1}^t \lvert D_j\rvert} \le \max_{j \in [t]} \frac{\lvert N'_j\rvert}{\lvert D_j\rvert}. 
    \]
    Thus, for the index $j^*$ maximizing the right hand side, we get that $\lvert N'_{j^*}\rvert \ge \lvert D_{j^*}\rvert \frac{n}{k}$ and consequently $\lvert W \cap C_{j^*}\rvert %
	< \lvert D_{j^*}\rvert \le \lfloor k \frac{\lvert N'_{j^*}\rvert}{n}\rfloor \le \lfloor k \frac{n_{j^*}}{n}\rfloor$. This also witnesses a lower quota violation.
\end{proof}
\pjrptopjr*
\begin{proof}We prove this claim by contraposition.
    Let $f$ be a proportionality notion satisfying monotonicity.
    Let there be an instance $\mathcal{I} = (N, A, C, k)$ and committee $W \in f(\mathcal{I})$ not satisfying PJR+. Thus, there exist a candidate $c \notin W$, $\ell \in [k]$, and $\ell$-large set $N'\subseteq N_c$ with $\lvert \bigcup_{i \in N'} (A_i \cap W) \rvert < \ell$. Our goal is to provide an instance on which $f$ does not refine PJR.
    
    For this purpose, consider the approval profile $A'$ with $A'_i = A_i$ for every $i \in N \setminus N'$ and $A'_i = (A_i \setminus W) \cup \bigcup_{j \in N'} (A_j \cap W)$. Intuitively speaking, each $i\in N'$ adds approvals to their ballot until they satisfy $A'_i\cap W =\bigcup_{j\in N'} A_j\cap W$. Hence, we arrive at $A'$ from $A$ by only adding modifications that are valid according to monotonicity. Since $f$ satisfies monotonicity and $W \in f(N, A, C, k)$, we know that $W \in f(N, A', C, k)$. However, now we observe that $N'$ forms an $\ell$-large and $(\lvert \bigcup_{j \in N'} (A_j \cap W)\rvert + 1)$-cohesive group (with $(A_j \cap W) \cup \{c\}$ being the set of candidates they are cohesive over), while less than $\lvert \bigcup_{j \in N'} (A_j \cap W) \rvert + 1$ candidates are selected from commonly approved candidates. This means that $W$ does not satisfy PJR  
    on $\mathcal I' = (N, A', C, k)$. In other words, $f$ does not refine $\pjr$. 
\end{proof}
\ejrprefc*
\begin{proof}
\textbf{Part 2:} 
    To prove that $\ejrp$ satisfies all axioms consider an instance $\mathcal{I} = (N, A, C,k)$ and let $W\in \ejrp(\mathcal I)$. As discussed in \Cref{prop:lq} EJR+ implies lower quota for party-lists.

    Just as in \Cref{thm:PJRP_Refinement}, for all other four axioms, the proof that EJR+ satisfies them follows the same structure: we start with an instance $\mathcal{I} = (N, A, C, k)$ and $W \in \ejrp(\mathcal I)$. Then, we modify $\mathcal{I}$ to an instance $\mathcal{I}' = (N, A', C', k)$ and need to show that $W \in \ejrp(\mathcal{I}')$. To show this, we assume that we are given a $c \in C' \setminus W$, $\ell \in [k]$ and $\ell$-cohesive group $N' \subseteq N_c$ in the instance $\mathcal{I}'$. We now need to prove that $\left\lvert A'_i \cap W \right\rvert \ge \ell$.

    For independence of losers, we again have $ A_i\cap W = A'_i\cap W $, and thus (since $N'$ also satisfies the EJR+ requirements in $\mathcal{I}$) there must exist an $i \in N'$ with $\lvert A'_i \cap W\rvert \ge \ell$. This shows that $W$ still satisfies EJR+.

    For monotonicity, since the voters only added approvals of candidates from $W$, it also holds that $c\in \bigcap_{i\in N'} (A_i \setminus W)$. Thus, since $W\in \ejrp(\mathcal I)$, there exists some $i \in N'$ with $\lvert A_i \cap W\rvert \ge \ell$.  Since $ \lvert A'_i \cap W\rvert \ge  \lvert A_i \cap W\rvert$ it especially holds that $\lvert A'_i \cap W\rvert \ge \ell$. This proves that there are no witnesses of an EJR+ violation in $\mathcal {I'}$. Therefore, $W \in \ejrp(\mathcal {I'}) $, showing that EJR+ satisfies monotonicity.  
    
    For \ifsv, for all voters $i\in N'$ we have $A'_i\not\subseteq W$, which especially implies $A'_i\not\subseteq A_i\cap W$ and hence $A'_i = A_i$.
    Since $W$ satisfies EJR+ in $\mathcal I$, this implies that $\lvert A'_i \cap W\rvert = \lvert A_i \cap W\rvert \ge \ell $ for some $i \in N'$ and hence EJR+ satisfies robustness to fully satisfied voters.

    For independence of approval swaps, we again get that $N'$ also satisfies the EJR+ requirements in $\mathcal{I}$. Hence, we know that $\ell \le \lvert A_i \cap W\rvert = \lvert A'_i \cap W\rvert$ for some $i \in N'$ and $W$ still satisfies EJR+. As a consequence EJR+ satisfies independence of approval swaps.
\end{proof}
\ejrppjrpcon*
\begin{proof}
    By \Cref{prop:pjr_monotonicity}, $f$ refines PJR+. By \Cref{cor:pjrp_ioas_ejrp}, $f$ therefore refines EJR+.

\end{proof}
\wbiol*
\begin{proof}
    To prove independence of losers, we argue by contraposition: 
    Let $\mathcal I =(N,A,C,k)$,
    $\mathcal I'=\mathcal{I}_{\mid C\setminus \{c\}}$ be given instances, such that $W\subseteq C$, $c\notin W$.
    Then, we have to show that $W\notin f(\mathcal I')$ implies $W\notin f(\mathcal I)$.
    As $f$ is witness-based there exists a set $N' \in w(W, \mathcal{I}')$.
    It is easy to see that $(N', W)$ in $\mathcal{I}'$ is locally embedded into $(N', W)$ in $\mathcal{I}$. By definition of witness functions, hence $N' \in w(W, \mathcal{I})$ and therefore $W \notin f(\mathcal{I})$, as desired. 

    For \ifsv we again consider $N' \in w(W, \mathcal{I})$ and assume that there is a voter $j \in N$ with $A_j \subseteq W$. As $f$ is cohesiveness-based, we know that $\bigcap_{i \in N'} A_i \setminus W \neq \emptyset$ and hence $j \notin N'$. Now, consider a second instance $\mathcal{I}'' = (N, A, C'', k)$ with $A''_i = A_i$ for all $i \in N \setminus \{j\}$. As the approvals for every voter in $N'$ stay the same $(N', W)$ in $\mathcal{I}$ is again locally embedded into $(N', W)$ in $\mathcal{I}''$ and thus remains a witness. It again follows that $W \notin f(\mathcal{I}'')$ and we get that $f$ satisfies \ifsv by the contrapositive.
\end{proof}
\pjrpmergeproof*
\begin{proof}
    Clearly, $(\pjrp, \natWitnessPjrp)$, is cohesiveness-based by definition.

    Let $W$ violate PJR+ on some instance $\mathcal{I}=(N,A,C,k)$ and let $N'\subseteq N_c$ be a natural witness for that with $c\notin W$.
    Then, it holds that $\lvert \bigcup_{i\in N'} A_i\cap W \rvert \frac nk < \lvert N'\rvert$. 

    Consider any profile $A'$ as described in \Cref{def:merge_proof}. Our goal is to show that $N'$ is still a natural witness for a PJR+ violation. Clearly, since $A_i\subseteq A_i'$ for all $i \in N'$, we still have $N'\subseteq N_c$ (and $c\notin W$).
    Secondly, since every $A_i'$ is a subset of $\bigcup_{j\in N'} A_j$ for $i\in N'$, it holds that $$\lvert \bigcup_{i\in N'} A'_i\cap W \rvert \le \lvert \bigcup_{i\in N'} (\bigcup_{j\in N'} A_j) \cap W \rvert = \lvert \bigcup_{i\in N'} A_i\cap W \rvert.$$
    This implies  $\lvert \bigcup_{i\in N'} A_i\cap W \rvert \frac nk \le \lvert \bigcup_{i\in N'} A_i\cap W \rvert \frac nk < \lvert N'\rvert$, as desired.
\end{proof}
\pjrpstr*
\begin{proof}
We prove the statement by contraposition. Let $(f,w)$ be cohesiveness-based and satisfy merge-proofness. Further, let $\mathcal I=(N,A,C,k)$ be given and a committee
$W \in \pjrp(\mathcal I)\setminus f(\mathcal I)$. Our goal is to provide a (party-list) instance on which $f$ is not a lower quota extension.

Since $W\notin f(\mathcal I)$, we have $w(\mathcal I,W)\neq\emptyset$ and can pick a witness $N'\in w(\mathcal I,W)$.

Since $(f,w)$ is cohesiveness-based, there exists a candidate
\(
c \in \bigcap_{i\in N'} \left(A_i\right)\setminus W.
\) Just as in the proof for EJR+ we let
$\ell \coloneqq \left\lfloor k\cdot \frac{|N'|}{n}\right\rfloor.$
Additionally, we set $P \coloneqq \bigcup_{i\in N'} A_i$.
Note that $N'\subseteq N_c$ and $|N'|\ge \ell\cdot \frac{n}{k}$ by definition of $\ell$.
 Since $W\in \pjrp(\mathcal I)$, we get that $|P \cap W|\ge \ell$.

Now, we can construct an instance $\mathcal I_2=(N,A',C,k)$ by setting
\[
A'_i \coloneqq
\begin{cases}
P, & i\in N',\\
A_i, & i\in N\setminus N'.
\end{cases}
\]
Since $A_i \subseteq P = \bigcup_{j\in N'}A_j$ for all $i\in N'$, merge-proofness implies that
$N'\in w(\mathcal I_2,W)$.

Next, we build a party-list instance $\mathcal I_3=(N,A'',C^\star,k)$ as follows.
Let
\[
A''_i \coloneqq
\begin{cases}
P= A_i', & i\in N',\\
W\setminus P , & i\in N\setminus N' \text{ and } W\setminus P \neq\emptyset,\\
P, & i\in N\setminus N' \text{ and } W\setminus P=\emptyset.
\end{cases}
\]
Then, with analogous arguments as for the proof of \Cref{thm:ejrp_coarsening}, $\mathcal I_3$ is a party-list instance with parties given by the non-empty sets among $\{C_1,C_2\}$ and $W$ satisfies lower quota on $I_3$.

However, note that $(N',W)$ in $\mathcal I_2$ can be locally embedded into $(N',W)$ in $\mathcal I_3$ (as no voter $i\in N'$ changes her approval ballot). 
Therefore, witness-rationalizability implies $N'\in w(\mathcal I_3,W)$. In other words, $W\notin f(\mathcal I_3)$, but $W$ satisfies lower quota, hence $f$ is not a lower quota extension. \qedhere
\end{proof}
\section{Single Candidate Justifications}\label{sec:strongIoL}
So far, we considered independence of losers, which states that a candidate outside of a committee dropping out from the election should not negatively affect this committee's proportionality.

Vice versa, if a committee is considered to be proportional within two contexts $C'$ and $C''$, it seems reasonable to demand that it should still be proportional when considering all candidates in $C'\cup C''$.

\begin{definition}
    A proportionality notion $f$ satisfies \emph{strong independence of losers}, if it satisfies independence of losers and additionally
    $W \in f(N, A, C', k)\cap f(N, A, C'', k)$ implies that $W \in f(N, A, C'\cup C'', k)$.
\end{definition}

While intuitive, this notion turns out to be much more restrictive than independence of losers. In fact, it characterizes proportionality notions which only rely on a single candidate outside of $W$ to create a violation.

\begin{theorem}
    A proportionality notion $f$ satisfies strong independence of losers, if and only if the following holds.
    $W\in f(N, A,C,k)$ if and only if there exists no $c\in C \setminus W$ such that $W\notin f(N, A,W\cup \{c\},k)$.   
\end{theorem}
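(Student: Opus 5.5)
We prove both directions separately. Throughout, fix $N$, $A$, $k$ and a committee $W$; all candidate sets considered contain $W$, and restrictions of the profile to a candidate set are implicit, as in the definition of $\mathcal{I}_{\mid C'}$.

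\emph{($\Rightarrow$)} Assume $f$ satisfies strong independence of losers. If $W\in f(N,A,C,k)$, then for every $c\in C\setminus W$ we remove all candidates of $C\setminus(W\cup\{c\})$ one at a time. Since $f$ satisfies independence of losers, this gives $W\in f(N,A,W\cup\{c\},k)$. Hence no violating single candidate exists.

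Conversely, suppose $W\in f(N,A,W\cup\{c\},k)$ for every $c\in C\setminus W$. Enumerate $C\setminus W=\{c_1,\dots,c_r\}$. We show by induction on $j$ that $W\in f(N,A,W\cup\{c_1,\dots,c_j\},k)$.
\begin{itemize}
\item The step from $j$ to $j+1$ applies the union clause of strong independence of losers with $C'=W\cup\{c_1,\dots,c_j\}$ and $C''=W\cup\{c_{j+1}\}$.
\item The case $r=0$ is handled separately. Then $C=W$, and the implication ``no violating $c$ $\Rightarrow$ $W\in f$'' would force $W\in f(N,A,W,k)$ for every $W$. We must check whether this holds.
\end{itemize}

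\emph{Obstacle.} The union axiom does not obviously yield $W\in f(N,A,W,k)$. If $f(N,A,W,k)=\emptyset$ is allowed, the characterization as literally stated would fail. So I would either interpret the right-hand condition as also requiring $W\in f(N,A,W,k)$, or note that independence of losers applied from any instance in which $W$ is proportional gives it. The cleanest fix is to read the condition as ranging over $c\in C\setminus W$ together with the base instance $W\cup\emptyset$; that is, to take $c$ over $(C\setminus W)\cup\{\text{none}\}$. I expect this degenerate case to be the only subtle point, and I would handle it by stating that convention explicitly.

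\emph{($\Leftarrow$)} Assume $f$ satisfies the single-candidate condition. For independence of losers, take $W\in f(N,A,C,k)$ and $d\in C\setminus W$. The condition in $C$ says $W\in f(N,A,W\cup\{c\},k)$ for all $c\in C\setminus W$. This holds in particular for all $c\in (C\setminus\{d\})\setminus W$, so the condition applied to $C\setminus\{d\}$ gives $W\in f(N,A,C\setminus\{d\},k)$.

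For the union clause, take $W\in f(C')\cap f(C'')$. Every $c\in (C'\cup C'')\setminus W$ lies in $C'\setminus W$ or in $C''\setminus W$, so $W\in f(N,A,W\cup\{c\},k)$ by the condition in $C'$ or in $C''$. Applying the condition to $C'\cup C''$ gives $W\in f(N,A,C'\cup C'',k)$.

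Both directions are therefore routine bookkeeping over which singletons $W\cup\{c\}$ pass. The only real care is the base case $C=W$ and the convention adopted for it.
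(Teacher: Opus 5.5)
Your ($\Rightarrow$) direction is the paper's argument. Independence of losers gives one implication, and iterating the union clause gives the other; in your induction the base case $j=1$ comes directly from the hypothesis.

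You go beyond the paper in two ways, and both are correct.
\begin{itemize}
\item The paper's proof only shows that strong independence of losers implies the single-candidate condition. Your ($\Leftarrow$) argument, which derives independence of losers and the union clause from that condition, supplies the missing half.
\item Your concern about $C=W$ is well founded, and the paper's ``repeatedly apply'' silently skips this case. The empty notion $\mathrm{empt}$ satisfies strong independence of losers vacuously. Yet for $C=W$ the right-hand side holds vacuously while $W\notin \mathrm{empt}(N,A,W,k)$, so the statement as written is false there.
\end{itemize}

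You should drop your alternative of obtaining $W\in f(N,A,W,k)$ by independence of losers from some instance where $W$ is proportional, because for $\mathrm{empt}$ no such instance exists. Your convention of adding $W\in f(N,A,W,k)$ to the right-hand side is a sound repair and makes the equivalence with strong independence of losers exact.

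An equivalent repair keeps the literal right-hand side and adds to the left-hand side the requirement that $f(N,A,C,k)=\binom{C}{k}$ whenever $\lvert C\rvert=k$. Your ($\Leftarrow$) argument then works verbatim. When $C\setminus\{d\}=W$, the vacuous right-hand side itself forces $W\in f(N,A,W,k)$.
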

\begin{proof}
    If $W\in f(N, A,C,k)$, then by independence of losers, we obtain $W\in f(N, A,W\cup\{c\},k)$ for all $c\in C\setminus W$. 
    Vice versa, if $W\in f(N, A,W\cup\{c\},k)$ for all $c\in C\setminus W$, then we can repeatedly apply strong independence of losers to obtain $W\in f(N, A,C,k)$.
\end{proof}

This notion is related to the idea of rationalizability in choice theory, which concerns and relates choices made on different feasible sets. From this point of view, strong independence of losers can be decomposed into two parts. Contrary to classical choice theory, the \emph{expansion} part of strong indpendence of losers, which propagates the choice of committees from smaller to larger candidate sets, is much more restrictive than the \emph{contraction} part, which simply corresponds to the regular independence of losers.

\section{Analyzing Near Perfect Representation}\label{app:NPR}
To provide a non-trivial refinement of EJR+ satifying all axioms  from \Cref{ejrp-ref-char}, consider the concept of $g$-representativeness,\footnote{Originally called $f$-representativeness. We rename it here to not overload the variable $f$.} introduced by \citet{BrPe23a} (and generalizing the proportionality degree \citep{Skow21a}).
\begin{definition}[$g$-representativeness \citep{BrPe23a}]
Given an instance $\mathcal I=(N,A,C,k)$ and a function $g\colon \mathbb N^+ \to \mathbb R_{\ge 0}$,
a committee $W$ is \emph{$g$-representative (in $\mathcal I$)} if for every $\ell\in[k]$,
every $c\in C\setminus W$, and every group $N'\subseteq N_c$ with $|N'|\ge \ell\frac{n}{k}$, we have
\(
\frac{1}{|N'|}\sum_{i\in N'} |A_i\cap W| \;\ge\; g(\ell).
\)
\end{definition}

Intuitively, $g$-representativeness gives a lower bound on the average utility of any group of agents approving an unselected candidate. For $\varepsilon > 0$ we denote by $g_{\varepsilon}$ the function $ell \mapsto (\ell - 1 + \varepsilon)$.
For a given instance $\mathcal{I}$ we denote by $\prl(\mathcal{I})$ the set of all $g_\varepsilon$-representative committees for some $\varepsilon > 0$ (with NPR standing for \emph{near perfect representativeness}). If a group of voters of size at least $\ell \frac{n}{k}$ approves a common unselected candidate, this guarantees that their average utility is strictly larger than $\ell - 1$.

It is known that the Proportional Approval Voting (PAV) rule always selects committees from $\prl(\mathcal{I})$ \citep{BrPe23a} and hence this notion is always satisfiable. Further, it is easy to see that any committee $ W \in \prl(\mathcal{I})$ is also in $\ejrp(\mathcal{I})$ for any instance $\mathcal{I}$.\footnote{Since for any $\ell$-large group approving a common unselected candidate, the average utility is larger than $\ell$-1 and since utilities are integer valued, there must exist a voter in this group approving at least $\ell$ candidates of the committee.} We can also show that it satisfies the same set of axioms we used in \Cref{ejrp-ref-char} to weakly characterize EJR+ refinements.
\begin{restatable}{observation}{nprobs}
    NPR satisfies independence of losers, monotonicity, \ifsv, independence of approval swaps, and lower quota for party-lists. Further, $\prl \subsetneq \ejrp$.
    \label{obs:npr}
\end{restatable}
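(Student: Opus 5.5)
The observation bundles six claims: five axioms for NPR, the inclusion $\prl \subseteq \ejrp$, and strictness of that inclusion. The inclusion is already given in the footnote: an $\ell$-large group $N' \subseteq N_c$ with $c \notin W$ has average utility above $\ell-1$, so by integrality some member has $\lvert A_i \cap W\rvert \ge \ell$. Lower quota for party-lists then follows from $\prl \subseteq \ejrp$ together with \Cref{prop:lq}. I will check the four remaining axioms in the same template as Part 2 of \Cref{thm:PJRP_Refinement}.

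Fix $W \in \prl(\mathcal I)$, witnessed by some $\varepsilon>0$. Let $\mathcal I'$ be the modified instance and take any $c \in C'\setminus W$, $\ell \in [k]$, and $\ell$-large $N' \subseteq N'_c$ in $\mathcal I'$. I will show $\frac{1}{\lvert N'\rvert}\sum_{i\in N'}\lvert A'_i\cap W\rvert \ge \ell-1+\varepsilon$ using the same $\varepsilon$. Throughout, $n$ and $k$ are unchanged, so $\ell$-largeness means the same thing in both instances.
\begin{itemize}
\item \emph{Independence of losers:} $c$ is present in $\mathcal I$, and ballots restricted to $W$ are unchanged, so the bound transfers directly.
\item \emph{Monotonicity:} approvals outside $W$ are unchanged, so $N' \subseteq N_c$ in $\mathcal I$. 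Utilities only increase, so the average only increases.
\item \emph{Robustness to fully satisfied voters:} every $i \in N'$ approves $c \notin W$, so $A'_i \not\subseteq A_i \cap W$. Hence $A'_i = A_i$ and the bound is inherited.
\item \emph{Independence of approval swaps:} $A'_i \setminus W = A_i \setminus W$ gives $N' \subseteq N_c$ in $\mathcal I$. Also $\lvert A'_i\cap W\rvert=\lvert A_i\cap W\rvert$, so the average is identical.
\end{itemize}

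Strictness needs a small instance where $W$ satisfies EJR+ but some group approving an unelected candidate has average utility at most $\ell-1$. Take $k=2$, $n=2$, $A_1=\{a,b,d\}$, $A_2=\{d\}$, and $W=\{a,b\}$.

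EJR+ holds for the following reasons.
\begin{itemize}
\item For $\ell=1$, any nonempty $N' \subseteq N_d$ works if it contains voter 1 (utility 2). The singleton $\{2\}$ is $1$-large and has utility $0$, however.
\end{itemize}
That singleton breaks EJR+, so I adjust the instance: let voter 2 approve $\{a,d\}$. Now utilities are $2$ and $1$.
\begin{itemize}
\item For $\ell=1$, every nonempty group contains a voter with utility at least $1$.
\item For $\ell=2$, the group must be all of $N$ and must lie in $N_d$, which it does; voter 1 has utility $2\ge 2$.
\end{itemize}
So $W \in \ejrp$.

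NPR fails for the $\ell=2$ group: its average utility is $1.5$, which exceeds $\ell-1=1$. I therefore need a larger gap. Use $k=3$, $n=3$, $A_1=\{a,b,c,d\}$, and $A_2=A_3=\{d\}\cup$ something. Then $\ell=1$ singletons force every voter in $N_d$ to have utility at least $1$. So let $A_2=\{a,d\}$ and $A_3=\{a,d\}$, with $W=\{a,b,c\}$. The checks are as follows.
\begin{itemize}
\item $\ell=3$: the whole group works via voter 1 (utility $3$).
\item $\ell=2$: any pair must contain someone with utility $\ge 2$. The pair $\{2,3\}$ has utilities $1,1$ and $d\in$ both, which is a violation.
\end{itemize}
To repair this, let $A_3=\{d\}\cup W$? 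That makes voter 3 strong and breaks nothing, and then the $\ell=2$ pairs are all fine. The group $\{1,2,3\}$ for $\ell=3$ then has average $(3+1+3)/3=7/3 > 2$, so NPR would hold, which is too strong.

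The main obstacle is therefore finding a concrete separating instance. The general recipe is to make one voter carry the group for the largest $\ell$ while the others have utility exactly $\ell-1$ or less, and to keep small subgroups unable to form smaller witnesses. I would try larger $n$ with $\ell = 2$. Take $k=2$ and $n=4$, with $W=\{a,b\}$ and all voters approving $d$. Let voter 1 approve $a,b$, and voters 2--4 approve only $a$.
\begin{itemize}
\item $\ell=1$: groups of size $\ge 2$ have all members with utility $\ge 1$, fine.
\item $\ell=2$: the only group is $N$, which contains voter 1, fine.
\end{itemize}
So EJR+ holds. The NPR average is $5/4 \le 1+\varepsilon$ for small $\varepsilon$, but we need it for all $\varepsilon>0$ to fail, and indeed $5/4 \ge 1+\varepsilon$ holds for $\varepsilon=1/4$. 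The fix is to use singletons that are not large: the $\ell=2$ constraint genuinely requires an average strictly above $1$, which $5/4$ satisfies. NPR rather fails when the average equals exactly $\ell-1$ plus an amount shrinking with $n$. Since $\prl$ only needs existence of some $\varepsilon$ per instance, a single instance cannot separate the notions through the average bound alone.

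I would thus re-examine the NPR definition. If $\varepsilon$ is quantified per instance, strictness must come from an average exactly $\le \ell-1$, and a group containing one voter with utility $\ge \ell$ plus others with utility $\ge 1$ can only reach that when those others have utility below $\ell-1$. This forces $\ell \ge 3$ in the construction, with weak voters of utility $1 \le \ell-2$ who are not themselves large enough for $\ell=2$ witnesses. The planned instance is $k=3$, $n=6$, $W=\{a,b,e\}$, all voters approving $d$, voter 1 approving all of $W$, and voters 2--6 approving only $a$. For $\ell=3$, the group $N$ has average $(3+5)/6=4/3 \le 2$, so NPR is violated. For $\ell=2$, any group of size $\ge 4$ must contain voter 1 to satisfy EJR+, but the group $\{2,\dots,6\}$ has size $5$ and no such voter, so this still fails EJR+. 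I would tune sizes so that the weak voters number fewer than $2n/k$ while the group average stays at most $\ell-1$.
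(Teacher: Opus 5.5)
Your verification of the five axioms and of $\prl\subseteq\ejrp$ is correct and is essentially the paper's argument. You keep the same $\varepsilon$, note that voters supporting an unselected candidate keep or increase their utility, and get lower quota via \Cref{prop:lq}.

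The gap is the strictness claim $\prl\subsetneq\ejrp$: the proposal never produces an instance with $W\in\ejrp(\mathcal I)\setminus\prl(\mathcal I)$. Each of your concrete instances either satisfies NPR, because every relevant group has average strictly above $\ell-1$, or violates EJR+. Your final one ($k=3$, $n=6$, a single strong voter) is of the second kind, violating EJR+ through the $2$-large group $\{2,\dots,6\}$, as you note yourself. The proposal then ends with a plan to ``tune sizes'' rather than a checked example.

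Some intermediate reasoning is also off. A single instance can separate the notions: NPR fails for every $\varepsilon>0$ as soon as one $\ell$-large group approving a common unselected candidate has average utility at most $\ell-1$. Nor is $\ell\ge 3$ forced, since voters with utility $0$ are allowed as long as fewer than $n/k$ of them support $c$. For instance, take $n=4$, $k=2$, $W=\{a,b\}$, all voters approving $d\notin W$, and utilities $2,1,1,0$. EJR+ holds, since every set of at least two voters contains one with utility at least $1$, and $N$ contains voter $1$. Yet the $2$-large group $N$ has average exactly $1=\ell-1$.

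The paper closes this step with a staircase instance. Take $n=k=4$, an unselected $c_1$ approved by everyone, and $W=\{c_2,\dots,c_5\}$ with nested supports $\{1,2,3,4\}\supset\{1,2,3\}\supset\{1,2\}\supset\{1\}$, so the utilities are $4,3,2,1$. Since $n/k=1$, an $\ell$-large group has at least $\ell$ voters and hence contains one with utility at least $\ell$, so EJR+ holds. But for $\ell=4$ the whole electorate has average $2.5\le 3$, so NPR fails.

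Your own template also works once the carrying is done by several strong voters rather than one. Take $k=3$, $n=6$, $W=\{a,b,e\}$, voters $1$--$3$ approving $W\cup\{d\}$ and voters $4$--$6$ approving $\{a,d\}$. All utilities are at least $1$, and every group of at least $4$ voters contains a voter with utility $3$, so EJR+ holds. Yet the $\ell=3$ average over $N$ is exactly $2$.
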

\begin{proof}
    The axiom satisfaction of NPR follows mostly with arguments analogously to those for EJR+. If a committee satisfies NPR then it also does so after any unchosen candidate is deleted from the instance. Hence, NPR satisfies independence of losers. Adding approvals to candidates inside the committee can only increase the average utility of groups, thus showing that $\prl$ satisfies monotonicity. If a voter changes their original ballot $A_i$ to be entirely contained inside $A_i\cap W$, this voter cannot take part in any deviating coalition, proving that \ifsv is satisfied. Similarly, it is easy to see that if a voter changes her ballot to approve the same number of candidates inside $W$ this does not change whether $W$ satisfies NPR.
\begin{wrapstuff}[r,type=figure,width=5cm]
    \centering
    \begin{tikzpicture}
    [yscale=0.6,xscale=0.8,
    voter/.style={anchor=south}]
    
        \foreach \i in {1,...,4}
    		\node[voter] at (\i-0.5, -1) {$\i$};
        
        \draw[fill=magenta!\clrstr] (0, 0) rectangle (4, 1);
        \draw[fill=teal!\strstr, ultra thick] (0, 1) rectangle (4, 2);
        \draw[fill=teal!\strstr, ultra thick] (0, 2) rectangle (3, 3);
        \draw[fill=teal!\strstr, ultra thick] (0, 3) rectangle (2, 4);
        \draw[fill=teal!\strstr, ultra thick] (0, 4) rectangle (1, 5);

        \node at ( 2, 0.5) {$c_{1}$};
        \node at ( 2, 1.5) {$c_{2}$};
        \node at ( 1.5, 2.5) {$c_{3}$};
        \node at ( 1, 3.5) {$c_{4}$};
        \node at ( 0.5, 4.5) {$c_{5}$};
    \end{tikzpicture}
    \caption{Example for EJR+ and NPR.}
    \label{fig:npr}
\end{wrapstuff}
As an average utility of strictly larger than $\ell - 1$ implies that some voter in the group must approve at least $\ell$ candidates in $W$, any committee satisfying $\prl$ also satisfies EJR+ and thus also lower quota for party-lists.
    Finally, to see that $\prl$ is strictly stronger than EJR+ consider the instance depicted in \Cref{fig:npr} with $k = 4$ and committee $\{c_2, \dots, c_5\}$. This committee indeed satisfies EJR+. For any subset of $i$ voters, at least one voter approves $i$ candidates in the outcome, and hence this group could not witness an EJR+ violation. It, however, does not satisfy $\prl$ as for candidate $c_1$ the group $4$-large group $\{1, \dots, 4\}$ on average approves $\frac{4+3+2+1}{4} = 2.5 \le 4-1$ candidates. \qedhere
    \wrapstuffclear  
\end{proof}

Individual discontentment is nontrivial among witness-based notions; we illustrate this by showing that NPR does not satisfy it.
\begin{restatable}{proposition}{nprwit}
    NPR together with its natural witness function is cohesiveness-based. There does not exist any witness function $w$ of NPR such that $(\prl,w)$ satisfies individual discontentment.
    \label{prop:npr_discon}
\end{restatable}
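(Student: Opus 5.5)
Two claims need proof: NPR with its natural witness function is cohesiveness-based, and no witness function $w$ of NPR makes $(\prl,w)$ satisfy individual discontentment.

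\textbf{Cohesiveness-basedness.} Define the natural witness function as the set of groups $N'$ for which some $c\in C\setminus W$ and $\ell\in[k]$ satisfy $N'\subseteq N_c$, $\lvert N'\rvert\ge \ell\frac nk$, and $\frac{1}{\lvert N'\rvert}\sum_{i\in N'}\lvert A_i\cap W\rvert\le \ell-1$. Since $N$ is finite, there are only finitely many triples $(N',c,\ell)$. So $W$ fails to be $g_\varepsilon$-representative for every $\varepsilon>0$ exactly when some triple has average utility at most $\ell-1$. This gives property i) of \Cref{def:witness_Based}.

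Property ii) follows as in the EJR+ observation. The local embedding preserves $\lvert N'\rvert$ and $n$. It maps $c$ to a common unselected approved candidate of $\hat N'$. It also preserves each voter's value $\lvert A_i\cap W\rvert$, and hence the average. Cohesiveness holds because $c\in\bigcap_{i\in N'}(A_i\setminus W)$ by definition.

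\textbf{No witness function gives individual discontentment.} I argue by contradiction and use the instance of \Cref{fig:npr}: $n=k=4$, $W=\{c_2,\dots,c_5\}$, and voter $i$ approves $c_1$ together with $5-i$ members of $W$. We showed that $W\notin\prl(\mathcal I)$, so any witness function $w$ gives $w(\mathcal I,W)\neq\emptyset$. Pick any $N'\in w(\mathcal I,W)$ and any $j\in N'$.

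By individual discontentment, $N'\in w(\mathcal I^{(j)},W)$, where every voter in $N'$ now holds ballot $A_j$. I aim to choose $j$, or else modify the instance, so that $W\in\prl(\mathcal I^{(j)})$. That would contradict property i).

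The main obstacle is that $N'$ is arbitrary, so I must handle every possible witness at once. The cleanest fix is to take $j$ to be the voter in $N'$ with the most approvals in $W$. Voters outside $N'$ keep their ballots, so every voter ends up with utility at least what they had before.

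I then check NPR directly in $\mathcal I^{(j)}$. Every voter still approves $c_1$, which is the only unselected candidate, so only subgroups of $N_{c_1}=N$ matter. With $n=k=4$, a group of size $s$ only needs average utility above $s-1$.

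The delicate cases are $N'$ equal to all of $N$, or to sets containing voter $1$. If $1\in N'$, all voters of $N'$ get utility $4$. The remaining voters outside $N'$ keep their original utilities $5-i$. I verify that for each size $s$ the $s$ worst-off voters have average greater than $s-1$. If this fails for some $N'$ not containing voter $1$, for example $N'=\{3,4\}$, where the utilities become $(4,3,2,2)$ and the group $\{1,2,3,4\}$ averages $2.75\le 3$, then I strengthen the argument rather than abandon it.

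The strengthening uses the witness property ii) and the fact that $W\notin\prl(\mathcal I^{(j)})$ as well. This yields a new witness $N''$, and I iterate: each step raises some voters' utilities and never lowers any. The process terminates when all voters hold ballot $A_1$ with utility $4$, where NPR clearly holds, since every $s$-large group has average $4>s-1$. That is the required contradiction. To keep the iteration valid, I will confirm that at each stage the new witness can again be merged toward the best voter, and that termination yields an instance in $\prl$ that still carries a witness.
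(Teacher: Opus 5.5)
\textbf{There is a genuine gap in the second part.} Your cohesiveness-basedness argument is fine. The trouble is the direct check of NPR in $\mathcal I^{(j)}$. There the voters outside $N'$ keep their ballots and still support $c_1$, so the check fails for many witnesses. Your claim that it succeeds whenever $1\in N'$ is false. For $N'=\{1\}$ we have $\mathcal I^{(1)}=\mathcal I$. For $N'=\{1,2\}$ or $N'=\{1,3\}$, the whole electorate $N=N_{c_1}$ has average utility $11/4$ resp.\ $3$, which does not exceed $\ell-1=3$.

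The iteration you propose as a rescue is not an argument. You do not control $w$, and individual discontentment itself guarantees that $N'$ is again a witness in $\mathcal I^{(j)}$. Merging $N'$ toward its best voter a second time returns the same instance. So nothing forces utilities to rise, nothing forces you to reach the all-$A_1$ profile, and no contradiction appears. You also never say which local embedding you feed into property ii).

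\textbf{The missing idea} is to use property ii) once, to make the voters outside the witness irrelevant; this is what the paper does. Apply individual discontentment with $j$ the voter of $N'$ with the most approvals in $W$. Then embed $(N',W)$ in $\mathcal I^{(j)}$, via identity maps, into the instance $\hat{\mathcal I}$ in which every $i\notin N'$ approves exactly $W$. In $\hat{\mathcal I}$ we have $N_{c_1}=N'$. Since $N'$ consists of $s$ distinct voters from $\{1,\dots,4\}$, we get $\min N'\le 5-s$, so the common utility is $u=5-\min N'\ge s$. Every $S\subseteq N'$ with $\lvert S\rvert\ge\ell$ then has average $u\ge s\ge\ell>\ell-1$. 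Hence $W\in\prl(\hat{\mathcal I})$ while $N'\in w(\hat{\mathcal I},W)$, contradicting property i). This handles every possible witness at once, with no iteration.

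Alternatively, apply the same embedding to $(N',W)$ in $\mathcal I$ itself. This shows that only $\{2,3,4\}$ and $N$ can be witnesses at all. For these two sets your direct check in $\mathcal I^{(j)}$ does succeed, with utilities $(4,3,3,3)$ and $(4,4,4,4)$. Either way, the embedding that makes the outsiders fully satisfied is the step your proposal lacks.
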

\begin{proof}
    That NPR together with its natural witness function is cohesiveness-based follows from the definition. To see that any witness for NPR violates individual discontentment, let $w$ be any witness for NPR. We will again consider the instance $\mathcal{I}$ depicted in \Cref{fig:npr} for $k = 4$ and with the committee $\{c_2, \dots, c_5\}$. As argued earlier, this committee does not satisfy $\prl$. Let $N' \in w(\mathcal{I}, W)$ be the witness for this.  Let $s\coloneqq \lvert N'\rvert$ and choose $j\in N'$ maximizing $\lvert A_j\cap W\rvert$
    (in the profile of \Cref{fig:npr}, this is the voter with smallest index in $N'$).
    Denote $u\coloneqq |A_j\cap W|$. In \Cref{fig:npr} we have $u=4-j+1$, and since $N'$ contains $s$ distinct voters
    among $\{1,2,3,4\}$, it follows that $u\ge s$.
    Now define a new instance $\hat{\mathcal I}=(N,\hat A,C,k)$ as follows:
    for all $i\in N'$ set $\hat A_i := A'_i$ (so every voter in $N'$ approves exactly $A_j$),
    and for all $i\in N\setminus N'$ set $\hat A_i = W$.
    In particular, in $\hat{\mathcal I}$ the only candidate outside $W$ is still $c_1$, and its supporters are exactly $N'$
    (since voters outside $N'$ approve only candidates in $W$).
    We claim that $W\in \prl(\hat{\mathcal I})$. To see this, consider the only unselected candidate $c_1\notin W$ and any
    $\ell\in[k]$ and any group $S\subseteq N_{c_1}=N'$ with $|S|\ge \ell\cdot \frac{n}{k}=\ell$ (here $n=k=4$).
    Every voter in $N'$ has utility $|A_j\cap W|=u$ in $\hat{\mathcal I}$, hence
    \[
      \frac{1}{|S|}\sum_{i\in S} |\hat A_i\cap W| \;=\; u \;\ge\; s \;\ge\; \ell \;>\; \ell-1,
    \]
    so the near-perfect representativeness condition holds for $c_1$ and all feasible $\ell$ and $S$.
    Therefore $W\in \prl(\hat{\mathcal I})$, and by item~(i) of the witness-function definition we must have
    $w(\hat{\mathcal I},W)=\emptyset$.

    On the other hand, by individual discontentment we have $N'\in w(\mathcal I',W)$.
    Moreover, $(N',W)$ in $\mathcal I'$ can be locally embedded into $(N',W)$ in $\hat{\mathcal I}$ via the identity maps
    (since all ballots of voters in $N'$ and all candidates in $W\cup \bigcup_{i\in N'}A'_i$ are unchanged).
    Hence, by item~(ii) of the witness-function definition, it follows that $N'\in w(\hat{\mathcal I},W)$,
    contradicting $w(\hat{\mathcal I},W)=\emptyset$.
\wrapstuffclear
\end{proof}

\section{Robustness of the Characterizations}
\label{app:robust}
\subsection{Necessity of the Axioms for EJR+}
We again show that both characterizations obtained for EJR+ are robust to leaving out any of the axioms used to show them. 
\subsubsection{Necessity of the Axioms in \Cref{cor:ejrpluschar}}
We begin by showing that the axioms used in \Cref{cor:ejrpluschar} to characterize EJR+ are all necessary. Removing any one of them leads to a notion different from EJR+ that still satisfies all of them. 
\paragraph{Leaving out Cohesiveness-Basedness}
First, without being cohesiveness-based, we consider an notion requiring that committees satisfy EJR+ and a variant of Pareto optimality at the same time. We say that for a given instance $\mathcal{I} = (N, A, C, k)$ a committee $W$ is \emph{externally Pareto optimal} if there is no other committee $W'$ with $W \cap W' = \emptyset$ satisfying $\lvert A_i \cap W\rvert < \lvert A_i \cap W'\rvert$ for every voter $i \in N$. We let $\mathrm{exPareto}(\mathcal{I})$ be the set of externally Pareto optimal committees and consider the ``proportionality'' notion $\ejrppar (\mathcal{I}) \coloneqq \ejrp(\mathcal{I}) \cap \mathrm{exPareto}(\mathcal{I})$. We note that $\ejrppar$ is always satisfiable as the PAV voting rule satisfies both Pareto optimality (and thus also external Pareto optimality) and EJR+ \citep{LaSk22a, BrPe23a}. We define the natural witness of $\ejrppar$ to be any EJR+ witness together with $N$ if the given committee does is not externally Pareto optimal. Indeed this is enough to satisfy all axioms used to characterize EJR+ except for being cohesiveness-based.

\begin{proposition}
    The proportionality notion of $\ejrppar$ with its natural witness is witness-based, satisfies individual discontentment, lower quota extension, monotonicity, and independence of approval swaps. There does not exist a cohesiveness-based witness function for $\ejrppar$. 
\end{proposition}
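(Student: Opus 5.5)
The plan is to verify each property in turn for $\ejrppar$ with its natural witness $w$, defined by $w(\mathcal I,W)=\natWitnessEjrp(\mathcal I,W)\cup\{N\}$ if $W\notin\mathrm{exPareto}(\mathcal I)$ and $w(\mathcal I,W)=\natWitnessEjrp(\mathcal I,W)$ otherwise. At the end I give an instance showing that no cohesiveness-based witness function exists.

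\textbf{Witness-basedness.} Condition (i) holds by construction. For condition (ii), EJR+ witnesses transfer under local embeddings by the Observation. For the witness $N$, note that $|\hat N|=|N|$ and $\Phi_{N'}$ is a bijection, so $\hat N'=\hat N$. The embedding maps $W$ onto $\hat W$ and is injective on $W\cup\bigcup_i A_i$. Hence a dominating disjoint committee $W'$ is mapped to $\Phi_C(W')$, which is disjoint from $\hat W$, has size $k$, and preserves every voter's counts. So $\hat W$ is not externally Pareto optimal, and $\hat N\in w$.

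\textbf{Individual discontentment.} For EJR+ witnesses this is the Proposition. For the witness $N$, set all ballots equal to $A_j$. Voter $j$ strictly prefers $W'$, so $|A_j\cap W'|>|A_j\cap W|$. After the change every voter has ballot $A_j$ and strictly prefers $W'$, so $W$ is still not externally Pareto optimal.

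\textbf{Lower quota extension.} I need $\LQ\subseteq\mathrm{exPareto}$ on party-list instances; the reverse direction follows from Proposition~\ref{prop:lq}. Suppose $W'$ is disjoint from $W$ and strictly better for every voter. Then for each party $x$ with $n_x>0$ we have $|W'\cap C_x|>|W\cap C_x|$, so $W'$ contains more than $|W\cap C_x|$ unselected candidates of $C_x$. I expect this step to be the main obstacle, and I would first try the following counting argument. Summing over parties with voters gives
\[
|W'|>\sum_{x:\,n_x>0}|W\cap C_x|.
\]
The right-hand side equals $k$ minus the seats $W$ gives to parties without voters. To close the gap, I would use that $W'$ cannot give seats to voterless parties usefully, and that lower quota forces $W\cap C_x\neq\emptyset$ only in some cases.

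This may fail. For example, take $k=2$ with two parties $\{a_1,a_2\}$ and $\{b_1,b_2\}$, each supported by one voter. Then $\{a_1,b_1\}$ satisfies LQ but is dominated by $\{a_2,b_2\}$? No: each voter gets $1$ in both, so it is not dominated. In general, a party that receives some candidates from $W$ requires $W'$ to give it strictly more, which is impossible in total: $\sum_x|W'\cap C_x|\le k=\sum_x|W\cap C_x|$, where the sums range over parties with voters. Voterless parties only help $W$, since LQ puts no constraint on them. But if $W$ places candidates in voterless parties, then $W'$ could still dominate. So I would check whether the definition requires strict improvement for all voters while the paper intends the notion anyway. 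The fallback is to restrict the argument to instances where every candidate is in some voter's ballot, which holds for party-lists since the parties are defined via voters' ballots. With that, the sum is exactly $k$ and we get a contradiction.

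\textbf{Monotonicity and approval swaps.} For EJR+ these hold by Theorem~\ref{ejrp-ref-char}. For external Pareto optimality, suppose a dominating $W'$ exists after the change. Adding approvals only within $W$ leaves $|A_i\cap W'|$ unchanged and weakly raises $|A_i\cap W|$, so $W'$ already dominated before the change. Swaps within $W$ leave both counts unchanged, so the same conclusion holds.

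\textbf{No cohesiveness-based witness function.} Take a non-party-list instance in which $W$ satisfies EJR+ but is dominated, for instance with no EJR+-relevant group at all. Then any witness $N'$ must have a common unselected candidate. Using individual discontentment is not available for an arbitrary witness function $w$, so instead I would use local embedding. Embed $(N',W)$ into an instance where the voters outside $N'$ are changed so that $W$ becomes Pareto optimal and still satisfies EJR+. The witness persists, which contradicts condition (i). The choice is to let the outsiders approve only $W$, which kills domination because $W'$ is disjoint from $W$. Then I verify EJR+ and Pareto optimality in the new instance, by picking the original instance so that every set $N'$ with a common unselected candidate remains EJR+-safe.
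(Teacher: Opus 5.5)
Your check of condition (ii) for the witness $N$ fails. $\Phi_C$ is defined only on $W\cup\bigcup_{i\in N}A_i$, so $\Phi_C(W')$ is undefined whenever the dominating committee $W'$ needs candidates nobody approves to reach size $k$. Discarding those leaves fewer than $k$ candidates, and $\hat C\setminus\hat W$ may be too small to fill up. This cannot be repaired, because under the paper's definition ($W'$ a committee of size exactly $k$) the claim is false.

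Take $k=4$, $C=\{w_1,w_2,w_3,w_4,x,y,z,d\}$, $W=\{w_1,\dots,w_4\}$, and six voters with ballots $\{x\}$, $\{y\}$, $\{x,z,w_1\}$, $\{y,z,w_1\}$, $\{x,y,z,w_1,w_2\}$, $\{x,y,z,w_1,w_2\}$.
\begin{itemize}
\item Each of $N_x,N_y,N_z$ has four voters, at most one with $\lvert A_i\cap W\rvert=0$ and at most two with $\lvert A_i\cap W\rvert<2$. Since $n/k=3/2$, an $\ell$-large group has at least $2$, $3$, resp.\ $5$ members for $\ell=1$, $\ell=2$, resp.\ $\ell\ge 3$. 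So $W$ satisfies EJR+.
\item The committee $\{x,y,z,d\}$ gives every voter exactly one more approved candidate than $W$, so $W$ fails $\ejrppar$.
\item After deleting the unapproved $d$, only three candidates lie outside $W$. No committee is disjoint from $W$, so $W$ satisfies $\ejrppar$ there.
\item Yet for any $w$ satisfying (i), every $N^\ast\in w(\mathcal I,W)$ embeds into the reduced instance via identity maps.
\end{itemize}
Hence $\ejrppar$ violates independence of unapproved candidates and has no witness function at all (cf.\ \Cref{prop:wb_char}). The paper's proof only asserts that the natural witness is a witness ``by definition'' and has the same hole.

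The natural fix is to let $W'$ range over sets of at most $k$ candidates disjoint from $W$. PAV still qualifies, since filling such a $W'$ up with members of $W$ gives a committee that strictly Pareto dominates $W$. With that change your argument goes through using $W'\cap\bigcup_{i\in N}A_i$, and your other steps are unaffected.

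Two smaller points.

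\emph{Lower quota.} Your fallback that every party has a voter is not part of the paper's definition of party-list instances; the paper's proof makes the same silent assumption. It is also unnecessary, and your worry about seats in voterless parties is unfounded. Suppose $W'$ is disjoint from $W$ and strictly better for everyone. Then no party $C_x$ with $n_x>0$ lies inside $W$, since its voters would get nothing from $W'$. So the minimum in the lower quota cannot be $\lvert C_x\rvert$, and $\lvert W\cap C_x\rvert\ge\lfloor kn_x/n\rfloor>kn_x/n-1$. Therefore $\lvert W'\rvert\ge\sum_{x\colon n_x>0}(\lvert W\cap C_x\rvert+1)>\sum_{x\colon n_x>0}kn_x/n=k$, a contradiction.

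\emph{No cohesiveness-based witness.} Your re-embedding argument is cleaner than the paper's case check over candidate witnesses in \Cref{fig:ejrppar}. However, you must start from an instance where no unselected candidate is approved by every voter. Only then is $N'\neq N$, so that some outsider exists whose ballot you can shrink into $W$; \Cref{fig:ejrppar} is such an instance. EJR+ then survives automatically, since any group approving a common unselected candidate lies inside $N'$, whose ballots are unchanged. Under the paper's literal definitions this last claim holds vacuously anyway, by the example above.
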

\begin{proof}
We note that the natural witness is by definition a witness for $\ejrppar$. To see that it satisfies individual discontentment, consider any instance $\mathcal{I}$, committee $W$, and $N' \in w(W, \mathcal{I})$. If $N'$ is a witness due to an $\ejrp$ violation, it stays a witness, even after exchanging the approval sets of voters in $N'$. If $N'$ was due to $W$ not being externally Pareto optimal, there must exist a disjoint committee $W'$ with $\lvert A_i \cap W\rvert < \lvert A_i \cap W'\rvert$ for all $i \in N'$. This of course still holds after replacing all approval ballots by that of a single voter. Hence, $\ejrppar$ satisfies individual discontentment.

For the next axioms let $\mathcal{I} = (N, A, C, k)$ be any instance and $W$ be a committee satisfying both $\ejrp$ and external Pareto optimality. 

For monotonicity, consider an approval profile $A' \supseteq A$ with $A'_i \setminus A_i \subseteq W$ for all $i \in N$ and the instance $\mathcal{I}_1 = (N, A', C, k)$. First, since $W$ satisfied EJR+ and since EJR+ is monotone $W$ also satisfies EJR+ in $\mathcal{I}_1$. To see that it is also externally Pareto optimal in $\mathcal{I}_1$, consider any other committee $W'$ with $W \cap W' = \emptyset$. Since $W$ is externally Pareto optimal in $\mathcal{I}$ we get that 
\[
\lvert A'_i \cap W'\rvert = \lvert A_i \cap W'\rvert \le \lvert A_i \cap W\rvert \le \lvert A'_i \cap W\rvert
\] for every $i \in N$ and therefore $W$ is also externally Pareto optimal in $\mathcal{I}_1$. We thus get that $\ejrppar$ is monotone.

For independence of approval swaps\footnote{We note that this is where we require external Pareto optimality and not ``normal'' Pareto optimality, as approval swaps could lead to a Pareto optimal committee becoming non-Pareto optimal.} consider an approval profile $A''$ with $A''_i \setminus W = A_i \setminus W$ and $\lvert A''_i \cap W\rvert = \lvert A_i \cap W\rvert$ for all $i \in N$. Let $\mathcal{I}_2 = (N, A'', C, k)$. Again, since EJR+ is independent of approval swaps, we also know that $W$ satisfies EJR+ in $\mathcal{I}_2$. For external Pareto optimality, let $W'$ be a committee with $W' \cap W = \emptyset$. Since $W$ is externally Pareto optimal in $\mathcal{I}$ we get that 
\[
\lvert A''_i \cap W'\rvert = \lvert A_i \cap W'\rvert \le \lvert A_i \cap W\rvert = \lvert A''_i \cap W\rvert
\] and hence, $W$ also satisfies external Pareto optimality in $\mathcal{I}_2$.

To see that $\ejrppar$ is a lower quota extension, let $W$ be any committee satisfying lower quota on a party-list instance. First, we note that we required every party in a party-list to have at least one voter. Hence, if we have a different committee $W'$ in which at least one voter gets a higher approval score than in $W$ there must exist a different voter getting a lower approval score (as every party is approved by at least one voter). Thus, there can be no other committee Pareto dominating $W$ and thus, every committee satisfying lower quota is also Pareto optimal. As a consequence, $\ejrppar$ is a lower quota extension. 
\begin{wrapstuff}[r,type=figure,width=5cm]
    \centering
    \begin{tikzpicture}
    [yscale=0.6,xscale=0.8,
    voter/.style={anchor=south}]
    
        \foreach \i in {1,...,4}
    		\node[voter] at (\i-0.5, -1) {$\i$};
        
        \draw[fill=teal!\clrstr] (0, 0) rectangle (3, 1);
        \draw[fill=magenta!\strstr, ultra thick] (1, 2) rectangle (2, 3);
        \draw[fill=teal!\clrstr] (1, 1) rectangle (4, 2);
        \draw[fill=magenta!\strstr, ultra thick] (2, 2) rectangle (3, 3);
        
        \node at ( 1.5, 0.5) {$c_{1}$};
        \node at ( 2.5, 1.5) {$c_{2}$};
        \node at ( 2.5, 2.5) {$c_{4}$};
        \node at ( 1.5, 2.5) {$c_{3}$};
    \end{tikzpicture}
    \caption{Example showing that $\ejrppar$ does not admit a cohesiveness-based witness.}
    \label{fig:ejrppar}
\end{wrapstuff}
Finally, to show that $\ejrppar$ does not admit a cohesiveness-based witness, consider the instance depicted in \Cref{fig:ejrppar}. Here, for $k = 2$ the committee $\{c_3, c_4\}$ satisfies EJR+, but not external Pareto optimality. There are (up to symmetry) four different possible cohesiveness-based witnesses: $\{1\}, \{2\}, \{1,2\}, \{1,2,3\}$. Now, consider the instance, in which voter $4$ approves only candidate $c_4$ (instead of $c_2$). In this instance the committee $\{c_3, c_4\}$ is externally Pareto optimal, as the voter $4$ cannot improve. However, independent of which witness out of $\{1\}, \{2\}, \{1,2\}, \{1,2,3\}$ we would have chosen, this witness could have been locally embedded into this new instance, showing that it would not have been a valid witness. Thus, $\ejrppar$ does not admit a cohesiveness-based witness. \qedhere
\wrapstuffclear
\end{proof}
\paragraph{Leaving out Individual Discontentment}
For individual discontentment, we have already shown in \Cref{prop:npr_discon} that $\prl$ does not satisfy it for \emph{any} possible witness functions. However, akin to EJR+, all other axioms are satisfied by $\prl$.
\begin{proposition}\sloppy
    The proportionality notion of $\prl$ with its natural witness is witness-based, cohesiveness-based, satisfies lower quota extension, monotonicity, and independence of approval swaps. There does not exist a witness for $\prl$ satisfying individual discontentment. 
\end{proposition}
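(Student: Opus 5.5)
The plan is to check the listed properties one at a time, relying as much as possible on results already established in the paper. The final clause, that no witness function for $\prl$ satisfies individual discontentment, is exactly \Cref{prop:npr_discon}, so nothing new is needed there. Monotonicity and independence of approval swaps are part of \Cref{obs:npr}.

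\textbf{Witness-based and cohesiveness-based.} The natural witness function of $\prl$ is defined as follows: a set $N'$ is a witness if there is a candidate $c\notin W$ with $N'\subseteq N_c$, and for every $\varepsilon>0$ there is an $\ell\in[k]$ with $|N'|\ge \ell\frac nk$ and $\frac{1}{|N'|}\sum_{i\in N'}|A_i\cap W| < \ell-1+\varepsilon$. Because utilities are integers and $N'$ is finite, the order of quantifiers can be simplified. A committee fails NPR exactly when there exist $\ell$, $c$ and $N'$ with average utility at most $\ell-1$. For such a triple, the threshold $\varepsilon$ only needs to be small enough to separate the finitely many possible averages below $\ell$.

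So I would take the natural witnesses to be the groups $N'\subseteq N_c$, with $c\notin W$, $|N'|\ge \ell\frac nk$ and $\sum_{i\in N'}|A_i\cap W|\le (\ell-1)|N'|$. With this definition, condition (i) of \Cref{def:witness_Based} holds by the finiteness argument above.

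For condition (ii), I would copy the argument given for $\natWitnessEjrp$:
\begin{itemize}
\item the bijection $\Phi_{N'}$ preserves $|N'|$, so $N'$ stays $\ell$-large because $|\hat N|=|N|$;
\item $\Phi_C$ maps $W$ onto $\hat W$ and preserves approvals, so every $|A_i\cap W|$ is preserved, and hence so is the sum;
\item $\Phi_C(c)$ is a commonly approved candidate outside $\hat W$.
\end{itemize}
Cohesiveness-basedness then holds by definition, since $c\in\bigcap_{i\in N'}(A_i\setminus W)$.

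\textbf{Lower quota extension.} The inclusion $\prl\subseteq\LQ$ follows from $\prl\subseteq\ejrp$ together with \Cref{prop:lq}.

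For the reverse inclusion, take a party-list instance and $W\in\LQ(\mathcal I)$. Let $c\notin W$ belong to party $C_x$, and let $N'\subseteq N_c=N_x$ be $\ell$-large. Every voter in $N'$ has utility $|W\cap C_x|$. Since $c\in C_x\setminus W$, we have $|W\cap C_x|<|C_x|$. The lower quota condition therefore forces $|W\cap C_x|\ge\lfloor k n_x/n\rfloor\ge \ell$. The average utility of $N'$ is thus $\ge \ell>\ell-1$, uniformly in $\varepsilon$, so $W\in\prl(\mathcal I)$.

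The main subtle point is the first step: making the natural witness function exactly match membership in $\prl$, given the existential quantifier over $\varepsilon$. The integrality reduction to "average at most $\ell-1$" handles this.
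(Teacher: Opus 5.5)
Your proposal is correct and takes the same route as the paper, which just cites \Cref{obs:npr} and \Cref{prop:npr_discon} and calls the remaining positive properties analogous to EJR+. Your version fills in two steps the paper leaves implicit: an explicit natural witness (via the finiteness/integrality reduction to average utility at most $\ell-1$), and the reverse inclusion $\LQ\subseteq\prl$ on party-list instances, which a lower quota extension requires but \Cref{obs:npr} does not supply, since it only gives $\prl\subseteq\LQ$.
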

\begin{proof}
    The positive results follow from \Cref{obs:npr} and are in general analogous to EJR+. \Cref{prop:npr_discon} shows that $\prl$ has no witness function satisfying individual discontentment.
\end{proof}
\paragraph{Leaving out Lower Quota Extension}
One trivial proportionality notion satisfying all axioms except for being a lower quota extension is the empty notion (together with the witness function selecting all subsets of $N$). A non-trivial proportionality notion satisfying all the axioms is Droop-EJR+ \citep{CaEl25a}, i.e., the EJR+ variant using the strictly stronger Droop quota. 
\begin{definition}[Droop-EJR+]
    A committee $W$ satisfies \emph{Droop-extended justified representation plus (Droop-EJR+)} if for every $\ell \in [k]$ unselected candidate $c \notin W$, and group $N' \subseteq N_c$ of voters of size $\lvert N'\rvert > \ell \frac{n}{k+1}$ there exists a voter $i \in N'$ with $\lvert A_i \cap W\rvert \ge \ell$.
    
\end{definition}
\begin{proposition}
    Droop-EJR+ together with its natural witness is cohesiveness-based, and satisfies individual discontentment, monotonicity, and independence of approval swaps. It is not a lower quota extension.
\end{proposition}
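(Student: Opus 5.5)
The plan is to verify the four positive properties directly, mirroring the EJR+ arguments, and then to give an explicit party-list counterexample for the failure of lower quota extension. Throughout, the natural witness function is $w(\mathcal I, W)$, the set of all nonempty $N'\subseteq N$ for which there exist $\ell\in[k]$ and $c\in C\setminus W$ with $N'\subseteq N_c$, $\lvert N'\rvert > \ell\frac{n}{k+1}$ and $\lvert A_i\cap W\rvert<\ell$ for all $i\in N'$.

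\textbf{Witness-basedness.} Condition (i) holds by construction. For condition (ii), I would repeat the observation showing that $\natWitnessEjrp$ is a witness function. A local embedding preserves three things: $\lvert N'\rvert$, the existence of a common unselected candidate (via $\Phi_C(c)\notin\hat W$), and each value $\lvert A_i\cap W\rvert$. Because $\lvert\hat N\rvert=\lvert N\rvert$, the threshold $\ell\frac{n}{k+1}$ is the same in both instances.

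\textbf{Cohesiveness-basedness and individual discontentment.} Cohesiveness-basedness is immediate from the definition, since $c\in\bigcap_{i\in N'}(A_i\setminus W)$. For individual discontentment, fix $j\in N'$ and replace every ballot in $N'$ by $A_j$. Then $c\in A_j$ still holds, every voter in $N'$ now has utility $\lvert A_j\cap W\rvert<\ell$, and the size of $N'$ is unchanged. So $N'$ remains a natural witness, exactly as in the EJR+ proposition.

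\textbf{Monotonicity and independence of approval swaps.} Both follow word for word from the arguments for EJR+ in \Cref{ejrp-ref-char}. Adding approvals only inside $W$ leaves every $N_c$ with $c\notin W$ unchanged and can only increase the utilities $\lvert A_i\cap W\rvert$. Approval swaps leave $A_i\setminus W$ and $\lvert A_i\cap W\rvert$ unchanged. Since the quota $\frac{n}{k+1}$ plays no role in either argument, no new violation can arise.

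\textbf{Failure of lower quota extension.} The only part needing a concrete choice is to exhibit a party-list instance and a committee in $\LQ$ that violates Droop-EJR+. I would take $n=k=2$ with parties $C_1=\{a_1,a_2\}$ and $C_2=\{b_1,b_2\}$, voter $1$ approving $C_1$ and voter $2$ approving $C_2$, and $W=\{b_1,b_2\}$. Lower quota requires $\lfloor 2\cdot\frac12\rfloor=1$ seat from each party, so $W\notin\LQ$; this instance does not work. Instead I would look for a committee satisfying lower quota that fails Droop-EJR+. Take $n=3$, $k=2$, with voters $1,2$ approving $C_1=\{a_1,a_2\}$ and voter $3$ approving $C_2=\{b\}$. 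Lower quota requires $\lfloor 2\cdot\frac23\rfloor=1$ seat for $C_1$ and $0$ seats for $C_2$, so $W=\{a_1,b\}\in\LQ$. Now take $\ell=2$, $c=a_2$ and $N'=\{1,2\}$. We have $\lvert N'\rvert=2>2\cdot\frac{3}{3}$? No: $2\cdot\frac33=2$, which is not strictly smaller than $2$, so this fails.

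The hard step is therefore tuning the numbers so that $\ell\frac{n}{k+1}<\lvert N'\rvert<\ell\frac nk$. A working choice is $n=5$, $k=4$, with four voters approving $C_1=\{a_1,\dots,a_4\}$ and one voter approving $C_2=\{b\}$. Lower quota gives $\lfloor 4\cdot\frac45\rfloor=3$ seats for $C_1$, so $W=\{a_1,a_2,a_3,b\}\in\LQ$. For the violation take $\ell=4$, $c=a_4$ and $N'$ the four $C_1$-voters. Then $\lvert N'\rvert=4=\ell\frac{n}{k+1}$, which is not strict, so this also fails. I would therefore settle on a larger instance: $n=7$, $k=3$, with five voters approving $C_1=\{a_1,a_2,a_3\}$ and two voters approving $C_2=\{b\}$. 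Lower quota gives $\lfloor 3\cdot\frac57\rfloor=2$ seats for $C_1$, so $W=\{a_1,a_2,b\}\in\LQ$. With $\ell=3$, $c=a_3$ and $N'$ the five $C_1$-voters, we get $5>3\cdot\frac74=5.25$? This is false as well. Continuing the search, I expect to find suitable small parameters by ensuring $\lvert N'\rvert$ lies strictly between $\ell\frac{n}{k+1}$ and $\ell\frac nk$.
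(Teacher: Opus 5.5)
Your positive part is fine and matches the paper, which only says these properties follow as for EJR+. Your remark that $\lvert\hat N\rvert=\lvert N\rvert$ and the common $k$ keep the threshold $\ell\frac{n}{k+1}$ invariant under local embeddings is exactly what witness-basedness needs.

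The gap is the negative claim. To show Droop-EJR+ is not a lower quota extension, you must exhibit a party-list instance $\mathcal I$ and a committee in $\LQ(\mathcal I)$ that violates Droop-EJR+. Your proposal never does this. All three of your attempts fail, as you note yourself, and ending with the expectation that further search will succeed does not establish the statement.

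The missing idea is to look for the violation in the \emph{small} party with $\ell=1$, not in the large party. Your window condition $\ell\frac{n}{k+1}<n_x<\ell\frac{n}{k}$ is correct; just apply it with $\ell=1$. If $\frac{n}{k+1}<n_x<\frac{n}{k}$, lower quota lets party $x$ receive zero seats, since $\lfloor k\frac{n_x}{n}\rfloor=0$, provided the other parties can fill the committee. With zero seats, its $n_x>\frac{n}{k+1}$ voters all approve an unselected candidate of $C_x$ and have utility $0<1$, so they witness a Droop-EJR+ violation.

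Your own fourth instance already works once you change the committee. Take $n=7$, $k=3$, five voters on $C_1=\{a_1,a_2,a_3\}$, two voters on $C_2=\{b\}$, and $W=\{a_1,a_2,a_3\}$. Lower quota requires only $\lfloor 3\cdot\frac57\rfloor=2$ seats for $C_1$ and $\lfloor 3\cdot\frac27\rfloor=0$ for $C_2$, so $W$ satisfies it. But the two $C_2$-voters approve $b\notin W$, have utility $0<1$, and satisfy $2>1\cdot\frac74$, so $W$ violates Droop-EJR+.

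The paper uses the same idea with $n=5$, $k=2$, parties of $3$ and $2$ voters, and both seats given to the first party. The second party's $2$ voters exceed the Droop quota $\frac53$, while lower quota demands only $\lfloor 2\cdot\frac25\rfloor=0$ seats for them.
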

\begin{proof}
    The positive results follow analogously to EJR+(with the Hare quota). 
To see that Droop-EJR+ is not a lower quota extension, consider a party-list instance with two parties $C_1$ and $C_2$ with the first party receiving $3$ votes and the second part $2$ votes. Here, for $k = 2$, the Droop quota is $\frac{5}{3} < 2$ while the Hare quota is $\frac{5}{2} > 2$. Hence, allocating $2$ seats to the first party satisfies lower quota, but violates Droop-EJR+.
\end{proof}
We note that this also shows that replacing $f$ being a lower quota extension by the weaker property of lower-quota for party-lists does not lead to a characterization of EJR+.
\paragraph{Leaving out Monotonicity}
For monotonicity consider the following variant of EJR+. 
\begin{definition}
     A committee $W$ satisfies equal-EJR+ if for every $\ell \in [k]$, unselected candidate $c \notin W$, and $\ell$-large group $N' \subseteq N_c$ of voters with $\lvert A_i \cap W\rvert = \lvert A_j \cap W\rvert$ for all $i, j\in N'$ it holds that $\lvert A_i \cap W\rvert \ge \ell$ for all $i \in N'$.
\end{definition}
Intuitively, for equal-EJR+ adding approvals to a given committee can lead to new witnesses and hence, it does not satisfy monotonicity.
\begin{proposition}
    Equal-EJR+ together with its natural witness is cohesiveness-based, and satisfies individual discontentment, independence of approval swaps, and is a lower quota extension. It does not satisfy monotonicity.
\end{proposition}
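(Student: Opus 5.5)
We have to establish five things about equal-EJR+ with its natural witness function. That function assigns to $(\mathcal I,W)$ every group $N'$ for which some $\ell\in[k]$ and $c\notin W$ satisfy four conditions: $N'\subseteq N_c$, $|N'|\ge \ell\frac nk$, all voters in $N'$ have the same utility $u=|A_i\cap W|$, and $u<\ell$. We first check that this is a valid witness function. Condition (i) holds by definition. For condition (ii), take a local embedding into an instance with the same $n$. It preserves membership in the support of $c$ (via $\Phi_C(c)\notin\hat W$), the size of $N'$, and each voter's utility $|A_i\cap W|$, since $\Phi_C$ maps $W$ bijectively onto $\hat W$ and preserves approvals. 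This mirrors the argument for $\natWitnessEjrp$. Cohesiveness-basedness is immediate, because $c\in\bigcap_{i\in N'}(A_i\setminus W)$.

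For individual discontentment, we replace every ballot in $N'$ by $A_j$. All voters in $N'$ then still approve $c$ and now share the common utility $|A_j\cap W|=u<\ell$, while the size condition is unchanged. For independence of approval swaps, ballots outside $W$ and all utilities are preserved, so the set of natural witnesses is literally unchanged and satisfaction is preserved.

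For lower quota extension, we show two inclusions on a party-list instance. For $\LQ(\mathcal I)\subseteq$ equal-EJR+, we use that equal-EJR+ is coarser than EJR+: any equal-EJR+ witness is an EJR+ witness, since every voter in it has utility below $\ell$. Combined with \Cref{prop:lq}, this gives $\LQ\subseteq\ejrp\subseteq$ equal-EJR+. For the reverse inclusion, suppose $W$ violates lower quota for party $C_x$. Take $N'=N_x$ and $\ell=\lfloor kn_x/n\rfloor$. Then $|W\cap C_x|<\min(\ell,|C_x|)$, so some $c\in C_x\setminus W$ exists. All voters in $N_x$ have identical ballots and hence equal utility $|W\cap C_x|<\ell$, and $|N_x|\ge \ell\frac nk$. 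So $N_x$ is an equal-EJR+ witness. This reverse inclusion is the step where the equal-utility restriction might have caused trouble, but party-list voters are homogeneous, so it goes through.

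The main work is a counterexample to monotonicity, where adding approvals equalizes utilities and creates a witness. Take $n=2$, $k=2$, voter $1$ with $A_1=\{c_1,c_2,c_3\}$, voter $2$ with $A_2=\{c_1,c_2\}$, and $W=\{c_2,c_3\}$, so the only unselected candidate is $c_1$. The utilities are $2$ and $1$, and we check that $W$ satisfies equal-EJR+. For $\ell=2$ the only large group is $\{1,2\}$, which is not equal-utility. For $\ell=1$ the singletons $\{1\}$ and $\{2\}$ have utilities $2$ and $1$, both at least $1$, and $\{1,2\}$ is not equal-utility. Now let voter $2$ additionally approve $c_3\in W$. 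Then $\{1,2\}\subseteq N_{c_1}$ is $2$-large with common utility $2\ge\ell=2$, so there is still no violation. We therefore need to adjust the construction: the added approvals should lower the maximum of the group relative to a threshold. A concrete choice is $n=4$, $k=4$, with voters $1,2$ approving $c_1$, voter $1$ approving one winner and voter $2$ none, and the remaining winners approved by voters $3,4$. Then $\{1,2\}$ is $2$-large, but it is not equal-utility; the singleton $\{2\}$ is $1$-large with $u=0<1$, however, so this already violates equal-EJR+. To avoid that, we make the singletons too small by taking $n=6$, $k=4$ ($q=1.5$) with $N'=\{1,2,3\}$ approving $c_1$ and utilities $1,1,0$. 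Pairs are only $1$-large, so a pair with equal utility $1$ is fine, and a pair with utility $0$ is impossible because only voter $3$ has utility $0$. The full triple is not equal-utility, so $W$ satisfies equal-EJR+. Adding to voter $3$'s ballot the winner approved by voters $1,2$ makes the triple equal with utility $1<2$, while the triple is $2$-large ($3\ge 2\cdot1.5$). This gives a new witness, so monotonicity fails. The remaining three winners go to voters $4,5,6$, which keeps them irrelevant as witnesses.
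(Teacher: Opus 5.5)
Your proof is correct and follows essentially the paper's route. Each positive property is read off the definition of the natural witness. Lower quota extension rests on the fact that all supporters of a party have the same utility. Monotonicity fails because adding an approval inside $W$ equalizes utilities in a group that was previously exempt, turning it into a witness.

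Your final counterexample ($n=6$, $k=4$, utilities $1,1,0$ among the supporters of $c_1$) is a padded version of the paper's $n=3$, $k=2$ instance with the same quota $1.5$. You should delete the two abandoned attempts from the write-up.
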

\begin{proof}
    Throughout the proof let $\mathcal{I} = (N,A, C, k)$ be any instance.
    Let $w$ be the natural witness of equal-EJR+. First, $w$ being cohesiveness-based follows from its definition. To see that $w$ satisfies individual discontentment let $N' \in w(\mathcal I, W)$. By the definition of equal-EJR+ this set of voters stays a witness, even after replacing all of them by a copy of a single one.

    For independence of approval swaps, we observe that any witness for equal-EJR+ stays a witness after any approval swap, as the number of approved candidates on the committee stays the same. Further, on party-list instances, any set of voters approving a given party must by definition have the same number of approved candidates on the committee, and thus on party-list instances equal-EJR+ and EJR+ are equivalent (and thus, equal-EJR+ is also a lower quota extension). \begin{wrapstuff}[r,type=figure,width=5cm]
    \centering
    \begin{tikzpicture}
    [yscale=0.6,xscale=0.8,
    voter/.style={anchor=south}]
    
        \foreach \i in {1,...,3}
    		\node[voter] at (\i-0.5, -1) {$\i$};
        
        \draw[fill=magenta!\strstr, ultra thick] (0, 0) rectangle (1, 1);
        \draw[fill=magenta!\strstr, ultra thick] (1, 0) rectangle (2, 1);
        \draw[fill=teal!\clrstr] (0, 1) rectangle (3, 2);
        
        \node at ( 0.5, 0.5) {$c_{1}$};
        \node at ( 1.5, 0.5) {$c_{2}$};
        \node at ( 1.5, 1.5) {$c_{3}$};
    \end{tikzpicture}
    \caption{Example showing that equal-EJR+ does not satisfy monotonicity.}
    \label{fig:equalejrp}
\end{wrapstuff}
To see that equal-EJR+ fails monotonicity consider the instance displayed in \Cref{fig:equalejrp}. Here, for $k = 2$ the committee $\{c_1, c_2\}$ satisfies equal-EJR+. However, if voter $3$ adds an approval to candidate $c_2$ the committee no longer satisfies equal-EJR+ as now voters $1,2$, and $3$ could deviate with candidate $c_3$. This same instance also witnesses that equal-EJR+ is a weaker notion than EJR+ as the committee $\{c_1, c_2\}$ does not satisfy EJR+.\qedhere

\wrapstuffclear
    
\end{proof}
\paragraph{Leaving out Independence of Approval Swaps}
Finally, for independence of approval-swaps, we note that PJR+ satisfies all remaining axioms.

\subsubsection{Necessity of the Axioms in \Cref{ejrp-ref-char}}
In the previous part, we have already shown that removing either monotonicity or independence of approval-swaps leads to a strictly weaker axiom than EJR+ (equal-EJR+ or PJR+ respectively). We continue with the remaining axioms used to show \Cref{ejrp-ref-char}.
\paragraph{Leaving out \IFSV.}

Without \ifsv, we can define a weakening of EJR+ which satisfies the remaining three axioms. Instead of considering all large enough subsets $N'\subseteq N_c$ as potential violation witnesses, it only considers the supports $N_c$ for $c\in C\setminus W$.
\begin{definition}[Weak-EJR+]
    A committee $W$ satisfies weak-EJR+ if there is no candidate $c \in C\setminus W$ and $\ell \in [k]$ such that $\lvert N_c\rvert \ge \ell \cdot\frac{n}{k}$ and 
    \[
    \left\lvert A_i \cap W \right\rvert < \ell \quad \text{ for all } i \in N_c.
    \]
\end{definition}
Indeed, weak-EJR+ satisfies monotonicity, independence of losers, and lower quota, but not \ifsv. 
\begin{proposition}
    The proportionality notion of weak-EJR+ satisfies monotonicity, independence of losers, independence of approval swaps, and lower quota. It does not satisfy \ifsv. There exists an instance in which weak-EJR+ selects strictly more committees than EJR+.
\end{proposition}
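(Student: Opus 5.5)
The claim has four positive parts and two negative parts. The positive parts are monotonicity, independence of losers, independence of approval swaps, and lower quota. The negative parts are a counterexample to \ifsv and an instance where weak-EJR+ is strictly coarser than EJR+. I would handle each positive axiom directly from the definition: fix a committee $W$ satisfying weak-EJR+ in $\mathcal I$, modify the instance as the axiom allows, and show that no violating pair $(c,\ell)$ can appear in the new instance.

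\textbf{Positive axioms.}
\begin{itemize}
\item For monotonicity and independence of approval swaps, the key observation is that both modifications leave $A_i\setminus W$ unchanged for every voter. Hence $N_c$ is unchanged for every $c\notin W$. Under monotonicity $\lvert A_i\cap W\rvert$ can only weakly increase, and under approval swaps it stays the same. So any violation $(c,\ell)$ in the new instance would already have been a violation in $\mathcal I$.
\item For independence of losers, deleting some $c'\notin W$ does not affect $n$, $k$, or $N_c$ and $\lvert A_i\cap W\rvert$ for the remaining $c\notin W$. Again, any violation would already have existed.
\item For lower quota on a party-list instance, suppose party $C_x$ receives fewer than $\min(\lvert C_x\rvert,\lfloor kn_x/n\rfloor)$ seats. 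Then some $c\in C_x\setminus W$ exists, and $N_c=N_x$. Taking $\ell=\lfloor kn_x/n\rfloor\ge 1$ gives $\lvert N_c\rvert\ge \ell n/k$. Every supporter of $c$ has $\lvert A_i\cap W\rvert=\lvert C_x\cap W\rvert<\ell$, so this is a weak-EJR+ violation.
\end{itemize}

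\textbf{Failure of \ifsv.} The idea is that a large supporter set $N_c$ can contain a well-represented voter, and that voter protects $W$ under weak-EJR+ even though a smaller group is underrepresented. Making that voter fully satisfied removes them from $N_c$ and exposes the violation.

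A concrete instance uses $n=k=2$: voter $1$ approves $\{c,a_1,a_2\}$, voter $2$ approves $\{c\}$, and $W=\{a_1,a_2\}$.
\begin{itemize}
\item For $\ell=1,2$, we have $\lvert N_c\rvert=2\ge \ell$, and voter $1$ has utility $2\ge \ell$. So $W$ satisfies weak-EJR+.
\item Now let voter $1$ restrict their ballot to $\{a_1,a_2\}\subseteq A_1\cap W$. Then $N_c=\{2\}$, which is $1$-large, and voter $2$ has utility $0<1$. This is a violation, so \ifsv fails.
\end{itemize}

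\textbf{Strict coarsening of EJR+.} The same original instance serves. The group $\{2\}\subseteq N_c$ is $1$-large and voter $2$ approves no member of $W$, so $W$ violates EJR+ while satisfying weak-EJR+.

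\textbf{Main check.} The step needing care is confirming that the counterexample instance is valid. It requires $m\ge k$, which holds since $m=3\ge 2$, and no other unelected candidate may create a violation; $c$ is the only candidate outside $W$. All remaining steps are routine comparisons of the supporter sets $N_c$ and the utilities $\lvert A_i\cap W\rvert$.
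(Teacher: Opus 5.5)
Your proof is correct and follows essentially the same route as the paper. You prove monotonicity, independence of approval swaps and independence of losers by noting that $N_c$ is unchanged and utilities weakly increase, and you refute robustness to fully satisfied voters with a well-represented supporter of the unelected candidate who shields $W$ until made fully satisfied; your $n=k=2$ instance is a smaller version of the paper's $n=k=4$ one. The only minor difference is that you derive lower quota directly from an under-served party, whereas the paper notes that on party-lists every EJR+ witness extends to the full support $N_c$, so weak-EJR+ coincides with EJR+ there (which also makes it a lower quota extension).
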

\begin{proof}
    Independence of losers follows in the same fashion as for EJR+: deleting candidates outside the committee can only create less deviating coalitions. For independence of approval swaps, we observe that the only relevant part about the committee $W$ is how many candidates each voter approves in it. Similarly, adding approvals to candidates inside the committee cannot create any new deviating coalitions, and thus, weak-EJR+ also satisfies monotonicity.
    
    For party-list instances, a natural witness of $\ejrp$, say $N'\subseteq N_c$ for some $c\in C\setminus W$, can always be extended to include the entire set of voters $N_c$ voting for the party containing $c$. Thus, weak-EJR+ and EJR+ are equivalent for party-list instances (and therefore weak-EJR+ also satisfies lower quota for party-lists). 
\begin{wrapstuff}[r,type=figure,width=5cm]
    \centering
    \begin{tikzpicture}
    [yscale=0.6,xscale=0.8,
    voter/.style={anchor=south}]
    
        \foreach \i in {1,...,4}
    		\node[voter] at (\i-0.5, -1) {$\i$};
        
        \draw[fill=teal!\clrstr] (0, 0) rectangle (3, 1);
        \draw[fill=magenta!\strstr, ultra thick] (2, 1) rectangle (4, 2);
        \draw[fill=magenta!\strstr, ultra thick] (2, 2) rectangle (4, 3);
        \draw[fill=magenta!\strstr, ultra thick] (2, 3) rectangle (4, 4);
        \draw[fill=magenta!\strstr, ultra thick] (2, 4) rectangle (4, 5);
        
        \node at ( 1.5, 0.5) {$c_{1}$};
        
        \node at ( 3, 1.5) {$c_{2}$};
        \node at ( 3, 2.5) {$c_{3}$};
        \node at ( 3, 3.5) {$c_{4}$};
        \node at ( 3, 4.5) {$c_{5}$};
    \end{tikzpicture}
    \caption{Weak-EJR+ violates \ifsv.}
    \label{fig:weakEJRP}
\end{wrapstuff}
    To see that weak-EJR+ does not satisfy \ifsv consider the instance and committee depicted in \Cref{fig:weakEJRP} for $k = 4$. The committee $\{c_2, \dots, c_5\}$ satisfies weak-EJR+, as the voters approving candidate $c_1$ cannot deviate due to voter $3$. If, however, voter $3$ would change their preferences to only approve $\{c_2, \dots, c_5\}$ this committee would no longer satisfy weak-EJR+, as now voters $1$ and $2$ together could deviate with candidate $c_1$. This example, additionally also shows that weak-EJR+ is strictly weaker than EJR+ as the committee does not satisfy EJR+ due to the voters $1$ and $2$. \qedhere
    \wrapstuffclear
\end{proof}
\paragraph{Leaving out Independence of Losers} Secondly, we consider independence of losers. Here, we define a weakening of EJR+, we term difference-EJR+. Difference-EJR+ requires that the candidates \emph{outside} the committee must be the same.
\begin{definition}[difference-EJR+]
    A committee $W$ satisfies difference-EJR+ if there is no candidate $c \notin W$, $\ell \in [k]$, and group $N' \subseteq N_c$ such that $\lvert N'\rvert \ge \ell\frac{n}{k}$, $A_i \setminus W = A_j \setminus W$ for all $i,j \in N'$ and 
    \[
    \left\lvert \bigcup_{i \in N'} A_i \cap W \right\rvert < \ell.
    \]
\end{definition}
Indeed, difference-EJR+ satisfies monotonicity, \ifsv, independence of approval swaps, and lower quota, but not independence of losers. 
\begin{proposition}
    The proportionality notion of difference-EJR+ satisfies monotonicity, \ifsv,independence of approval swaps, and lower quota. It does not satisfy independence of losers. There exists an instance in which difference-EJR+ selects strictly more committees than EJR+.
\end{proposition}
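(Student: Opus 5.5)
The plan is to check each claimed property directly from the definition of difference-EJR+, reusing the arguments that the paper gives for EJR+, and then to exhibit one instance that both breaks independence of losers and separates difference-EJR+ from EJR+. Throughout, a violation consists of $c\notin W$, $\ell\in[k]$, and an $\ell$-large group $N'\subseteq N_c$ with identical ballots outside $W$ that is underrepresented.

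\emph{A caveat about the definition.} For independence of approval swaps, the underrepresentation condition has to be read per voter, that is, $\lvert A_i\cap W\rvert<\ell$ for all $i\in N'$, as in EJR+. With the literal union condition the property can fail. If the voters in $N'$ approve disjoint parts of $W$, swapping their ballots onto a common prefix of the same sizes shrinks the union. I will therefore work with the per-voter version and flag this in the proof.

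\textbf{Positive properties.}
\begin{itemize}
\item Monotonicity: adding approvals only for members of $W$ leaves every $A_i\setminus W$ unchanged. Hence $N_c$ for $c\notin W$ and the equality constraint $A_i\setminus W=A_j\setminus W$ are unchanged, while each $\lvert A_i\cap W\rvert$ can only grow. So no new violating group appears.
\item \Ifsv: a voter who restricts to $A'_i\subseteq A_i\cap W$ has $A'_i\setminus W=\emptyset$ and so lies in no $N_c$ with $c\notin W$. Every potential violating group therefore consists of unchanged voters, and it would already have been violating before the change.
\item Independence of approval swaps: both $A_i\setminus W$ and $\lvert A_i\cap W\rvert$ are preserved. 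Since the violation condition depends only on these quantities, the set of violating groups is unchanged.
\item Lower quota for party-lists: in a party-list instance, all supporters of an unselected $c$ belong to the same party and have identical ballots. The extra equality requirement is then automatic, so difference-EJR+ coincides with EJR+ there. By \Cref{prop:lq} it selects exactly $\LQ(\mathcal I)$.
\end{itemize}

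\textbf{Counterexample to independence of losers.} Take $n=k=\ldots$, concretely $n=4$ and $k=2$, with candidates $c,d,e,f$ and committee $W=\{e,f\}$. Voters $3$ and $4$ approve $\{e,f\}$, voter $1$ approves $\{c,d\}$, and voter $2$ approves $\{c\}$. Here $n/k=2$.
\begin{itemize}
\item The only groups that could be violating lie inside $N_c=\{1,2\}$ or $N_d=\{1\}$.
\item Singletons are not $1$-large, since they have size $1<2$.
\item The pair $\{1,2\}$ is excluded because $A_1\setminus W=\{c,d\}\neq\{c\}=A_2\setminus W$.
\end{itemize}
Hence $W$ satisfies difference-EJR+.

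Now remove the loser $d$. Then $\{1,2\}\subseteq N_c$ is $1$-large, both voters have the same ballot outside $W$, and both have $\lvert A_i\cap W\rvert=0<1$. So $W$ violates difference-EJR+ in the reduced instance, and independence of losers fails.

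The same original instance also shows that difference-EJR+ is strictly coarser than EJR+. Here $W$ violates EJR+ through $\{1,2\}$, $c$ and $\ell=1$, but it satisfies difference-EJR+.

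\textbf{Main obstacle.} The only delicate step is the independence of approval swaps issue described above. The remaining steps are routine adaptations of the EJR+ arguments.
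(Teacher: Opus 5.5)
Your proof is correct and follows the paper's route. The positive properties are checked exactly as for EJR+, with party-list instances making the equal-outside-ballots condition automatic. Your independence-of-losers counterexample is, up to renaming, the paper's instance in \Cref{fig:differenceEJR}.

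Your caveat is also well founded. The displayed definition uses the union condition copied from difference-PJR+, and under that condition approval swaps can indeed create violations. The paper's one-line argument for independence of approval swaps only works for the per-voter reading $\lvert A_i\cap W\rvert<\ell$ that you adopt.
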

\begin{proof}
    Monotonicity, independence of approval swaps, and \ifsv again follow the same way as for EJR+. Adding approvals and letting voters switch their approvals can only create less deviating coalitions. 
            \begin{wrapstuff}[r,type=figure,width=5cm]
    \centering
    \begin{tikzpicture}
    [yscale=0.6,xscale=0.8,
    voter/.style={anchor=south}]
    
        \foreach \i in {1,...,4}
    		\node[voter] at (\i-0.5, -1) {$\i$};
        
        \draw[fill=teal!\clrstr] (0, 0) rectangle (2, 1);
        \draw[fill=teal!\clrstr] (0, 1) rectangle (1, 2);
        \draw[fill=magenta!\strstr, ultra thick] (2, 0) rectangle (4, 1);
        \draw[fill=magenta!\strstr, ultra thick] (2, 1) rectangle (4, 2);
        
        \node at ( 1, 0.5) {$c_{1}$};
        \node at ( 0.5, 1.5) {$c_{2}$};
        \node at ( 3, 0.5) {$c_{3}$};
        \node at ( 3, 1.5) {$c_{4}$};
    \end{tikzpicture}
    \caption{Example for EJR+ and difference-EJR+.}
    \label{fig:differenceEJR}
\end{wrapstuff}
    Similarly, as for earlier axioms, for party-list instances, difference-EJR+ and EJR+ are equivalent as the requirement $A_i \setminus W = A_j \setminus W$ is automatically satisfied for voters voting for the same party. 
To see that difference-EJR+ indeed does not satisfy independence of losers consider the instance depicted in \Cref{fig:differenceEJR} for $k = 2$. 
Here, the committee $\{c_3, c_4\}$ satisfies difference-EJR+, as voters $1$ and $2$ together cannot deviate with $c_1$. However, if candidate $c_2$ leaves the instance, voters $1$ and $2$ can deviate, thus leading to a violation of difference-EJR+ (and therefore showing that difference-EJR+ does not satisfy independence of losers). This example also shows that EJR+ is strictly stronger than difference-EJR+, as the committee $\{c_3, c_4\}$ does not satisfy EJR+ (due to the voters $1$ and $2$).\qedhere

\wrapstuffclear
\end{proof}
\paragraph{Leaving out Lower Quota for Party-Lists}

Finally, without lower quota, even the universal axiom $\mathrm{univ}$ satisfies all three remaining axioms, but is a weaker notion than EJR+.

\begin{observation}
    The universal axiom $\mathrm{univ}$ satisfies monotonicity, \ifsv, independence of approval swaps, and independence of losers, however it does not satisfy lower quota for party-lists.
\end{observation}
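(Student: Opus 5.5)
The universal notion selects every size-$k$ committee, so my plan is to check each of the four positive axioms directly: each one says that some committee $W \in f(\mathcal{I})$ must again lie in $f(\mathcal{I}')$ for a modified instance $\mathcal{I}'$. For $f = \mathrm{univ}$ this reduces to checking that $W$ is still a size-$k$ subset of the candidate set of $\mathcal{I}'$, since $\mathrm{univ}(\mathcal{I}') = \binom{C'}{k}$.

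For monotonicity, \ifsv, and independence of approval swaps, the candidate set $C$ and the committee size $k$ are unchanged; only the approval profile is modified. So $W \in \binom{C}{k} = \mathrm{univ}(N, A', C, k)$ holds immediately. For independence of losers, we remove some $c \in C \setminus W$, so $W \subseteq C \setminus \{c\}$ and $|W| = k$. The only point needing care is that the restricted instance is valid, i.e.\ $|C \setminus \{c\}| \ge k$; this follows from $W \subseteq C \setminus \{c\}$. Hence $W \in \mathrm{univ}(\mathcal{I}_{\mid C\setminus\{c\}})$.

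For the negative part, I would give a concrete party-list instance with a lower quota violation. Take $n = 2$, $k = 2$, parties $C_1 = \{a_1, a_2\}$ and $C_2 = \{b_1, b_2\}$, with voter $1$ approving $C_1$ and voter $2$ approving $C_2$. Lower quota requires $\min(\lfloor 2 \cdot \tfrac12 \rfloor, 2) = 1$ candidate from each party. The committee $\{a_1, a_2\}$ lies in $\mathrm{univ}(\mathcal{I})$ but gives party $C_2$ no seat, so $\mathrm{univ}(\mathcal{I}) \not\subseteq \LQ(\mathcal{I})$.

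The same instance also shows that $\mathrm{univ}$ is strictly coarser than EJR+. Since $\mathrm{univ}$ coarsens every notion, this amounts to exhibiting a committee outside $\ejrp(\mathcal{I})$, and $\{a_1, a_2\}$ works by \Cref{prop:lq}. No step is a real obstacle; the only point needing care is the validity ($|C'| \ge k$) of the restricted instance under independence of losers.
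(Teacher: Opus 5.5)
Your proposal is correct and follows the same route as the paper: all four positive axioms hold trivially because $\mathrm{univ}$ selects every size-$k$ committee of the (possibly restricted) candidate set, and the negative part only needs one party-list instance with a committee outside $\LQ(\mathcal{I})$. Your version is slightly more complete than the paper's. It checks that the restricted instance is valid, exhibits a concrete counterexample where the paper only asserts that one exists, and treats independence of approval swaps explicitly, which the paper's proof does not mention.
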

\begin{proof}
    Monotonicity, \ifsv, and independence of losers simply follow from the fact that all committees are in $\mathrm{univ}$. Since there exist committees in party-list instances not satisfying lower quota for party-lists, this also shows that $\mathrm{univ}$ does not satisfy lower quota for party-lists.
\end{proof}

\subsection{Necessity of the Axioms for PJR+} 
For PJR+ we can use nearly the same proportionality notions as for EJR+ (just changing the ``deviation-type'' from EJR-like to PJR-like).
\label{app:nec:pjr}
\subsection{Necessity of the Axioms in \Cref{cor:pjrp_Char}}
\paragraph{Leaving out Cohesiveness-Basedness}
As we show in \Cref{app:sub-core} the Sub-Core satisfies all axioms except for having a cohesiveness-based witness function.
\paragraph{Leaving out Merge-Proofness}
As shown previously EJR+ satisfies all remaining axioms.
\paragraph{Leaving our Lower Quota Extension}
Similar to EJR+ the empty notion satisfies everything except being a lower quota extension. An alternative option would be Droop-PJR+.
\paragraph{Leaving out Monotonicity.} Without monotonicity, all three remaining axioms are indeed satisfied by PJR (see also \Cref{sec:compliance}). However, they do not characterize PJR, as we can define a weakening of PJR satisfying independence of losers, lower quota for party-lists, and \ifsv. 

\begin{definition}[Overlap-PJR]
    A committee $W$ satisfies overlap-PJR if there is no $\ell \in [k]$ and $\ell$-cohesive group of voters $N'$ of size $\lvert N'\rvert \ge \ell \frac{n}{k}$ such that $A_i \cap W = A_j \cap W$ for all $i,j \in N'$ and $\lvert A_i \cap W\rvert < \ell$ for any $i \in N'$.
    \label{def:overlap}
\end{definition}

Indeed, overlap-PJR is strictly weaker than PJR (and therefore also strictly weaker than PJR+) and does not satisfy monotonicity, but all other axioms.

\begin{proposition}
    Overlap-PJR is a lower quota extension. Together with its natural witness function overlap-PJR is cohesiveness-based and satisfies merge-proofness.
\end{proposition}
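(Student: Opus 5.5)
We must prove three things about overlap-PJR: it is a lower quota extension; its natural witness function is a witness function that is cohesiveness-based; and it satisfies merge-proofness. The natural witness function is
\[
w(\mathcal I,W)=\bigl\{N'\subseteq N : \exists \ell\in[k] \text{ such that } N' \text{ is } \ell\text{-large and } \ell\text{-cohesive},\ A_i\cap W = A_j\cap W \text{ for all } i,j\in N',\ \lvert A_i\cap W\rvert<\ell \text{ for all } i\in N'\bigr\}.
\]
Condition (i) of \Cref{def:witness_Based} holds by definition. For condition (ii), take a local embedding $(\Phi_{N'},\Phi_C)$ into $(\hat N',\hat W)$ with $\lvert\hat N\rvert=n$. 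The following are preserved:
\begin{itemize}
\item the size of $N'$, hence $\ell$-largeness, since $\Phi_{N'}$ is a bijection and $n$ and $k$ are unchanged;
\item $\ell$-cohesiveness, since the $\ell$ common candidates map injectively to candidates approved by all of $\hat N'$;
\item the equal committee intersections and $\lvert A_i\cap W\rvert<\ell$, since $\Phi_C$ maps $W$ bijectively onto $\hat W$ and preserves the approvals of voters in $N'$.
\end{itemize}

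\textbf{Lower quota extension.} On party-list instances, overlap-PJR is sandwiched between PJR and LQ. Since overlap-PJR is weaker than PJR, we get $\LQ(\mathcal I)=\pjr(\mathcal I)\subseteq$ overlap-PJR$(\mathcal I)$ by \Cref{prop:lq}. Conversely, suppose $W\notin\LQ(\mathcal I)$. Then some party $C_x$ has $\lvert W\cap C_x\rvert<\ell\coloneqq\min(\lfloor kn_x/n\rfloor,\lvert C_x\rvert)$. The group $N_x$ is $\ell$-large and $\ell$-cohesive, and all its voters share the same ballot $C_x$, so their committee intersections coincide and have size $<\ell$. Hence $N_x$ is a witness.

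\textbf{Cohesiveness-basedness.} This is the one step that is not immediate. For a witness $N'$, let $T=\bigcap_{i\in N'}A_i$, so $\lvert T\rvert\ge\ell$. Fix any $j\in N'$. Then $T\cap W\subseteq A_j\cap W$, which has size $<\ell$, so $T\not\subseteq W$. Therefore $\bigcap_{i\in N'}(A_i\setminus W)=T\setminus W\neq\emptyset$.

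\textbf{Merge-proofness.} Take $A'$ with $A_i\subseteq A'_i\subseteq P\coloneqq\bigcup_{j\in N'}A_j$ for $i\in N'$, and other ballots unchanged.
\begin{itemize}
\item Largeness is unchanged.
\item Cohesiveness only grows, since the ballots in $N'$ only gain approvals.
\item All voters in $N'$ share the same set $S=A_i\cap W$. Hence $P\cap W=S$, so $A_i\cap W\subseteq A'_i\cap W\subseteq P\cap W=S$ gives $A'_i\cap W=S$ for every $i\in N'$. The intersections stay equal and of size $<\ell$.
\end{itemize}
So $N'$ remains a witness in $\mathcal I'$.
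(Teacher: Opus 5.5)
Your proof is correct and follows essentially the same route as the paper, which gets the lower quota extension from the fact that same-party voters automatically have equal committee intersections (so overlap-PJR coincides with PJR on party-lists), and treats merge-proofness and cohesiveness-basedness as following ``as for PJR+'' and ``by definition''. Your version is simply more explicit: you verify that the equal-intersection condition survives merging because $P\cap W = S$, you show why $T\setminus W\neq\emptyset$ rather than calling it definitional, and you check that the natural witness function actually satisfies the local-embedding condition.
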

\begin{proof}
 Merge-proofness follows in the same way as for PJR+. Further, together with its natural witness overlap-PJR is cohesiveness-based by definition.
 
 To see that overlap-PJR is a lower quota extension, we observe that the $A_i \cap W = A_j \cap W$ condition is satisfied for all voters voting for the same party in a party-list election, and hence overlap-PJR is equivalent to PJR for these instances. 
To see that overlap-PJR is indeed strictly weaker than PJR (and thus also strictly weaker than PJR+) and does not satisfy monotonicity consider the instance depicted in \Cref{fig:overlap} with $k = 4$. 
\begin{wrapstuff}[r,type=figure,width=5cm]
    \centering
    \begin{tikzpicture}
    [yscale=0.6,xscale=0.8,
    voter/.style={anchor=south}]
    
        \foreach \i in {1,...,4}
    		\node[voter] at (\i-0.5, -1) {$\i$};
        
        \draw[fill=teal!\clrstr] (0, 0) rectangle (2, 1);
        \draw[fill=teal!\clrstr] (0, 1) rectangle (2, 2);
        \draw[fill=teal!\strstr, ultra thick] (0, 2) rectangle (1, 3);
        \draw[fill=magenta!\strstr, ultra thick] (2, 0) rectangle (4, 1);
        \draw[fill=magenta!\strstr, ultra thick] (2, 1) rectangle (4, 2);
        \draw[fill=magenta!\strstr, ultra thick] (2, 2) rectangle (4, 3);
        
        \node at ( 1, 0.5) {$c_{1}$};
        \node at ( 1, 1.5) {$c_{2}$};
        \node at ( 0.5, 2.5) {$c_{3}$};
        \node at ( 3, 0.5) {$c_{4}$};
        \node at ( 3, 1.5) {$c_{5}$};
        \node at ( 3, 2.5) {$c_{6}$};
    \end{tikzpicture}
    \caption{Example for PJR $\neq$ overlap-PJR violating monotonicity.}
    \label{fig:overlap}
\end{wrapstuff}
The depicted committee $\{c_3, c_4, c_5, c_6\}$ does not satisfy PJR as the first two voters constitute a $2$-large and $2$-cohesive group who only received a single candidate in the committee. It does, however, still satisfy overlap-PJR, as $A_1 \cap \{c_3, c_4, c_5, c_6\} = \{c_3\}  \neq \emptyset = A_2 \cap \{c_3, c_4, c_5, c_6\}.$ If the second voter were to approve candidate $c_3$, however, this would even lead to an overlap-PJR violation. Thus, overlap-PJR does not satisfy monotonicity. \qedhere
\wrapstuffclear
\end{proof}

\subsection{Necessity of the Axioms in \Cref{thm:PJRP_Refinement}}
\paragraph{Leaving out \IFSV.}

Without \ifsv, we can define a weakening of PJR+ which satisfies the remaining three axioms. Instead of considering all large enough subsets $N'\subseteq N_c$ as potential violation witnesses, it only considers the supports $N_c$ for $c\in C\setminus W$.
\begin{definition}[Weak-PJR+]
    A committee $W$ satisfies weak-PJR+ if there is no candidate $c \in C\setminus W$ and $\ell \in [k]$ such that $\lvert N_c\rvert \ge \ell \cdot\frac{n}{k}$ and 
    \[
    \left\lvert \bigcup_{i \in N_c} A_i \cap W \right\rvert < \ell.
    \]
\end{definition}
Indeed, weak-PJR+ satisfies monotonicity, independence of losers, and lower quota, but not \ifsv. 
\begin{proposition}
    The proportionality notion of weak-PJR+ satisfies monotonicity, independence of losers, and lower quota. It does not satisfy \ifsv. There exists an instance in which weak-PJR+ selects strictly more committees than PJR+.
\end{proposition}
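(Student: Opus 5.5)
We verify the four claims about weak-PJR+ one at a time: monotonicity, independence of losers, lower quota for party-lists, and the failure of \ifsv together with strict coarseness relative to PJR+. Throughout, fix an instance $\mathcal I=(N,A,C,k)$ and a committee $W$ satisfying weak-PJR+. A violation of weak-PJR+ is determined by an unelected candidate $c$ and some $\ell\in[k]$ such that $|N_c|\ge \ell\frac nk$ and $\lvert\bigcup_{i\in N_c}A_i\cap W\rvert<\ell$.

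\textbf{Monotonicity and independence of losers.} For monotonicity, pass to $A'\supseteq A$ with $A'_i\setminus A_i\subseteq W$. For every $c\notin W$ the support $N_c$ is unchanged, and $\bigcup_{i\in N_c}A'_i\cap W$ can only grow. Hence no new violation appears.

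For independence of losers, delete some $d\notin W$. Each remaining $c\notin W$ keeps the same support, and the restricted ballots have unchanged intersection with $W$. Hence every violation in the restricted instance is already a violation in $\mathcal I$.

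\textbf{Lower quota for party-lists.} It suffices to show that weak-PJR+ implies lower quota on party-list instances. Suppose party $C_x$ receives fewer than $\min(\lfloor k n_x/n\rfloor,|C_x|)$ seats. Then some $c\in C_x\setminus W$ exists, and $N_c=N_x$. Set $\ell=\lfloor kn_x/n\rfloor\ge 1$. Then $|N_c|\ge \ell\frac nk$, and $\bigcup_{i\in N_c}A_i\cap W=C_x\cap W$ has fewer than $\ell$ elements. This is a weak-PJR+ violation, so such a committee cannot satisfy weak-PJR+.

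\textbf{Failure of \ifsv and strict coarseness.} We reuse the instance of \Cref{fig:weakEJRP}: $k=4$, $N_{c_1}=\{1,2,3\}$, $A_3=\{c_1,\dots,c_5\}$, $A_4=\{c_2,\dots,c_5\}$, and $W=\{c_2,\dots,c_5\}$.

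In the original profile the union over $N_{c_1}$ meets $W$ in all four candidates, so every $\ell\le 3$ permitted by $|N_{c_1}|=3\ge \ell$ is satisfied. Thus $W$ satisfies weak-PJR+.

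Now let voter~$3$ restrict their ballot to $W$, which is a move allowed by \ifsv. Then $N_{c_1}=\{1,2\}$ is $2$-large, while the union of its ballots meets $W$ in the empty set. This violates weak-PJR+.

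The same original instance shows that $W$ fails PJR+, witnessed by $N'=\{1,2\}$, $c_1$ and $\ell=2$. Hence weak-PJR+ selects strictly more committees than PJR+ there.

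The only point needing care is the party-list step, specifically the need to ensure that an unelected candidate $c$ of the underrepresented party exists. This is handled by the cap $|C_x|$ in the lower quota bound. All remaining steps are direct.
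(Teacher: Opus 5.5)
Your proof is correct and follows the paper's argument essentially step for step, including the same instance (voters $1,2,3$ approving $c_1$, voters $3,4$ approving $c_2,\dots,c_5$, $k=4$) for both the failure of \ifsv and the separation from PJR+. The only minor difference is in lower quota for party-lists: you exhibit the weak-PJR+ violation directly with $N_c = N_x$ and $\ell=\lfloor kn_x/n\rfloor$, whereas the paper argues that weak-PJR+ and PJR+ coincide on party-list instances and then appeals to \Cref{prop:lq}; both rest on the same observation that the party's entire electorate is the relevant coalition.
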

\begin{proof}
    Independence of losers again follows in the same fashion as for PJR+: deleting candidates outside the committee can only create less deviating coalitions. Similarly, adding approvals to candidates inside the committee cannot create any new deviating coalitions, and thus, weak-PJR+ also satisfies monotonicity.
    
    For party-list instances, deviating coalitions can always be extended to include the entire set of voters voting for a party, and thus, weak-PJR+ and PJR+ are equivalent for party-list instances (and therefore weak-PJR+ also satisfies lower quota for party-lists). 
\begin{wrapstuff}[r,type=figure,width=5cm]
    \centering
    \begin{tikzpicture}
    [yscale=0.6,xscale=0.8,
    voter/.style={anchor=south}]
    
        \foreach \i in {1,...,4}
    		\node[voter] at (\i-0.5, -1) {$\i$};
        
        \draw[fill=teal!\clrstr] (0, 0) rectangle (3, 1);
        \draw[fill=magenta!\strstr, ultra thick] (2, 1) rectangle (4, 2);
        \draw[fill=magenta!\strstr, ultra thick] (2, 2) rectangle (4, 3);
        \draw[fill=magenta!\strstr, ultra thick] (2, 3) rectangle (4, 4);
        \draw[fill=magenta!\strstr, ultra thick] (2, 4) rectangle (4, 5);
        
        \node at ( 1.5, 0.5) {$c_{1}$};
        
        \node at ( 3, 1.5) {$c_{2}$};
        \node at ( 3, 2.5) {$c_{3}$};
        \node at ( 3, 3.5) {$c_{4}$};
        \node at ( 3, 4.5) {$c_{5}$};
    \end{tikzpicture}
    \caption{Example for weak-PJR+.}
    \label{fig:weakPJRP}
\end{wrapstuff}
    To see that weak-PJR+ does not satisfy \ifsv consider the instance and committee depicted in \Cref{fig:weakPJRP} for $k = 4$. The committee $\{c_2, \dots, c_5\}$ satisfies weak-PJR+, as the voters approving candidate $c_1$ cannot deviate due to voter $3$. If, however, voter $3$ would change their preferences to only approve $\{c_2, \dots, c_5\}$ this committee would no longer satisfy weak-PJR+, as now voters $1$ and $2$ together could deviate with candidate $c_1$. This example, additionally also shows that weak-PJR+ is strictly weaker than PJR+ as the committee does not satisfy PJR+ due to the voters $1$ and $2$. \qedhere
    \wrapstuffclear
\end{proof}
\paragraph{Leaving out Independence of Losers} Thirdly, we consider independence of losers. Here, we define a weakening of PJR+, we term difference-PJR+. While overlap-PJR (\Cref{def:overlap}) required that the candidates approved inside the committee must be the same for a deviating coalition, difference-PJR+ requires that the candidates \emph{outside} the committee must be the same.
\begin{definition}[difference-PJR+]
    A committee $W$ satisfies difference-PJR+ if there is no candidate $c \notin W$, $\ell \in [k]$, and group $N' \subseteq N_c$ such that $\lvert N'\rvert \ge \ell\frac{n}{k}$, $A_i \setminus W = A_j \setminus W$ for all $i,j \in N'$ and 
    \[
    \left\lvert \bigcup_{i \in N'} A_i \cap W \right\rvert < \ell.
    \]
\end{definition}
Indeed, difference-PJR+ satisfies monotonicity, \ifsv, and lower quota, but not independence of losers. 
\begin{proposition}
    The proportionality notion of difference-PJR+ satisfies monotonicity, \ifsv, and lower quota. It does not satisfy independence of losers. There exists an instance in which difference-PJR+ selects strictly more committees than PJR+.
\end{proposition}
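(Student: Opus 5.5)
The statement has four parts: difference-PJR+ satisfies monotonicity, \ifsv, and lower quota for party-lists; it fails independence of losers; and it is strictly coarser than PJR+ on some instance. I would mirror the proof for difference-EJR+, with the EJR-type condition replaced by the collective condition $\lvert\bigcup_{i\in N'}A_i\cap W\rvert<\ell$. The common step is the same throughout: start from $W$ satisfying difference-PJR+ in $\mathcal I$, assume a violating triple $(c,\ell,N')$ exists in the modified instance, and pull it back to a violating triple in $\mathcal I$.

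\textbf{Monotonicity.} Here $A'_i\setminus W=A_i\setminus W$ for every voter, so $N'\subseteq N_c$ still holds in $\mathcal I$. The condition $A_i\setminus W=A_j\setminus W$ for $i,j\in N'$ also carries over, since approvals outside $W$ are untouched. Finally, $\bigcup_{i\in N'}(A_i\cap W)\subseteq\bigcup_{i\in N'}(A'_i\cap W)$, so the union in $\mathcal I$ also has fewer than $\ell$ elements. Thus $N'$ would already violate difference-PJR+ in $\mathcal I$.

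\textbf{\Ifsv.} Every voter in a violating $N'$ approves $c\notin W$, so no such voter can have changed to a ballot $A'_i\subseteq A_i\cap W$. Hence all voters in $N'$ kept their ballots, and the same triple is a violation in $\mathcal I$.

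\textbf{Lower quota.} On a party-list instance, any $N'\subseteq N_c$ consists of supporters of the single party containing $c$, so they have identical ballots. The extra condition is then automatic, and difference-PJR+ coincides with PJR+ there. By \Cref{prop:lq}, PJR+ equals $\LQ$ on party-list instances, so difference-PJR+ satisfies lower quota (it is in fact a lower quota extension).

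\textbf{Counterexample.} I would reuse the instance of \Cref{fig:differenceEJR}: $k=2$, $n=4$, $A_1=\{c_1,c_2\}$, $A_2=\{c_1\}$, $A_3=A_4=\{c_3,c_4\}$, and $W=\{c_3,c_4\}$.
\begin{itemize}
\item \textbf{Difference-PJR+ holds.} For $c_1$, the only $1$-large groups in $N_{c_1}=\{1,2\}$ with identical non-$W$ ballots are singletons, and these are not $1$-large since $n/k=2$. The pair $\{1,2\}$ is excluded because $A_1\setminus W\neq A_2\setminus W$. For $c_2$, the supporter set $\{1\}$ is too small.
\item \textbf{PJR+ fails.} The group $\{1,2\}$ supports $c_1$, is $1$-large, and receives no representation in $W$.
\item \textbf{Independence of losers fails.} After removing $c_2$, voters $1$ and $2$ have identical ballots $\{c_1\}$. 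They now form a violating group, so $W$ fails difference-PJR+ in the restricted instance.
\end{itemize}

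The only step requiring real care is choosing this instance so that both the coarseness claim and the failure of independence of losers hold at once; the rest is routine bookkeeping.
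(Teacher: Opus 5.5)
Your proposal is correct and follows the paper's proof. You use the same pull-back arguments for monotonicity and robustness to fully satisfied voters. You make the same observation that the extra condition $A_i\setminus W=A_j\setminus W$ holds automatically on party-list instances, so the notion coincides with PJR+ there. You also use the very same four-voter instance with $k=2$ and $W=\{c_3,c_4\}$ for both the failure of independence of losers and the strict coarsening of PJR+. You supply more detail than the paper, for example checking explicitly that the condition on non-$W$ approvals survives monotonicity. The only blemish is the self-contradictory phrasing ``the only $1$-large groups \ldots{} are singletons, and these are not $1$-large,'' which should say that the only groups in $N_{c_1}$ with identical non-$W$ ballots are singletons.
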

\begin{proof}
    \begin{wrapstuff}[r,type=figure,width=5cm]
    \centering
    \begin{tikzpicture}
    [yscale=0.6,xscale=0.8,
    voter/.style={anchor=south}]
    
        \foreach \i in {1,...,4}
    		\node[voter] at (\i-0.5, -1) {$\i$};
        
        \draw[fill=teal!\clrstr] (0, 0) rectangle (2, 1);
        \draw[fill=teal!\clrstr] (0, 1) rectangle (1, 2);
        \draw[fill=magenta!\strstr, ultra thick] (2, 0) rectangle (4, 1);
        \draw[fill=magenta!\strstr, ultra thick] (2, 1) rectangle (4, 2);
        
        \node at ( 1, 0.5) {$c_{1}$};
        \node at ( 0.5, 1.5) {$c_{2}$};
        \node at ( 3, 0.5) {$c_{3}$};
        \node at ( 3, 1.5) {$c_{4}$};
    \end{tikzpicture}
    \caption{Example for PJR+ and difference-PJR+.}
    \label{fig:difference}
\end{wrapstuff}
Monotonicity and \ifsv again follow the same way as for PJR+. Adding approvals and letting voters switch their approvals can only create less deviating coalitions. Similarly, as for earlier axioms, for party-list instances, difference-PJR+ and PJR+ are equivalent as the requirement $A_i \setminus W = A_j \setminus W$ is automatically satisfied for voters voting for the same party. 
To see that difference-PJR+ indeed does not satisfy independence of losers consider the instance depicted in \Cref{fig:difference} for $k = 2$. Here, the committee $\{c_3, c_4\}$ satisfies difference-PJR+, as voters $1$ and $2$ together cannot deviate with $c_1$. However, if candidate $c_2$ leaves the instance, voters $1$ and $2$ can deviate, thus leading to a violation of difference-PJR+ (and therefore showing that difference-PJR+ does not satisfy independence of losers). This example also shows that PJR+ is strictly stronger than difference-PJR+, as the committee $\{c_3, c_4\}$ does not satisfy PJR+ (due to the voters $1$ and $2$).
\wrapstuffclear
\end{proof}
\paragraph{Leaving out Lower Quota for Party-Lists}

Finally, without lower quota, even the universal axiom $\mathrm{univ}$ satisfies all three remaining axioms, but is a weaker axiom than PJR+.

\begin{observation}
    The universal axiom $\mathrm{univ}$ satisfies monotonicity, \ifsv, and independence of losers, however it does not satisfy lower quota for party-lists.
\end{observation}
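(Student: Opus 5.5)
Since $\mathrm{univ}(\mathcal{I}) = \binom{C}{k}$ for every instance $\mathcal{I}$, each of the three positive claims reduces to the same observation. Every one of monotonicity, \ifsv, and independence of losers has the form ``if $W \in f(\mathcal{I})$, then $W \in f(\mathcal{I}')$'' for some modified instance $\mathcal{I}'$. In each case $W$ is still a $k$-subset of the candidate set of $\mathcal{I}'$, and so it lies in $\mathrm{univ}(\mathcal{I}')$.

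For monotonicity and \ifsv, the modified instance keeps $N$, $C$ and $k$ and changes only the profile. Hence $W \in \binom{C}{k} = \mathrm{univ}(N,A',C,k)$.

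For independence of losers, we remove a candidate $c \in C \setminus W$. Then $W \subseteq C \setminus \{c\}$ and $|C\setminus\{c\}| \ge |W| = k$, so $\mathcal{I}_{\mid C\setminus\{c\}}$ is a valid instance and $W \in \mathrm{univ}(\mathcal{I}_{\mid C\setminus\{c\}})$. The only point needing care anywhere in the argument is this check that the restricted instance still satisfies $m \ge k$.

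For the negative claim, I will exhibit one party-list instance in which $\LQ$ is strictly smaller than $\binom{C}{k}$. Take $n=2$, $k=2$ and parties $C_1=\{a_1,a_2\}$ and $C_2=\{b_1,b_2\}$, with voter $1$ approving $C_1$ and voter $2$ approving $C_2$. Lower quota requires $\min(\lfloor 2\cdot\tfrac12\rfloor, 2)=1$ seat for each party. The committee $\{a_1,a_2\}$ therefore lies in $\mathrm{univ}(\mathcal{I})$ but not in $\LQ(\mathcal{I})$, so $\mathrm{univ}$ fails lower quota for party-lists. No step here is a real obstacle.

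Finally, $\pjrp \subseteq \mathrm{univ}$ holds trivially. The same example shows the inclusion is strict, because by \Cref{prop:lq} $\pjrp$ coincides with $\LQ$ on this instance. This establishes that $\mathrm{univ}$ is strictly weaker than PJR+, as claimed.
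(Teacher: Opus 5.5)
Your proof is correct and takes the same route as the paper: the three positive axioms hold because $\mathrm{univ}$ contains every $k$-subset of the (possibly restricted) candidate set, and lower quota fails because some party-list committee lies outside $\LQ$. The difference is only in detail: the paper merely asserts that such a committee exists, whereas you give an explicit two-party instance and also check that the restricted instance still satisfies $m \ge k$.
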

\begin{proof}
    Monotonicity, \ifsv, and independence of losers simply follow from the fact that all committees are in $\mathrm{univ}$. Since there exist committees in party-list instances not satisfying lower quota for party-lists, this also shows that $\mathrm{univ}$ does not satisfy lower quota for party-lists.
\end{proof}
\section{Witness-Basedness, Neutrality, and Anonymity}
\label{app:wb}

We introduce the necessary axioms to characterize witness-based proportionality notions.

\begin{definition}[Anonymity]
A proportionality notion $f$ satisfies \emph{anonymity} if for every instance
$\mathcal I=(N,A,C,k)$, every bijection $\pi\colon N\to N$, and the relabeled profile
$A^\pi$ defined by $A^\pi_{\pi(i)} := A_i$ for all $i\in N$, it holds that
\[
W\in f(N,A,C,k)\quad\text{if and only if}\quad W\in f(N,A^\pi,C,k)
\]
for every committee $W\subseteq C$.
\end{definition}

\begin{definition}[Neutrality]
A proportionality notion $f$ satisfies \emph{neutrality} if for every instance
$\mathcal I=(N,A,C,k)$, every bijection $\sigma\colon C\to \hat C$, and the relabeled profile
$\sigma(A)$ defined by $\sigma(A)_i := \{\sigma(c)\mid c\in A_i\}$ for all $i\in N$, it holds that
\[
W\in f(N,A,C,k)\quad\text{if and only if}\quad \sigma(W)\in f\bigl(N,\sigma(A),\hat C,k\bigr)
\]
for every committee $W\subseteq C$.
\end{definition}

The following axiom complements independence of losers.
\begin{definition}[Independence of Unapproved Candidates]
A proportionality notion $f$ satisfies \emph{independence of unapproved candidates} if for every instance
$\mathcal I=(N,A,C,k)$, every candidate $c\in C\setminus W$ such that $c\notin A_i$ for all $i\in N$, and the reduced instance $\mathcal I'= \mathcal I_{\mid C\setminus \{c\}}$, it holds that
\[
W\notin f(\mathcal I)\quad\text{implies}\quad W\notin f(\mathcal I')
\]
for every committee $W\subseteq C$.
\end{definition}

Note that all conditions are very mild. Nonetheless, they suffice to characterize the concept of witness-basedness.
\begin{proposition}
    A proportionality notion is witness-based if and only if it is anonymous, neutral, and satisfies independence of unapproved candidates and independence of losers.
    \label{prop:wb_char}
\end{proposition}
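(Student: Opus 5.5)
The plan is to treat the two directions separately.

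\textbf{Necessity.} Let $f$ have a witness function $w$. I would check each of the four axioms by exhibiting local embeddings.

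Independence of losers is exactly \Cref{lem:witness_iol_ifsv}.

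For anonymity, suppose $W\notin f(N,A,C,k)$ and pick a witness $N'\in w(\mathcal I,W)$. Then $(N',W)$ locally embeds into $(\pi(N'),W)$ in $(N,A^\pi,C,k)$, using $\Phi_{N'}=\pi|_{N'}$ and $\Phi_C=\mathrm{id}$. So $\pi(N')$ is a witness there and $W\notin f(N,A^\pi,C,k)$. Applying the same argument with $\pi^{-1}$ gives the converse.

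For neutrality, use $\Phi_{N'}=\mathrm{id}$ and $\Phi_C=\sigma$, restricted to $W\cup\bigcup_{i\in N'}A_i$. This map sends $W$ onto $\sigma(W)$ and preserves approvals. Again apply it in both directions, using $\sigma^{-1}$ for the reverse.

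For independence of unapproved candidates, let $c$ be unapproved and $N'$ a witness in $\mathcal I$. Since no voter approves $c$, the set $W\cup\bigcup_{i\in N'}A_i$ does not contain $c$. Hence the identity maps give a local embedding of $(N',W)$ into $\mathcal I_{\mid C\setminus\{c\}}$, and the number of voters is unchanged.

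\textbf{Sufficiency.} I would use the trivial witness function: $w(\mathcal I,W)=\{N\}$ if $W\notin f(\mathcal I)$, and $w(\mathcal I,W)=\emptyset$ otherwise. Condition (i) holds by construction.

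For (ii), suppose $N'\in w(\mathcal I,W)$, so $N'=N$ and $W\notin f(\mathcal I)$. Let $(N,W)$ locally embed into $(\hat N',\hat W)$ in $\hat{\mathcal I}$ with $|\hat N|=|N|$. Because $\Phi_{N'}$ is a bijection, $|\hat N'|=|N|=|\hat N|$, so $\hat N'=\hat N$. We therefore need to show $\hat W\notin f(\hat{\mathcal I})$. I would build a chain of instances from $\mathcal I$ to $\hat{\mathcal I}$, each step preserving non-membership of the committee.

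\begin{enumerate}
\item Starting from $\mathcal I$, delete every candidate in $C\setminus W$ that no voter approves. This is allowed by independence of unapproved candidates. The result has candidate set $D=W\cup\bigcup_{i\in N}A_i$.
\item Relabel voters via $\Phi_{N'}$, using anonymity; identifying $N$ with $\hat N$ is harmless since both equal $[n]$.
\item Relabel candidates via $\Phi_C$ onto $\Phi_C(D)\subseteq\hat C$, using neutrality. This yields the instance $\hat{\mathcal I}_{\mid \Phi_C(D)}$. Indeed, for each $j\in N$ and $c\in D$ we have $c\in A_j$ iff $\Phi_C(c)\in\hat A_{\Phi_{N'}(j)}$. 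Also, every $\hat c\in\hat C\setminus\Phi_C(D)$ is approved by nobody after restriction.
\item Pass from $\hat{\mathcal I}_{\mid \Phi_C(D)}$ to $\hat{\mathcal I}$ by adding back the candidates of $\hat C\setminus\Phi_C(D)$, which all lie outside $\hat W$. We need $\hat W\notin f$ of the smaller instance to imply $\hat W\notin f$ of the larger one. This is the contrapositive of independence of losers: if $\hat W\in f(\hat{\mathcal I})$, then removing losers gives $\hat W\in f(\hat{\mathcal I}_{\mid\Phi_C(D)})$, a contradiction.
\end{enumerate}

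\textbf{Main obstacle.} The delicate part is the bookkeeping in steps 3--4, checking that the restricted instance is exactly the relabeled one. Two points need care. First, in $\hat{\mathcal I}$ the voters may approve candidates outside $\Phi_C(D)$; these are exactly the losers removed in step 4. The restriction $\hat A_{\mid\Phi_C(D)}$ agrees with $\sigma(A)$ precisely because of the iff condition. Second, restricting requires $|\Phi_C(D)|\ge k$, which holds because $W\subseteq D$.

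I also need to make sure neutrality allows a bijection onto a different candidate set $\hat C$. The definition does allow this.
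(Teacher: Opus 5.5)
Your proof is correct and takes essentially the same route as the paper: for necessity, local embeddings via the voter permutation, the candidate bijection, and identity maps yield the four axioms; for sufficiency, you verify the trivial witness function $\{N\}$ by restricting to $D=W\cup\bigcup_{i\in N}A_i$, relabeling with anonymity and neutrality, and applying independence of losers to $\hat{\mathcal I}$. Your bookkeeping is, if anything, slightly more careful than the paper's: you state the embedding for independence of unapproved candidates in the direction property (ii) actually requires (the paper writes it in the reverse direction), and you explicitly note that $\hat N'=\hat N$.
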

\begin{proof}
We prove both directions.

\smallskip
$(\Rightarrow)$ Let $f$ be witness-based and fix a witness function $w$ for~$f$.

\paragraph{Anonymity.}
Let $\pi\colon N\to N$ be a permutation of voters and let $\mathcal I^\pi$ denote the relabeled instance
(with $A^\pi_{\pi(i)}:=A_i$ for all $i\in N$).
Since $\pi$ is arbitrary, it suffices to show that $f(\mathcal I) \subseteq f(\mathcal I^\pi)$, because the other set inclusion follows from applying this claim with $\pi^-1$ to $\mathcal I^\pi$.

To prove the set inclusion, let a committee $W\notin f(\mathcal I^\pi)$ be given. Our goal is to show $W\notin f(\mathcal I)$
By definition of witness functions, $w(\mathcal I^\pi,W)\neq\emptyset$, so pick some
$N'\in w(\mathcal I^\pi,W)$.
The pair $(N',W)$ in $\mathcal I^\pi$ locally embeds into $(\pi^{-1}(N'),W)$ in $\mathcal I$ via
$\Phi_{N'}=\pi^{-1}$ and $\Phi_C=\mathrm{id}_C$.
Hence, by property~(ii) of witness functions, $\pi^{-1}(N')\in w(\mathcal I,W)\neq \emptyset$. Since $w$ is a witness function of $f$, this implies $W\notin f(\mathcal I)$, as desired.

\paragraph{Neutrality.}
Let $\sigma\colon C\to \hat C$ be a bijection and write $\sigma(\mathcal I)$ for the relabeled instance.
For any $W\subseteq C$ and $N'\subseteq N$, the pair $(N',W)$ in $\mathcal I$ locally embeds into $(N',\sigma(W))$ in
$\sigma(\mathcal I)$ via $\Phi_{N'}=\mathrm{id}_{N'}$ and $\Phi_C=\sigma$, and conversely via $\sigma^{-1}$.
The same argument as above yields
$W\in f(\mathcal I)\iff \sigma(W)\in f(\sigma(\mathcal I))$.

\paragraph{Independence of unapproved candidates.}
Fix $\mathcal I=(N,A,C,k)$, a committee $W\subseteq C$, and a candidate $c\in C\setminus W$ such that $c\notin A_i$ for all $i\in N$.
Let $\mathcal I'=(N,A_{\mid C\setminus\{c\}},C\setminus\{c\},k)$.
Let $W\notin f(\mathcal I)$. Our goal is to show $W\notin f(\mathcal I')$.
By definition of witness functions, $w(\mathcal I,W)\neq\emptyset$; pick $N'\in w(\mathcal I,W)$.
The pair $(N',W)$ in $\mathcal I'$ locally embeds into $(N',W)$ in $\mathcal I$ via identity maps on voters
and candidates, so property~(ii) implies $N'\in w(\mathcal I',W)$. Since $w$ is a witness function of $f$, this implies $W\notin f(\mathcal I')$, as desired.

\paragraph{Independence of losers.}
Fix $\mathcal I=(N,A,C,k)$, a committee $W\subseteq C$, and a candidate $c\in C\setminus W$.
Let $\mathcal I'=(N,A_{\mid C\setminus\{c\}},C\setminus\{c\},k)$.
We prove independence of losers by contraposition, i.e., let $W\notin f(\mathcal I')$. Our goal is to show $W\notin f(\mathcal I)$.
By definition of witness functions, $w(\mathcal I',W)\neq\emptyset$; pick $N'\in w(\mathcal I',W)$.
The pair $(N',W)$ in $\mathcal I'$ locally embeds into $(N',W)$ in $\mathcal I$ via identity maps on voters
and candidates, so property~(ii) implies $N'\in w(\mathcal I,W)$. Since $w$ is a witness function of $f$, this implies $W\notin f(\mathcal I)$, as desired.

\smallskip
$(\Leftarrow)$ Assume now that $f$ is anonymous, neutral, and satisfies independence of unapproved candidates and independence of losers.
Define $w$ by
\[
w(\mathcal I,W)\ :=\
\begin{cases}
\emptyset & \text{if } W\in f(\mathcal I),\\
\{N\} & \text{if } W\notin f(\mathcal I).
\end{cases}
\]
We claim that $w$ is a witness function of $f$. Property~(i) holds by construction.

For property~(ii), fix an instance $\mathcal I=(N,A,C,k)$, $N'\subseteq N$, and $W\subseteq C$ with $N'\in w(\mathcal I, W)$.
By definition of $w$, we know $w(\mathcal I,W)=\{N\}$ and hence $N'=N$.
Let $\hat{\mathcal I}=(\hat N,\hat A,\hat C,k)$ and $\hat W\subseteq \hat C$ be such that $\lvert \hat N \rvert = \lvert N\rvert$, $(N,W)$ locally embeds into $(\hat N,\hat W)$ via $(\Phi_N,\Phi_C)$. 
Our goal is to show $\hat N \in w(\hat {\mathcal I}, \hat W)$.

Note that from the definition of $w$, we can deduce $W\notin f(\mathcal I)$.
Let
\[
D := W \,\cup\, \bigcup_{i\in N} A_i
\qquad\text{and}\qquad
\hat D := \Phi_C(D).
\]
By independence of unapproved candidates, deleting candidates in $C\setminus D$ does not affect whether $W$ is chosen. Since $W\subseteq D$ and $\Phi_C$ maps $W$ onto $\hat W$, we also have $\hat W \subseteq \hat D$. Therefore, if $\hat W$ were chosen on $\hat{\mathcal I}$, deleting candidates in $\hat C\setminus \hat D$ would not change that. Formally, let
\[
\mathcal I_D := (N, A_{\mid D}, D, k)
\quad\text{and}\quad
\hat{\mathcal I}_{\hat D} := (\hat N, \hat A_{\mid \hat D}, \hat D, k).
\]
Then, since $f$ satisfies independence of unapproved candidates, we have $W\notin f(\mathcal I_D)$ and since $f$ satisfies independence of losers, we have the logical implication [$\hat W\in f(\hat{\mathcal I}) \Rightarrow \hat W\in f(\hat{\mathcal I}_{\hat D})$].

By definition of local embedding, $\Phi_N$ is a bijection $N\to \hat N$ and $\Phi_C$ restricts to a bijection
$D\to \hat D$ that preserves approvals on~$D$. Thus $\hat{\mathcal I}_{\hat D}$ is obtained from
$\mathcal I_D$ by renaming voters and candidates according to $\Phi_N$ and $\Phi_C$.
By anonymity and neutrality, we obtain
\[
W\notin f(\mathcal I_D) \ \Longrightarrow\ \hat W\notin f(\hat{\mathcal I}_{\hat D}),
\]
which implies $W\notin f(\hat{\mathcal I})$ as reasoned above.
$w(\hat{\mathcal I},\hat W)=\{\hat N\}$, and in particular $\Phi_N(N)=\hat N'\in w(\hat{\mathcal I},\hat W)$.

Hence $w$ satisfies property~(ii), so it is a witness function for~$f$. This shows that $f$ is witness-based.
\end{proof}

\section{Axiomatic Properties of Other Proportionality Notions}
\label{sec:compliance}
\newcommand{\qmark}{\textcolor{black!55}{?}}
\begin{table*}[t]
  \small
  \centering
  \setlength{\tabcolsep}{3.2pt}
  \renewcommand{\arraystretch}{1}
  \resizebox{0.8\textwidth}{!}{%
  \begin{tabular}{llllllllllll}
    \toprule
      & \textbf{JR} & \textbf{PJR} & \textbf{EJR} & \textbf{PJR+} & \textbf{EJR+} & 
      \textbf{Sub-Core} &
      \textbf{FPJR} &
      \textbf{FJR} &
      \textbf{Core} & \textbf{Price} & \textbf{NPR} \\
    \midrule
    Lower quota (party-lists)
      & \xmark & \cmark & \cmark & \cmark$^{\star}$ & \cmark$^{\star}$ & \cmark & \cmark & \cmark & \cmark & \cmark & \cmark \\
    Lower quota extension
      & \xmark & \cmark & \cmark & \cmark$^{\star}$ & \cmark$^{\star}$ &\cmark &\cmark &\cmark & \cmark & \xmark & \cmark \\
    \addlinespace
    Monotonicity
      & \cmark & \xmark & \xmark & \cmark$^{\star}$ & \cmark$^{\star}$ &\cmark &\xmark &\xmark & \cmark & \cmark & \cmark \\
    \makecell[l]{Robustness to\\fully satisfied voters}
      & \cmark & \cmark & \cmark & \cmark$^{\star}$ & \cmark$^{\star}$ & \cmark &\cmark &\cmark & \cmark & \xmark & \cmark \\
    \makecell[l]{Independence of\\approval swaps}
      & \cmark & \xmark & \xmark & \xmark & \cmark$^{\star}$ & \xmark &\xmark &\xmark & \xmark & \xmark & \cmark \\
    Independence of losers
      & \cmark & \cmark & \cmark & \cmark$^{\star}$ & \cmark$^{\star}$ &\cmark &\cmark &\cmark & \cmark & \cmark & \cmark \\
    \addlinespace
    \hdashline
    \addlinespace
    Witness-based
      & \cmark & \cmark & \cmark & \cmark$^{\star}$ & \cmark$^{\star}$ & \cmark &\cmark &\cmark & \cmark & \cmark & \cmark \\
    Cohesiveness-based
      & \cmark & \cmark & \cmark & \cmark$^{\star}$ & \cmark$^{\star}$ &\xmark &\xmark &\xmark & \xmark & \xmark & \cmark \\
    Individual discontentment
      & \cmark & \cmark & \cmark & \cmark & \cmark$^{\star}$ &\cmark &\cmark &\cmark & \cmark & \xmark & \xmark \\
    Merge-proofness
      & \cmark & \cmark & \xmark & \cmark$^{\star}$ & \xmark & \cmark &\cmark &\xmark & \xmark & \xmark & \xmark \\
    \bottomrule
  \end{tabular}}
  \caption{Axiomatic satisfactions of the proportionality notions.
  \cmark/\xmark indicate satisfaction/violation.
  Entries marked $^{\star}$ are used as part of our characterizations.}
  \label{tab:big-fingerprint}
\end{table*}

\subsection{JR}
\begin{proposition}
JR is a \happygreen{witness-based proportionality notion satisfying independence of losers, monotonicity, \ifsv, and independence of approval swaps}. Together with its natural witness function, JR is \happygreen{cohesiveness-based and satisfies individual discontentment and merge-proofness}. JR does not satisfy \textcolor{red!40!gray}{lower quota for party-lists}.    
\end{proposition}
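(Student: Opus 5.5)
We first fix the definition of JR, which the paper has not restated: a committee $W$ satisfies JR if for every $1$-large group $N'$ with $\bigcap_{i\in N'}A_i\neq\emptyset$ there is some $i\in N'$ with $A_i\cap W\neq\emptyset$. This is the $\ell=1$ case of PJR/EJR. Its natural witness function $w^{\mathrm{JR}}(\mathcal I,W)$ collects all $N'$ with $|N'|\ge n/k$ such that $\bigcap_{i\in N'}A_i\neq\emptyset$ and $A_i\cap W=\emptyset$ for all $i\in N'$.

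The plan is to verify the witness-side properties first. Then we derive independence of losers and \ifsv from \Cref{lem:witness_iol_ifsv}, check monotonicity and independence of approval swaps directly, and finally give a counterexample for lower quota.

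\textbf{Witness-based.} Property (i) holds by definition of $w^{\mathrm{JR}}$. For (ii), take a local embedding $(\Phi_{N'},\Phi_C)$ into an instance with the same $n$. Then $|\hat N'|=|N'|\ge n/k$. A common candidate $c$ of $N'$ maps to $\Phi_C(c)$, which is commonly approved by $\hat N'$. The condition $A_i\cap W=\emptyset$ transfers because $\Phi_C$ maps $W$ onto $\hat W$ and preserves approvals.

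\textbf{Cohesiveness-based.} A common candidate $c$ of a witness must lie outside $W$: otherwise every member approves $c\in W$, contradicting $A_i\cap W=\emptyset$. Hence $\bigcap_{i\in N'}(A_i\setminus W)\neq\emptyset$.

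\textbf{Individual discontentment.} Replacing all ballots in $N'$ by $A_j$ keeps $c\in A_j$, keeps $A_j\cap W=\emptyset$, and leaves $|N'|$ unchanged.

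\textbf{Merge-proofness.} For $A_i\subseteq A'_i\subseteq\bigcup_{j\in N'}A_j$, we have
\[
A'_i\cap W\subseteq\bigcup_{j\in N'}(A_j\cap W)=\emptyset,
\]
and $c$ stays commonly approved since $A_i\subseteq A'_i$.

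By \Cref{lem:witness_iol_ifsv}, JR then satisfies independence of losers and \ifsv.

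\textbf{Monotonicity.} Adding approvals only for members of $W$ cannot create a witness. A witness in the new profile has $A'_i\cap W=\emptyset$, so $A'_i=A_i$ for all its members, and it would already have been a witness before.

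\textbf{Independence of approval swaps.} Whether $A_i\cap W=\emptyset$ depends only on $|A_i\cap W|$. The common candidate $c\notin W$ is unaffected because $A_i\setminus W$ is unchanged.

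\textbf{Failure of lower quota.} This is the only genuinely constructive step, and it is easy. Take $k=2$ and $n=2$, with a single party $C_1=\{a,b\}$ approved by both voters, together with a second party $C_2=\{d\}$. Since every party needs a voter, take instead $n=3$: two voters approve $\{a,b\}$ and one approves $\{d\}$. Let $k=3$ and add a third candidate $e$ to $C_2$, so $C_2=\{d,e\}$. The lower quota of $C_1$ is $\lfloor 3\cdot 2/3\rfloor=2$. The committee $W=\{a,d,e\}$ gives $C_1$ only one seat, yet it satisfies JR, since every voter approves some member of $W$. So $W\in\mathrm{JR}(\mathcal I)\setminus\LQ(\mathcal I)$.

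The main subtlety is only bookkeeping, namely ensuring the counterexample is a valid party-list instance with $m\ge k$. No step is a real obstacle.
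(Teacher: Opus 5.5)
Your proposal is correct and follows essentially the same route as the paper: the lemma on witness-based notions gives independence of losers and robustness to fully satisfied voters, you check the natural witness function, monotonicity and approval swaps directly, and a small party-list counterexample handles lower quota. The differences are cosmetic. You explicitly verify property (ii) of the witness function, which the paper leaves implicit. Your counterexample ($n=k=3$, parties $\{a,b\}$ and $\{d,e\}$, committee $\{a,d,e\}$) replaces the paper's four-voter instance. In a write-up, state only this final instance and drop the abandoned $n=k=2$ attempt.
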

\begin{proof}
    JR satisfying \textcolor{green!30!gray}{\ifsv} and \textcolor{green!30!gray}{independence of losers} follows from \Cref{general_lem}.
    
    \noindent\textbf{\textcolor{green!30!gray}{Monotonicity}:} Consider an instance $\mathcal{I} = (N, A, C, k)$, committee $W \in JR(\mathcal{I})$ and approval profile $A' \supseteq A$ with $A'_i \setminus A_i \subseteq W$ for all $i \in N$. Let $c \notin W$ be any candidate unselected by $W$ and consider any group $N' \subseteq \{i \in N \colon c \in A'_i\}$ of voters. As $c \notin W$ we also know that $N' \subseteq \{i \in N \colon c \in A_i\}$. i.e., they also approve of $c$ in the original instance. As $W$ satisfies JR in $\mathcal{I}$, we know that either $\lvert N'\rvert < \frac{n}{k}$ or there must exist a $c' \in W$ and $i \in N'$ with $c' \in A_i$. However, then as we only added approvals to candidates in $W$ it also holds that $c' \in A'_i$. Hence, $W$ also satisfies JR in the instance $(N, A', C, k)$.

    \noindent\textbf{\textcolor{green!30!gray}{Merge-proofness and Individual Discontentment}:}
Let $w$ be the natural witness function of JR, i.e., $w(\mathcal{I},W)$ contains every set $N'\subseteq N$ such that $\lvert N'\rvert \ge \frac{n}{k}$, $\bigcap_{i\in N'} A_i \neq \emptyset$, and $A_i \cap W = \emptyset$ for all $i\in N'$. For any $N' \in w(\mathcal{I},W)$, any $c\in\bigcap_{i\in N'} A_i$ cannot be in $W$ (otherwise each voter in $N'$ would be represented), hence $\bigcap_{i\in N'} A_i \setminus W \neq \emptyset$ and JR together with $w$ is cohesiveness-based.

To see that JR satisfies individual discontentment, let $N'\in w(\mathcal{I},W)$ and  $\mathcal{I}^{\mathrm{ID}}=(N,A^{\mathrm{ID}},C,k)$ be the instance with $A^{\mathrm{ID}}_i=A_i$ for all $i\notin N'$ and $A^{\mathrm{ID}}_i=A_j$ for all $i\in N'$ and some $j\in N'$. As $A_j\cap W=\emptyset$, no voter in $N'$ is represented by $W$ in $\mathcal{I}^{\mathrm{ID}}$. Moreover, since $N'$ is $1$-cohesive in $\mathcal{I}$, we have $\bigcap_{i\in N'} A_i \subseteq A_j = \bigcap_{i\in N'} A^{\mathrm{ID}}_i$, so $N'$ remains $1$-cohesive and $1$-large. Thus, $N' \in w(\mathcal{I}^{\mathrm{ID}},W)$.

For merge-proofness, let $\mathcal{I}^{\mathrm{MP}}=(N,A^{\mathrm{MP}},C,k)$ be any instance with $A^{\mathrm{MP}}_i=A_i$ for all $i\notin N'$ and $A_i \subseteq A^{\mathrm{MP}}_i \subseteq \bigcup_{j\in N'} A_j$ for all $i\in N'$. Since $A_i \subseteq A^{\mathrm{MP}}_i$, the set $N'$ stays $1$-cohesive and $1$-large. Further, $\bigcup_{j\in N'} A_j$ is disjoint from $W$ (as each $A_j$ is), so $A^{\mathrm{MP}}_i\cap W=\emptyset$ for all $i\in N'$. Hence $N' \in w(\mathcal{I}^{\mathrm{MP}},W)$.

   \begin{wrapstuff}[r,type=figure,width=5cm]
    \centering
    \begin{tikzpicture}
    [yscale=0.6,xscale=0.8,
    voter/.style={anchor=south}]
    
        \foreach \i in {1,...,4}
    		\node[voter] at (\i-0.5, -1) {$\i$};
        
        \draw[fill=teal!\strstr, ultra thick] (0, 0) rectangle (2, 1);
        \draw[fill=teal!\strstr,ultra thick] (0, 1) rectangle (2, 2);
        \draw[fill=teal!\strstr, ultra thick] (0, 2) rectangle (2, 3);
        \draw[fill=magenta!\strstr, ultra thick] (2, 0) rectangle (4, 1);
        \draw[fill=magenta!\clrstr] (2, 1) rectangle (4, 2);
        \draw[fill=magenta!\clrstr] (2, 2) rectangle (4, 3);
        \node at ( 1, 0.5) {$c_{1}$};
        \node at ( 1, 1.5) {$c_{2}$};
        \node at ( 1, 2.5) {$c_{3}$};
        \node at ( 3, 0.5) {$c_{4}$};
        \node at ( 3, 1.5) {$c_{5}$};
        \node at ( 3, 2.5) {$c_{6}$};
    \end{tikzpicture}
    \caption{Example for JR and lower quota for party-lists.}
    \label{fig:party}
\end{wrapstuff}
\noindent\textbf{\textcolor{red!40!gray}{Lower Quota for party-lists}:} To see that JR does not satisfy lower quota for party-lists consider the instance depticted in \Cref{fig:party} for $k = 4$. One possible committee satisfying JR is $\{c_1, c_2, c_3, c_4\}$. This however, is a lower quota for party-lists violation, as the two voters $3$ and $4$ together form a party deserving $4\cdot \frac{2}{4} = 2$ seats.

    \noindent\textbf{\textcolor{green!30!gray}{Independence of Approval Swaps}:}
Consider an instance $\mathcal{I}=(N,A,C,k)$, a committee $W\in JR(\mathcal{I})$ and an approval profile $\bar A$ such that $\bar A_i\setminus W = A_i\setminus W$ and $\lvert \bar A_i \cap W\rvert = \lvert A_i \cap W\rvert$ for all $i \in N$. We show that $W\in JR(N,\bar A,C,k)$.
Let $N'\subseteq N$ be any $1$-large and $1$-cohesive group in $(N,\bar A,C,k)$. If there exists $i\in N'$ with $\bar A_i\cap W\neq\emptyset$, then $W$ represents $N'$ and we are done. Otherwise, $\bar A_i\cap W=\emptyset$ for all $i\in N'$, and by $\lvert \bar A_i \cap W\rvert = \lvert A_i \cap W\rvert$ we also have $A_i\cap W=\emptyset$ for all $i\in N'$. Together with $\bar A_i\setminus W = A_i\setminus W$, this implies $\bar A_i=A_i$ for all $i\in N'$. Hence, $N'$ is also $1$-cohesive in $\mathcal{I}$ and, since no voter in $N'$ is represented by $W$, this contradicts $W\in JR(\mathcal{I})$. Therefore $W\in JR(N,\bar A,C,k)$.

\wrapstuffclear
\end{proof}
\subsection{PJR}
\begin{proposition}
PJR is a \happygreen{witness-based proportionality notion satisfying independence of losers, \ifsv, and lower quota extension}. Together with its natural witness PJR is \happygreen{cohesiveness-based and satisfies individual discontentment and merge-proofness}. PJR does not satisfy \sadred{monotonicity or independence of approval swaps}. \label{pjr:prop}   
\end{proposition}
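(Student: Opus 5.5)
The claim splits into positive and negative parts, which I will handle in this order.

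\textbf{Positive parts.} First, witness-basedness. I take the natural witness function $w$ of PJR: it contains all $N'$ that are $\ell$-large and $\ell$-cohesive for some $\ell\in[k]$ with $\lvert\bigcup_{i\in N'}(A_i\cap W)\rvert<\ell$. Property (i) of a witness function holds by definition. For (ii), take a local embedding $(\Phi_{N'},\Phi_C)$. It preserves $\lvert N'\rvert$, and since $\lvert N\rvert=\lvert\hat N\rvert$ the largeness threshold $\ell n/k$ is unchanged. It maps $\bigcap_{i\in N'}A_i$ injectively into $\bigcap_{i\in\hat N'}\hat A_i$, so cohesiveness is preserved. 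Finally, restricted to $W$ it is a bijection onto $\hat W$ that preserves approvals, so $\bigcup_{i\in N'}(A_i\cap W)$ is mapped bijectively onto $\bigcup_{i\in\hat N'}(\hat A_i\cap\hat W)$.

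Next, cohesiveness-basedness. Suppose $N'$ is a witness for $\ell$ and write $T=\bigcap_{i\in N'}A_i$, so $\lvert T\rvert\ge\ell$. If $T\subseteq W$, then $T\subseteq\bigcup_{i\in N'}(A_i\cap W)$, and this union would have size at least $\ell$, contradicting the witness property. Hence $T\setminus W\neq\emptyset$. With this in hand, independence of losers and \ifsv follow from \Cref{lem:witness_iol_ifsv}.

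The lower quota extension property needs two inclusions. The inclusion $\pjr\subseteq\LQ$ is \Cref{prop:lq}. For the reverse inclusion I also use \Cref{prop:lq}, which gives $\LQ(\mathcal I)=\pjr(\mathcal I)$ on party-list instances.

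For individual discontentment, replace every ballot in $N'$ by $A_j$. The intersection becomes $A_j\supseteq T$, so $N'$ stays $\ell$-cohesive and $\ell$-large. The new union equals $A_j\cap W$, which is a subset of the old union and hence has size below $\ell$.

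For merge-proofness, each new ballot satisfies $A_i\subseteq A'_i$, so the intersection can only grow. Each new ballot also lies inside $\bigcup_{j\in N'}A_j$, so the union within $W$ is unchanged. Both conditions of the witness therefore persist.

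\textbf{Negative parts.} For the failure of monotonicity I cite \Cref{exp:pjr_mon}. For the failure of independence of approval swaps I construct a small counterexample. Take $n=2$ and $k=2$, with $A_1=\{c_1,c_3\}$, $A_2=\{c_2,c_3\}$, and $W=\{c_1,c_2\}$. Here $\{1,2\}$ is $2$-large but only $1$-cohesive, and it collectively receives both $c_1$ and $c_2$, so PJR holds. Now swap voter 2's approval of $c_2$ to $c_1$, giving $A'_2=\{c_1,c_3\}$. This keeps $A_2\setminus W$ and $\lvert A_2\cap W\rvert$ unchanged, as the axiom requires. The group is now $2$-cohesive on $\{c_1,c_3\}$ but receives only $\{c_1\}$, which violates PJR.

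\textbf{Main obstacle.} I expect the only delicate step to be the cohesiveness-basedness argument, namely the observation that $T\subseteq W$ would already force the union to have size at least $\ell$. Everything else is routine checking of the definitions.
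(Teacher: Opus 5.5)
Your proposal is correct and follows the paper's proof essentially step for step: cohesiveness-basedness via $T\subseteq W$ forcing a large union, \Cref{lem:witness_iol_ifsv} for independence of losers and \ifsv, \Cref{prop:lq} for the lower quota extension, the intersection-grows/union-does-not-grow arguments for individual discontentment and merge-proofness, and \Cref{exp:pjr_mon} for monotonicity. The differences are that you explicitly verify property (ii) of the natural witness function, which the paper leaves implicit, and that you replace the paper's \Cref{fig:pjr_ias} with a valid and smaller two-voter instance with $k=2$ for the failure of independence of approval swaps.
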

\begin{proof}
PJR satisfying \textcolor{green!30!gray}{\ifsv} and \textcolor{green!30!gray}{independence of losers} follows from \Cref{general_lem}. Moreover, PJR is a \happygreen{lower quota extension} by \Cref{prop:lq}.

\noindent\textbf{\textcolor{green!30!gray}{Cohesiveness-Based, Merge-proofness, and Individual Discontentment}:}
Let $w^{\pjr}$ be the natural witness of PJR. Consider an instance $\mathcal{I}=(N,A,C,k)$, a committee $W$, and a witness $N' \in w^{\pjr}(\mathcal{I},W)$. Then there exists some $\ell \in [k]$ such that $N'$ is $\ell$-large and $\ell$-cohesive, but
\[
\Bigl\lvert W \cap \bigcup_{i\in N'} A_i \Bigr\rvert < \ell.
\]
Since $\lvert \bigcap_{i\in N'} A_i\rvert \ge \ell$ but fewer than $\ell$ candidates approved by (at least one voter in) $N'$ are contained in $W$, we must have $\bigcap_{i\in N'} A_i \setminus W \neq \emptyset$, i.e., $w^{\pjr}$ is \happygreen{cohesiveness-based}.

For \happygreen{individual discontentment}, consider $\mathcal{I}'=(N,A',C,k)$ where $A'_i=A_i$ for all $i\notin N'$ and $A'_i=A_j$ for all $i\in N'$ and some fixed $j\in N'$. Then $\bigcap_{i\in N'} A'_i = A_j \supseteq \bigcap_{i\in N'} A_i$, so $N'$ remains $\ell$-cohesive and $\ell$-large. Also, $\bigcup_{i\in N'} A'_i = A_j \subseteq \bigcup_{i\in N'} A_i$, hence
\[
\Bigl\lvert W \cap \bigcup_{i\in N'} A'_i \Bigr\rvert \le
\Bigl\lvert W \cap \bigcup_{i\in N'} A_i \Bigr\rvert < \ell,
\]
so $N'\in w^{\pjr}(\mathcal{I}',W)$.

For \happygreen{merge-proofness}, consider $\mathcal{I}'=(N,A',C,k)$ where $A'_i=A_i$ for $i\notin N'$ and $A_i\subseteq A'_i \subseteq \bigcup_{j\in N'} A_j$ for all $i\in N'$. Then $\bigcap_{i\in N'} A'_i \supseteq \bigcap_{i\in N'} A_i$, so $N'$ stays $\ell$-cohesive. Moreover,
\[
\bigcup_{i\in N'} A'_i \subseteq \bigcup_{i\in N'} A_i
\quad\Rightarrow\quad
\Bigl\lvert W \cap \bigcup_{i\in N'} A'_i \Bigr\rvert \le
\Bigl\lvert W \cap \bigcup_{i\in N'} A_i \Bigr\rvert < \ell,
\]
and hence $N' \in w^{\pjr}(\mathcal{I}',W)$.

\noindent\textbf{\sadred{Monotonicity}:}
PJR does not satisfy monotonicity; this is witnessed by the instance and committee in \Cref{fig:pjr_mon} (see also \Cref{exp:pjr_mon}).

\noindent\textbf{\sadred{Independence of Approval Swaps}:}
\begin{wrapstuff}[r,type=figure,width=5cm]
\centering
\begin{tikzpicture}
[yscale=0.6,xscale=0.8,
voter/.style={anchor=south}]

    \foreach \i in {1,...,4}
        \node[voter] at (\i-0.5, -1) {$\i$};

    \draw[fill=teal!\strstr, ultra thick] (0, 0) rectangle (2, 1);
    \draw[fill=teal!\clrstr] (0, 1) rectangle (2, 2);
    \draw[fill=teal!\strstr, ultra thick] (0, 2) rectangle (1, 3);
    \draw[fill=magenta!\strstr, ultra thick] (2, 0) rectangle (4, 1);
    \draw[fill=magenta!\strstr, ultra thick] (2, 1) rectangle (4, 2);
    \draw[fill=magenta!\strstr, ultra thick] (1, 2) rectangle (4, 3);
    \draw[fill=magenta!\strstr, ultra thick] (2, 3) rectangle (4, 4);

    \node at ( 1, 0.5) {$c_{1}$};
    \node at ( 1, 1.5) {$c_{2}$};
    \node at ( 0.5, 2.5) {$c_{3}$};
    \node at ( 3, 0.5) {$c_{4}$};
    \node at ( 3, 1.5) {$c_{5}$};
    \node at ( 2.5, 2.5) {$c_{6}$};
    \node at ( 3, 3.5) {$c_{7}$};
\end{tikzpicture}
\caption{Example for PJR and independence of approval swaps.}
\label{fig:pjr_ias}
\end{wrapstuff}
To see that PJR does not satisfy independence of approval swaps, consider the instance depicted in \Cref{fig:pjr_ias} with $k=6$ and the (bold) committee $W=\{c_1,c_3,c_4,c_5,c_6,c_7\}$. One checks that $W$ satisfies PJR: in particular, the group $\{1,2\}$ is $2$-large and $2$-cohesive and obtains two approved candidates in $W$ (namely $c_1$ and $c_3$ or $c_6$). Now perform an approval swap for voter $2$ inside $W$, replacing $c_6$ by $c_3$. This does not change $A_2\setminus W$ and keeps $\lvert A_2\cap W\rvert$ constant. However, after the swap the group $\{1,2\}$ becomes $3$-cohesive (while still $3$-large for $k=6$ and $n=4$), but $W$ still contains only two candidates approved by at least one of them, contradicting PJR. 
\wrapstuffclear
\end{proof}

\subsection{EJR}
\begin{proposition}
EJR is a \happygreen{witness-based proportionality notion satisfying independence of losers, \ifsv, and lower quota extension}. Together with its natural witness EJR is \happygreen{cohesiveness-based and satisfies individual discontentment}. EJR does not satisfy \sadred{merge-proofness, monotonicity, or independence of approval swaps}.    
\end{proposition}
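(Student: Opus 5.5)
The positive parts can be mostly reused from the PJR case. \Cref{general_lem} gives independence of losers and \ifsv, since EJR is witness-based via its natural witness function, and we will show below that this witness function is cohesiveness-based. \Cref{prop:lq} gives that EJR is a lower quota extension. The natural witness $w^{\ejr}(\mathcal I,W)$ consists of all $N'$ that are $\ell$-large and $\ell$-cohesive for some $\ell\in[k]$ with $\lvert A_i\cap W\rvert<\ell$ for all $i\in N'$. Checking that this is a witness function works as in the EJR+ observation: a local embedding preserves $\lvert N'\rvert$ and $n$, it preserves each $\lvert A_i\cap W\rvert$ because $\Phi_C$ maps $W$ onto $\hat W$, and it maps the common approvals injectively, so cohesiveness is preserved.

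\emph{Cohesiveness-based.} Every $i\in N'$ satisfies $\lvert A_i\cap W\rvert<\ell\le\lvert\bigcap_{j\in N'}A_j\rvert$, so $\bigcap_j A_j\not\subseteq W$.

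\emph{Individual discontentment.} Replace all ballots in $N'$ by $A_j$. The new intersection is $A_j\supseteq\bigcap A_i$, so $N'$ stays $\ell$-cohesive and $\ell$-large, and every member now has utility $\lvert A_j\cap W\rvert<\ell$.

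For the negative parts we use counterexamples. Monotonicity fails by \Cref{exp:pjr_mon}, which was already shown for EJR. For independence of approval swaps, reuse \Cref{fig:pjr_ias} with $k=6$ and $W=\{c_1,c_3,c_4,c_5,c_6,c_7\}$.
\begin{itemize}
\item Before the swap, $W$ satisfies EJR. Voter $1$ approves $c_1,c_3\in W$, so utility $2$; voter $2$ approves $c_1,c_6$, also utility $2$. The group $\{1,2\}$ is only $2$-cohesive, since its common approvals are $\{c_1,c_2\}$. Voters $3,4$ approve $c_4,c_5,c_6,c_7$, utility $4$. Any subset of $\{1,2\}$ is at most $2$-large, because $n/k=2/3$. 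Mixed groups containing voters $1$ or $2$ together with voters $3$ or $4$ commonly approve at most $c_6$, so they are at most $1$-cohesive, and every member already has utility $\ge 2$.
\item After voter $2$ swaps $c_6$ for $c_3$, the group $\{1,2\}$ becomes $3$-cohesive via $\{c_1,c_2,c_3\}$ and is $3$-large. Both voters still have utility $2<3$, so EJR is violated.
\end{itemize}

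\emph{Merge-proofness.} This is the step needing a new construction, and it is the main obstacle: we must argue that no witness function of EJR satisfies merge-proofness, not just the natural one. The plan is to take an instance where a two-voter group $N'=\{1,2\}$ violates EJR with complementary winner approvals. Take $n=2$, $k=2$, $A_1=\{c,a\}$, $A_2=\{c,b\}$, $W=\{a,b\}$. Then $\{1,2\}$ is $1$-cohesive via $c$ and $2$-large, but not $2$-cohesive. It is also $1$-large with utilities $1\ge1$, so $W$ satisfies EJR here, and we need $W$ to fail.

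Choose instead $A_1=A_2=\{c,d,a\}$ and $A_3=\{b\}$, with $n=3$, $k=3$, $W=\{a,b,e\}$. The group $\{1,2\}$ is $2$-large and $2$-cohesive (via $c,d$), and both voters have utility $1<2$, so $W\notin\ejr$. Any witness $N'$ must be cohesiveness-consistent with EJR failing. The plan is to argue that every candidate witness, namely $\{1,2\}$, $\{1\}$, or $\{2\}$, can be merged so that the merged profile satisfies EJR. The natural move is to split ballots, for example $A_1=\{c,d,a\}$ and $A_2=\{c,d,e\}$. Merge-proofness then lets voters adopt each other's approvals, giving both voters utility $2$ in $\{a,e\}$, so EJR holds there.

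The concrete example should therefore be: $n=3$, $k=3$, $W=\{a,b,e\}$, $A_1=\{c,d,a\}$, $A_2=\{c,d,e\}$, $A_3=\{b\}$. Here $\{1,2\}$ is the unique violating structure: singletons are only $1$-large, and voter $3$ is fully satisfied. The argument then goes as follows.
\begin{itemize}
\item Any witness must be a nonempty subset of $\{1,2,3\}$.
\item A witness containing voter $3$ is ruled out. Locally embed it into an instance where the remaining voters are fully satisfied, so $W$ is EJR there while the embedding preserves the witness, a contradiction.
\item For a witness $N'\subseteq\{1,2\}$, apply merge-proofness to let the voters in $N'$ take the union $\{c,d,a,e\}$.
\item In the merged profile, if $N'=\{1,2\}$ both voters have utility $2$. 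If $N'$ is a singleton, locally embed into an instance where the other voter is satisfied.
\item In each case we obtain an instance where $W$ satisfies EJR while the witness must persist, a contradiction.
\end{itemize}
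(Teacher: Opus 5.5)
\paragraph{Where the proposal agrees with the paper.} Your positive parts, the monotonicity counterexample and the approval-swap counterexample are the paper's arguments. There is one harmless slip in the approval-swap check. In the instance of \Cref{fig:pjr_ias} we have $n/k=2/3$, so $\{1,2\}$ is $3$-large, not ``at most $2$-large''. The pre-swap verification still goes through, because $\{1,2\}$ is only $2$-cohesive and both voters have utility $2$.

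\paragraph{Gap in the merge-proofness part.} The paper only shows that EJR \emph{with its natural witness} fails merge-proofness, using the six-voter instance of \Cref{fig:ejr_mp}. Your final three-voter instance, with the merge $A'_1=A'_2=\{a,c,d,e\}$, already proves this by the same mechanism: merging raises each voter's utility to $2$.

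You aim for the stronger claim that \emph{no} witness function is merge-proof, and there your case split fails for $N^\star=\{1,2,3\}$. You dispose of witnesses containing voter $3$ by embedding into an instance where ``the remaining voters'' are fully satisfied. For $N^\star=\{1,2,3\}$ there are no remaining voters. Any local embedding of $(\{1,2,3\},W)$ must preserve all three ballots on $\{a,b,c,d,e\}$; the voters can only gain approvals of fresh, unselected candidates outside the image of $\Phi_C$. Hence in every target instance $\{1,2\}$ is still $2$-large and $2$-cohesive with utilities $1$, and no contradiction arises.

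\paragraph{How to repair it.} Use merge-proofness for this case too. Since $A_1\cup A_2\subseteq\bigcup_{j\in N^\star}A_j$, you may pass to $A'_1=A'_2=\{a,c,d,e\}$ and $A'_3=\{b\}$. This is exactly the EJR-satisfying profile from your $\{1,2\}$ case. The correct dichotomy is:
\begin{itemize}
\item witnesses containing both $1$ and $2$ are ruled out by this merge;
\item witnesses missing voter $1$ or voter $2$ are ruled out by embedding into $A_1=\{a,c,d\}$ (or $A_2=\{c,d,e\}$ for the voter that is kept), with the missing one(s) of $1,2$ approving only members of $W$, e.g.\ $A_2=\{e\}$, and $A_3=\{b\}$.
\end{itemize}
For singleton witnesses merge-proofness is vacuous anyway, since then $\bigcup_{j\in N^\star}A_j=A_i$.

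Also delete the two abandoned examples. In particular, $A_1=A_2=\{c,d,a\}$ could never work, because merging identical ballots changes nothing.
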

\begin{proof}
EJR satisfying \textcolor{green!30!gray}{\ifsv} and \textcolor{green!30!gray}{independence of losers} follows from \Cref{general_lem}. \happygreen{Lower quota for party-lists} follows from \Cref{prop:lq}.

\noindent\textbf{\textcolor{green!30!gray}{Cohesiveness-Based and Individual Discontentment}:}
Let $w^{\ejr}$ denote the natural witness function of EJR. Consider an instance $\mathcal{I}=(N,A,C,k)$, a committee $W$, and a witness $N' \in w^{\ejr}(\mathcal{I},W)$. Then there exists some $\ell \in [k]$ such that $N'$ is $\ell$-large and $\ell$-cohesive and
\[
\lvert A_i \cap W\rvert < \ell \qquad \text{for all } i \in N'.
\]
Since $\lvert \bigcap_{i\in N'} A_i\rvert \ge \ell$, if it held that $\bigcap_{i\in N'} A_i \subseteq W$, then every voter $i\in N'$ would approve at least $\ell$ members of $W$, contradicting $\lvert A_i \cap W\rvert < \ell$. Hence, $\bigcap_{i\in N'} A_i \setminus W \neq \emptyset$ and $w^{\ejr}$ is \happygreen{cohesiveness-based}.

To see \happygreen{individual discontentment}, pick any $c \in \bigcap_{i\in N'} A_i \setminus W$ and consider the modified instance $\mathcal{I}'=(N,A',C,k)$ with $A'_i=A_i$ for all $i\notin N'$ and $A'_i=A_j$ for all $i\in N'$ and some fixed $j\in N'$. Then $c\in A'_i$ for all $i\in N'$ and, since $\lvert A_j\cap W\rvert < \ell$, we also have $\lvert A'_i \cap W\rvert < \ell$ for all $i\in N'$. Thus $N'$ is still a witness for an EJR violation in $\mathcal{I}'$, i.e., $N' \in w^{\ejr}(\mathcal{I}',W)$.

\noindent\textbf{\sadred{Merge-proofness}:}
\begin{wrapstuff}[r,type=figure,width=5cm]
\centering
\begin{tikzpicture}
[yscale=0.6,xscale=0.8,
voter/.style={anchor=south}]

    \foreach \i in {1,...,6}
        \node[voter] at (\i-0.5, -1) {$\i$};

    \draw[fill=teal!\strstr, ultra thick] (0, 0) rectangle (2, 1);
    \draw[fill=teal!\strstr, ultra thick] (2, 1) rectangle (4, 2);
    \draw[fill=magenta!\strstr, ultra thick] (4, 2) rectangle (6, 3);

    \draw[fill=teal!\clrstr] (0, 3) rectangle (4, 4);
    \draw[fill=teal!\clrstr] (0, 4) rectangle (4, 5);

    \node at ( 1, 0.5) {$c_{1}$};
    \node at ( 3, 1.5) {$c_{2}$};
    \node at ( 5, 2.5) {$c_{3}$};
    \node at ( 2, 3.5) {$c_{4}$};
    \node at ( 2, 4.5) {$c_{5}$};
\end{tikzpicture}
\caption{Example for EJR and merge-proofness.}
\label{fig:ejr_mp}
\end{wrapstuff}
Consider the instance in \Cref{fig:ejr_mp} with $k=3$ and the (bold) committee $W=\{c_1,c_2,c_3\}$. Let $N'=\{1,2,3,4\}$. Then $N'$ is $2$-large (since $\lvert N'\rvert=4 = 2\cdot \frac{n}{k}$) and $2$-cohesive because $\bigcap_{i\in N'} A_i = \{c_4,c_5\}$. Moreover, every voter in $N'$ approves exactly one member of $W$ (either $c_1$ or $c_2$), so $\lvert A_i\cap W\rvert < 2$ for all $i\in N'$. Hence $N'$ is a witness of an EJR violation for $W$.

Now consider the merged profile $A'$ with $A'_i = A_i$ for $i\notin N'$ and $A'_i=\{c_1,c_2,c_4,c_5\}$ for all $i\in N'$. Then $A_i \subseteq A'_i \subseteq \bigcup_{j\in N'} A_j$ for all $i\in N'$, but every voter in $N'$ now approves two members of $W$ (namely $c_1$ and $c_2$), so $N'$ is no longer a witness for $W$. Thus, EJR with its natural witness does not satisfy merge-proofness.
\wrapstuffclear

\noindent\textbf{\sadred{Monotonicity}:}
The instance and committee depicted in \Cref{fig:pjr_mon} (used for PJR) also witness a monotonicity violation for EJR.

\noindent\textbf{\sadred{Independence of Approval Swaps}:}
Similarly, the instance and committee depicted in \Cref{fig:pjr_ias} witness an independence of approval swaps violation for EJR.
\end{proof}

\subsection{Core}
\begin{proposition}
Core stability is a \happygreen{witness-based proportionality notion satisfying independence of losers, lower quota for party-lists, monotonicity, and \ifsv}. Together with its natural witness the core satisfies \happygreen{individual discontentment}. There does not exist any witness function for the core that is \sadred{cohesiveness-based} or satisfies \sadred{merge-proofness}. Core stability does not satisfy \sadred{independence of approval swaps}.    
\end{proposition}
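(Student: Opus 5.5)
The proof treats each claimed property separately, using the \emph{natural witness function} of the core. It assigns to $(\mathcal I,W)$ every nonempty $N'\subseteq N$ for which some $T\subseteq C$ satisfies $\lvert N'\rvert\ge \lvert T\rvert\frac nk$ and $\lvert A_i\cap T\rvert>\lvert A_i\cap W\rvert$ for all $i\in N'$. Property~(i) of \Cref{def:witness_Based} holds by definition.

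\textbf{Property~(ii).} Given such $N'$ and $T$, I first shrink $T$ to $T\cap\bigl(W\cup\bigcup_{i\in N'}A_i\bigr)$. This keeps every $\lvert A_i\cap T\rvert$ for $i\in N'$ and can only decrease $\lvert T\rvert$. Under a local embedding $(\Phi_{N'},\Phi_C)$ I then use $\Phi_C(T)$. This set has the same size and preserves the approvals of all voters in $N'$ on $T$ and on $W$. Since $\lvert \hat N\rvert=\lvert N\rvert$, the size bound is unchanged, so $\Phi_{N'}(N')$ is a witness for $\hat W$.

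\textbf{Independence of losers, lower quota, individual discontentment.} Independence of losers then follows from \Cref{lem:witness_iol_ifsv}. Lower quota for party-lists follows from \Cref{prop:lq}. For individual discontentment, replace every ballot in $N'$ by $A_j$ for some $j\in N'$. Voter $j$ strictly preferred $T$ to $W$, so every copy of $j$ does as well, and the same $T$ still works.

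\textbf{Robustness to fully satisfied voters.} Since the core is not cohesiveness-based, I cannot use \Cref{lem:witness_iol_ifsv} and argue directly. A voter with $A'_i\subseteq A_i\cap W$ satisfies $\lvert A'_i\cap T\rvert\le\lvert A'_i\rvert=\lvert A'_i\cap W\rvert$ for every $T$. Hence such a voter can never belong to a blocking group. All remaining voters are unchanged, so any blocking pair $(N',T)$ after the change would already have blocked in $\mathcal I$.

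\textbf{Monotonicity.} Each added approval lies in $W$, so it raises $\lvert A_i\cap W\rvert$ by one and raises $\lvert A_i\cap T\rvert$ by at most one. Therefore no new strict preference for a deviation $T$ can arise.

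\textbf{Negative parts.} These need small instances, and constructing them is the main obstacle. The template for non-existence of a witness function $w$ is as follows. Fix $(\mathcal I,W)$ with $W\notin\Core(\mathcal I)$, and enumerate (up to symmetry) all sets $N'$ that $w$ could legally output. For each such $N'$, exhibit a locally embedded copy in an instance $\hat{\mathcal I}$ with the same number of voters in which $W$ is core-stable. Typically $\hat{\mathcal I}$ makes all voters outside $N'$ fully satisfied and keeps the ballots of $N'$ unchanged. Property~(ii) would then force $w(\hat{\mathcal I},W)\neq\emptyset$, which is a contradiction.

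\emph{Cohesiveness-basedness.} I would take a core violation whose blocking set $T$ must contain at least two candidates. I would arrange that every group of voters with a common unselected candidate is, in isolation, too small relative to what it can gain. This mirrors \Cref{fig:ejrppar}, where voters agree pairwise but not jointly on an unselected candidate. The witnesses $w$ may choose are exactly the groups with a common candidate outside $W$, and for each of them I would check by hand that the isolated instance admits no blocking coalition.

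\emph{Merge-proofness.} I would use an instance in which every blocking group $N'$ collectively approves some members of $W$ that are spread over different voters, in the spirit of \Cref{fig:ejr_mp}. Merging then gives every voter of $N'$ all of these members of $W$, which raises each $\lvert A'_i\cap W\rvert$ enough that no $T$ is strictly preferred by the merged group. I must still check that no witness choice escapes, meaning that every admissible $N'$ loses its deviation after some merge.

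\emph{Independence of approval swaps.} A single explicit example suffices. I would adapt \Cref{fig:pjr_ias}: a swap inside $W$ leaves each $\lvert A_i\cap W\rvert$ fixed but changes which candidates in $W$ the voters have in common. This lets a group use a candidate $w\in W$ in $T$ at cost one seat while gaining utility for everyone, which creates a blocking coalition that did not exist before the swap.
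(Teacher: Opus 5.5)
Your positive half is correct and matches the paper's proof: the natural witness, independence of losers via \Cref{lem:witness_iol_ifsv}, lower quota via \Cref{prop:lq}, individual discontentment with the same $T$, and monotonicity. Two of your details are more careful than the paper. Shrinking $T$ to $T\cap(W\cup\bigcup_{i\in N'}A_i)$ is needed before $\Phi_C$ can be applied in property~(ii). Your direct argument for \ifsv is required because the core is not cohesiveness-based, so the lemma does not apply; the paper only says ``analogously''.

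The gap is the negative half. Each of those three claims is an existence-of-counterexample statement, and you give no instance, so none of them is proved. The paper uses \Cref{fig:core_cycle}: $n=6$, $k=3$, $W=\{c_1,c_2,c_3\}$, the pairs $\{1,2\},\{3,4\},\{5,6\}$ each approve one member of $W$, and $c_4,c_5,c_6$ are approved by $\{1,2,5,6\}$, $\{1,2,3,4\}$, $\{3,4,5,6\}$ respectively. Every voter has utility $1$, and the \emph{unique} deviation is $N$ with $T=\{c_4,c_5,c_6\}$. For $\ell=2$ no four voters share two candidates, and for $\ell=3$ only $\{c_4,c_5,c_6\}$ gives all six voters two approvals. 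A cohesiveness-based witness lies inside one support, say $\{1,2,3,4\}$. Deleting voter~6's approval of $c_6$ then makes $W$ core-stable while leaving that witness's ballots unchanged, which contradicts property~(ii). (Your hunch about \Cref{fig:ejrppar} would also work for this part once checked: the grand coalition blocks $\{c_3,c_4\}$ with $\{c_1,c_2\}$.)

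For merge-proofness your plan aims at the wrong target. A witness function need not output a blocking coalition, so ``every admissible $N'$ loses its deviation after some merge'' is not what has to be shown. Moreover, merging is vacuous for witnesses such as $\{1,2\}$, whose voters have identical ballots. The paper proceeds in two steps. First, the embedding argument forces $N^\star$ to meet two different pairs. Second, it merges in only the members of $W$ contained in $\bigcup_{j\in N^\star}A_j$. This destroys the unique deviation, and monotonicity rules out new ones, so $W$ becomes core-stable in the \emph{whole} instance; that is what contradicts property~(i).

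For approval swaps, no adaptation of \Cref{fig:pjr_ias} is needed. There $W$ is core-stable: voters $3,4$ cannot exceed utility $4$, and no $3$-set is fully approved by both $1$ and $2$. After voter~2 swaps $c_6$ for $c_3$, the pair $\{1,2\}$ blocks with $T=\{c_1,c_2,c_3\}$, since $2\ge 3\cdot\frac46$.
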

\begin{proof}
The core satisfying \textcolor{green!30!gray}{\ifsv} and \textcolor{green!30!gray}{independence of losers} follows analogously to \Cref{general_lem}. \happygreen{Lower quota for party-lists} follows from \Cref{prop:lq}.

\noindent\textbf{\textcolor{green!30!gray}{Witness-Based and Individual Discontentment}:}
Let $w^{\Core}$ be the natural witness function of the core.
Then $W\in\Core(\mathcal{I})$ if and only if $w^{\Core}(\mathcal{I},W)=\emptyset$, and the standard local embedding argument as in the proof for EJR+ shows that $w^{\Core}$ is indeed a witness function.

Further, $w^{\Core}$ satisfies \happygreen{individual discontentment}: if $N'\in w^{\Core}(\mathcal{I},W)$ is witnessed by some $\ell$ and $C'$, then for any instance $\mathcal{I}'=(N,A',C,k)$ with $A'_i=A_i$ for $i\notin N'$ and $A'_i=A_j$ for all $i\in N'$ and some $j\in N'$, the same $\ell$ and $C'$ witness $N'\in w^{\Core}(\mathcal{I}',W)$ since
\[
|A'_i\cap C'|=|A_j\cap C'|>|A_j\cap W|=|A'_i\cap W|
\qquad \text{for all } i\in N'.
\]

\begin{wrapstuff}[r,type=figure,width=7.2cm]
\centering
\begin{tikzpicture}
[yscale=0.6,xscale=0.75,
voter/.style={anchor=south}]

\node[name] at (3, 6.5) {Instance $\mathcal{I}$:};

    \foreach \i in {1,...,6}
        \node[voter] at (\i-0.5, -1) {$\i$};

    \draw[fill=teal!\strstr, ultra thick] (0, 0) rectangle (2, 1);
    \draw[fill=teal!\strstr, ultra thick] (2, 1) rectangle (4, 2);
    \draw[fill=teal!\strstr, ultra thick] (4, 2) rectangle (6, 3);

    \draw[fill=magenta!\clrstr] (0, 3) rectangle (2, 4);
    \draw[fill=magenta!\clrstr] (4, 3) rectangle (6, 4);
    \draw[fill=magenta!\clrstr] (0, 4) rectangle (4, 5);
    \draw[fill=magenta!\clrstr] (2, 5) rectangle (6, 6);

    \node at (1, 0.5) {$c_{1}$};
    \node at (3, 1.5) {$c_{2}$};
    \node at (5, 2.5) {$c_{3}$};

    \node at (1, 3.5) {$c_{4}$};
    \node at (5, 3.5) {$c_{4}$};
    \node at (2, 4.5) {$c_{5}$};
    \node at (4, 5.5) {$c_{6}$};
\end{tikzpicture}

\begin{tikzpicture}
[yscale=0.6,xscale=0.75,
voter/.style={anchor=south}]
\node[name] at (3, 6.5) {Instance $\mathcal{I}'$:};
    \foreach \i in {1,...,6}
        \node[voter] at (\i-0.5, -1) {$\i$};

    \draw[fill=teal!\strstr, ultra thick] (0, 0) rectangle (2, 1);
    \draw[fill=teal!\strstr, ultra thick] (2, 1) rectangle (4, 2);
    \draw[fill=teal!\strstr, ultra thick] (4, 2) rectangle (6, 3);

    \draw[fill=magenta!\clrstr] (0, 3) rectangle (2, 4);
    \draw[fill=magenta!\clrstr] (4, 3) rectangle (6, 4);
    \draw[fill=magenta!\clrstr] (0, 4) rectangle (4, 5);
    \draw[fill=magenta!\clrstr] (2, 5) rectangle (5, 6);

    \node at (1, 0.5) {$c_{1}$};
    \node at (3, 1.5) {$c_{2}$};
    \node at (5, 2.5) {$c_{3}$};

    \node at (1, 3.5) {$c_{4}$};
    \node at (5, 3.5) {$c_{4}$};
    \node at (2, 4.5) {$c_{5}$};
    \node at (3.5, 5.5) {$c_{6}$};
\end{tikzpicture}
\caption{Instances used to separate the core from cohesiveness-basedness and merge-proofness.}
\label{fig:core_cycle}
\end{wrapstuff}
\noindent\textbf{\sadred{Cohesiveness-Basedness}:}
Consider the instance $\mathcal{I}$ depicted in \Cref{fig:core_cycle} with $k=3$ and the committee $W=\{c_1,c_2,c_3\}$. Then $W\notin\Core(\mathcal{I})$: for $\ell=3$, the coalition $N'=N$ satisfies $|N'|=6=3\cdot \frac{n}{k}$ and can deviate with $C'=\{c_4,c_5,c_6\}$ since every voter approves exactly two candidates in $C'$ but only one candidate in $W$.

Now assume for contradiction that there exists some witness function $w$ for the core such that the pair $(\Core,w)$ is cohesiveness-based. As $W\notin\Core(\mathcal{I})$, we have $w(\mathcal{I},W)\neq \emptyset$ and can pick some $N^\star\in w(\mathcal{I},W)$. Since $(\Core,w)$ is cohesiveness-based, there exists a candidate
\[
c \in \bigcap_{i\in N^\star} A_i \setminus W.
\]
In \Cref{fig:core_cycle}, the only candidates outside of $W$ are $c_4,c_5,c_6$, hence $c\in\{c_4,c_5,c_6\}$ and $N^\star$ is contained in the support of this candidate. Without loss of generality we assume the candidate is $c_5$ and that $N^\star \subseteq \{1,2,3,4\}$.
We construct the modified instance $\mathcal{I}'$ depicted second in \Cref{fig:core_cycle}.
Now, one can verify that indeed $W\in\Core(\mathcal{I}')$. 
However, $(N^\star, W)$ in $\mathcal{I}$ can be locally embedded into $(N^\star, W)$ in $\mathcal{I}'$ and hence, must also have been a witness for this instance, a contradiction.
Therefore, there does not exist any witness function for the core that is cohesiveness-based.
\wrapstuffclear

\noindent\textbf{\sadred{Merge-Proofness}:}
We again use the instance $\mathcal{I}$ and committee $W$ from \Cref{fig:core_cycle}.
First observe that in this instance, the \emph{only} deviation against $W$ is the grand coalition with $\ell=3$:
for $\ell=1$ every voter already approves one member of $W$, so no coalition can deviate with a single candidate;
for $\ell=2$, a deviating coalition would need size at least $2\cdot\frac{n}{k}=4$ and would require two candidates $C'$ such that every voter in the coalition approves both (as they need $|A_i\cap C'|>1$), but no four voters in \Cref{fig:core_cycle} have two commonly approved candidates.

Assume for contradiction that there exists a witness function $w$ for the core satisfying merge-proofness.
Since $W\notin\Core(\mathcal{I})$, pick $N^\star\in w(\mathcal{I},W)$.
We claim that $N^\star$ must contain voters approving at least two different members of $W$:
otherwise $N^\star\subseteq \{1,2\}$ or $N^\star\subseteq \{3,4\}$ or $N^\star\subseteq \{5,6\}$, and by the same construction as above we obtain an instance $\tilde{\mathcal I}$ in which $W$ is core-stable, while $(N^\star,W)$ locally-embeds into this instance and hence $N^\star$ would remain a witness, contradicting property~(i).

Thus, $\bigcup_{j\in N^\star} A_j$ contains at least two members of $W$.
Now define a merged profile $A'$ by setting $A'_i=A_i$ for $i\notin N^\star$ and, for $i\in N^\star$,
\[
A'_i \coloneqq A_i \cup \Bigl(\bigcup_{j\in N^\star} A_j \cap W\Bigr).
\]
Then $A_i \subseteq A'_i \subseteq \bigcup_{j\in N^\star} A_j$ for all $i\in N^\star$, so this is an admissible merge according to merge-proofness.
Moreover, every voter $i\in N^\star$ now approves at least two members of $W$, i.e., $|A'_i\cap W|\ge 2$.
Since we did not add any approvals to candidates outside $W$, every such voter still approves exactly two of $\{c_4,c_5,c_6\}$, hence
\[
|A'_i\cap \{c_4,c_5,c_6\}| = 2 \qquad \text{for all } i\in N^\star,
\]
and therefore the unique deviation of $\mathcal{I}$ (the grand coalition with $C'=\{c_4,c_5,c_6\}$) is destroyed in the new instance $(N,A',C,k)$ (as the voters in $N^\star$ no longer strictly improve).
Since no other deviation existed before and increasing approvals within $W$ cannot create new deviating coalitions, we conclude that $W\in\Core(N,A',C,k)$.

However, by merge-proofness we must have $N^\star\in w((N,A',C,k),W)$, contradicting $W\in\Core(N,A',C,k)$ and property~(i).
Hence, there does not exist any witness function for the core satisfying merge-proofness.

\noindent\textbf{\sadred{Independence of Approval Swaps}:}
The instance and committee depicted in \Cref{fig:pjr_ias} also witness an independence of approval swaps violation for the core.

\noindent\textbf{\textcolor{green!30!gray}{Monotonicity}:}
Consider an instance $\mathcal{I} = (N, A, C, k)$, committee $W \in \Core(\mathcal{I})$ and approval profile $A' \supseteq A$ with $A'_i \setminus A_i \subseteq W$ for all $i \in N$. Consider any group $N' \subseteq N$ of voters and $C' \subseteq C$ of candidates with $\lvert C'\rvert \le \ell$ and $\lvert N'\rvert \ge \ell\frac{n}{k}$. As $W$ satisfies the core in $\mathcal{I}$ we know that there exists some $i \in N'$ with
\[
\lvert A_i \cap W \rvert \ge \lvert A_i \cap C'\rvert.
\]
Thus, since $A'_i \setminus A_i \subseteq W$ we also get that 
\[
\lvert A'_i \cap W\rvert = \lvert A_i \cap W\rvert + \lvert A'_i \setminus A_i\rvert \ge \lvert A_i \cap C'\rvert + \lvert A'_i \setminus A_i\rvert \ge \lvert A'_i \cap C'\rvert.
\]
Hence, $N'$ and $C'$ are not a deviating coalition and thus $W$ also satisfies the core in the instance $(N, A', C, k)$. 
\end{proof}

Next, for completeness, we study the compliance of various other axioms studied throughout the approval-based multiwinner voting literature.
\subsection{FJR}
We begin with the notion of full justified representation (FJR) introduced by \citet*{PPS21a}, as an axiom that is still always satisfiable, yet only slightly weaker than the core. The proportionality notion of FJR is based on a generalization of cohesive groups. Given an instance $\mathcal{I} = (N, A, C, k)$ for a subset $T \subseteq C$ of candidates and $\beta \in [k]$ we say that a group $N' \subseteq N$ of voters is $(\beta, T)$-cohesive if $\lvert A_i \cap T\rvert \ge \beta$ for all $i\in N'$ and $\lvert N'\rvert \ge \frac{\lvert T\rvert n}{k}$.
\begin{definition}[\citep{PPS21a}]
    A committee $W$ satisfies \emph{full justified representation (FJR)} if for every $(\beta, T)$-cohesive group $N' \subseteq N$ of voters there exists some $i \in N'$ with $\lvert A_i \cap W\rvert \ge \beta$.
\end{definition}
\begin{proposition}
FJR is a \happygreen{witness-based proportionality notion} satisfying \happygreen{independence of losers, lower quota for party-lists}, and \happygreen{\ifsv}. Together with its natural witness FJR satisfies \happygreen{individual discontentment}. There does not exist any witness function for the core that is \sadred{cohesiveness-based} or satisfies \sadred{merge-proofness}. FJR does not satisfy \sadred{independence of approval swaps} or \sadred{monotonicity}.    
\end{proposition}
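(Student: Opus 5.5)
The statement has several parts, and I would prove them in the order the paper uses for the other notions. Witness-basedness comes first. The natural witness function $w^{\mathrm{FJR}}(\mathcal I,W)$ collects every $N'$ that is $(\beta,T)$-cohesive for some $T\subseteq C$ and $\beta\in[k]$ and has $\lvert A_i\cap W\rvert<\beta$ for all $i\in N'$. Property (i) holds by definition. For property (ii), the key observation is that we may assume $T\subseteq \bigcup_{i\in N'}A_i$: unapproved candidates in $T$ only enlarge $\lvert T\rvert$, so dropping them keeps the group cohesive. With this, $\Phi_C(T)$ is well defined, $\lvert\Phi_C(T)\rvert=\lvert T\rvert$, and approvals of $N'$ are preserved, so $\lvert \hat A_{\Phi(i)}\cap\Phi_C(T)\rvert\ge\beta$ and $\lvert\hat A_{\Phi(i)}\cap\hat W\rvert<\beta$. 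Since $\lvert\hat N\rvert=n$, the size condition $\lvert N'\rvert\ge \lvert T\rvert n/k$ transfers. Independence of losers then follows from \Cref{general_lem}.

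\emph{Robustness to fully satisfied voters.} I would argue this directly, as was done for the core. A voter with $A'_i\subseteq A_i\cap W$ has $\lvert A'_i\cap T\rvert\le\lvert A'_i\cap W\rvert$, so such a voter can never satisfy $\lvert A'_i\cap T\rvert\ge\beta>\lvert A'_i\cap W\rvert$. Hence any violating group consists only of unchanged voters, and it would already have been a violation in $\mathcal I$.

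\emph{Lower quota for party-lists.} FJR is implied by the core, so this follows from \Cref{prop:lq}. \emph{Individual discontentment.} If every voter in $N'$ is replaced by $A_j$ for some $j\in N'$, then each of them has $\lvert A_j\cap T\rvert\ge\beta>\lvert A_j\cap W\rvert$, so the same $(\beta,T)$ still witnesses the violation.

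The negative results need explicit instances, and I expect this to be the main obstacle.

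For \textbf{cohesiveness-basedness} and \textbf{merge-proofness}, I would reuse the instance $\mathcal I$ of \Cref{fig:core_cycle} with $W=\{c_1,c_2,c_3\}$ and $k=3$. The grand coalition is $(2,\{c_4,c_5,c_6\})$-cohesive, and each voter has only one approved member of $W$, so $W$ violates FJR. The core arguments then carry over verbatim, provided I can verify that $W$ satisfies FJR in the modified instance $\mathcal I'$ and in the merged instance. For $\mathcal I'$, this follows because $W$ is core-stable there, and the core implies FJR. For the merged instance, I would check that no group of size $\ge 2$ with $\lvert T\rvert\le 2$ and $\beta\ge 2$ exists. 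Merging must also give every voter of $N^\star$ two approved winners, which destroys the unique $\beta=2$ violation.

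For \textbf{monotonicity} and \textbf{independence of approval swaps}, I would mimic \Cref{fig:pjr_mon} and \Cref{fig:pjr_ias}. In each, voters $1,2$ end up, after adding or swapping approvals inside $W$, with three common candidates, which makes the group $(3,T)$-cohesive. Since $c_2\notin W$, we can take $T=\{c_1,c_2,c_3\}$. Each voter then has only two approved winners, so FJR fails after the change. The remaining check is that FJR held before the change. There the group $\{1,2\}$ can only reach $\beta=2$, and voter $1$ already has two approved winners; voters $3,4$ are fully satisfied. The careful step is enumerating all $T$ of size at most $3$ and confirming that no other violating group exists.
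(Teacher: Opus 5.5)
There is one step that proves the wrong inclusion. Lower quota for party-lists requires $\mathrm{FJR}(\mathcal I)\subseteq\LQ(\mathcal I)$. Your justification, ``FJR is implied by the core,'' means $\Core\subseteq\mathrm{FJR}$. Together with \Cref{prop:lq}, that only yields $\LQ(\mathcal I)=\Core(\mathcal I)\subseteq\mathrm{FJR}(\mathcal I)$ on party-list instances, which is the opposite direction. What you need is $\mathrm{FJR}\subseteq\ejr$. Any $\ell$-large, $\ell$-cohesive group $N'$ is $(\ell,T)$-cohesive for a set $T$ of $\ell$ commonly approved candidates, so FJR yields some $i\in N'$ with $\lvert A_i\cap W\rvert\ge\ell$. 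Then $\mathrm{FJR}(\mathcal I)\subseteq\ejr(\mathcal I)=\LQ(\mathcal I)$, which is the inclusion the paper uses. Otherwise your plan follows the paper's proof. In two places it is more careful: you restrict $T$ to $\bigcup_{i\in N'}A_i$ in the embedding argument, and you prove \ifsv{} directly. The latter matters because the \ifsv{} part of \Cref{general_lem} assumes cohesiveness-basedness, which FJR lacks.

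A smaller issue is your direct verification for the merged instance in the merge-proofness argument. FJR is not monotone, so showing that the merge destroys the unique $\beta=2$ violation does not by itself rule out violations created by the added approvals inside $W$. Your check also speaks of cohesive groups rather than violating ones. For example, after merging with $N^\star=\{1,2,3,4\}$, voters $1$--$4$ are $(2,\{c_1,c_5\})$-cohesive, though not violating. It also omits $\beta=3$. The simplest fix is the one you already use for $\mathcal I'$. The core proof shows that $W$ is core-stable in the merged instance, using monotonicity of the core, and $\Core\subseteq\mathrm{FJR}$ then finishes the argument.
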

\begin{proof}
FJR satisfying \textcolor{green!30!gray}{\ifsv} and \textcolor{green!30!gray}{independence of losers} follows analogously to \Cref{general_lem}. \happygreen{Lower quota for party-lists} follows from \Cref{prop:lq} as for any instance it holds that $\ejr \supseteq \mathrm{FJR} \supseteq \Core$. FJR satisfying individual discontentment follows in the same way as for the core.

\noindent\textbf{\sadred{Monotonicity}:}
     \begin{wrapstuff}[r,type=figure,width=5cm]
    \centering
    \begin{tikzpicture}
    [yscale=0.6,xscale=0.8,
    voter/.style={anchor=south}]
    
        \foreach \i in {1,...,4}
    		\node[voter] at (\i-0.5, -1) {$\i$};
        
        \draw[fill=teal!\strstr, ultra thick] (0, 0) rectangle (2, 1);
        \draw[fill=teal!\clrstr] (0, 1) rectangle (2, 2);
        \draw[fill=teal!\strstr, ultra thick] (0, 2) rectangle (1, 3);
        \draw[fill=magenta!\strstr, ultra thick] (2, 0) rectangle (4, 1);
        \draw[fill=magenta!\strstr, ultra thick] (2, 1) rectangle (4, 2);
        \draw[fill=magenta!\strstr, ultra thick] (2, 2) rectangle (4, 3);
        \draw[fill=magenta!\strstr, ultra thick] (2, 3) rectangle (4, 4);
        
        \node at ( 1, 0.5) {$c_{1}$};
        \node at ( 1, 1.5) {$c_{2}$};
        \node at ( 0.5, 2.5) {$c_{3}$};
        \node at ( 3, 0.5) {$c_{4}$};
        \node at ( 3, 1.5) {$c_{5}$};
        \node at ( 3, 2.5) {$c_{6}$};
        \node at ( 3, 3.5) {$c_{7}$};
    \end{tikzpicture}
    \captionof*{figure}{Figure \ref*{fig:pjr_mon}: Example for FJR and Monotonicity.}
    \label{fig:fjr_mon}
\end{wrapstuff}
To see that FJR does not satisfy monotonicity again consider the instance depicted in \Cref{fig:pjr_mon}. The depicted committee $\{c_1, c_3, c_4, c_5, c_6, c_7\}$ satisfies FJR for $k = 6$ as the group of voters $\{1,2\}$ is only $(2, \{c_1, c_2, c_3\})$-cohesive. However, if the voter $2$ additionally approved candidate $c_3$ the committee no longer satisfies FJR, as now $\{1,2\}$ are $(3, \{c_1, c_2, c_3\})$-cohesive, but only receive two candidates in the outcome.
\wrapstuffclear    

The instance depicted in \Cref{fig:pjr_ias} also witnesses that FJR violates \sadred{independence of approval swaps}. Further, any deviation in \Cref{fig:core_cycle} is not just a core deviation, but also an FJR deviation, and hence, this instance also witnesses that FJR violates both \sadred{cohesiveness-basedness} and \sadred{merge-proofness}.
\end{proof}

\subsection{FPJR}
\begin{definition}[\citep{KKL25a}]
    A committee $W$ satisfies \emph{full proportional justified representation (FPJR)} if for every $(\beta, T)$-cohesive group $N' \subseteq N$ of voters it holds that $$\left\lvert\bigcup_{i \in N'} A_i \cap W\right\rvert \ge \beta.$$
\end{definition}
\begin{proposition}
FPJR is a \happygreen{witness-based proportionality notion satisfying independence of losers, \ifsv, and is a lower quota extension}. Together with its natural witness FPJR \happygreen{satisfies individual discontentment and merge-proofness}. FPJR does not satisfy \textcolor{red!40!gray}{monotonicity and independence of approval swaps} and there is no cohesiveness-based witness for FPJR.    
\end{proposition}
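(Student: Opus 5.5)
The positive claims for FPJR are handled much as for PJR and FJR, and the negative claims by reusing existing counterexamples.

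\textbf{Witness-basedness, independence of losers, and \ifsv.} I take the natural witness function: $N'$ is a witness if there are $\beta$ and $T$ such that $N'$ is $(\beta,T)$-cohesive but $\lvert \bigcup_{i\in N'} A_i\cap W\rvert<\beta$. This depends only on the approvals of $N'$ on $W\cup\bigcup_{i\in N'}A_i$ and on $\lvert N'\rvert$ relative to $n$. Hence a local embedding (which fixes $n$) maps $T\cap\bigcup_{i\in N'}A_i$ injectively and preserves all the relevant counts. One can shrink $T$ to $T\cap\bigcup_{i\in N'}A_i$ without losing cohesiveness, since the size requirement only gets weaker. So the image remains a witness. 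Independence of losers then follows from \Cref{general_lem}. For \ifsv, \Cref{general_lem} does not apply, because FPJR is not cohesiveness-based, so I argue directly. Suppose a voter $i\in N'$ has $A'_i\subseteq W$. Then $\lvert A'_i\cap T\rvert\ge\beta$ gives $\beta$ approved members of $W$ in the union, which contradicts the witness condition. So every witness consists only of unchanged voters and is already a witness in the original instance.

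\textbf{Lower quota extension.} Since $\Core\subseteq\mathrm{FJR}\subseteq\mathrm{FPJR}\subseteq\pjr$, \Cref{prop:lq} gives equality with $\LQ$ on party-list instances.

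\textbf{Individual discontentment and merge-proofness.} These go exactly as for PJR. If every voter of $N'$ is replaced by $A_j$, then $\lvert A_j\cap T\rvert\ge\beta$ and the union shrinks to $A_j$. Under a merge, the ballots only grow within $\bigcup_{j\in N'}A_j$, so each $\lvert A'_i\cap T\rvert$ can only increase while the union stays the same.

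\textbf{Negative parts.} For monotonicity I reuse \Cref{fig:pjr_mon}. Before the change, $\{1,2\}$ is at most $(2,\cdot)$-cohesive and receives two seats. After voter $2$ adds $c_3$, the group is $(3,\{c_1,c_2,c_3\})$-cohesive with size $2\ge 3\cdot 4/6$, but its union contains only $2$ members of $W$. For independence of approval swaps I use \Cref{fig:pjr_ias} in the same way. For cohesiveness-basedness I use \Cref{fig:core_cycle}: every core deviation there is also an FPJR deviation, since each voter of $N$ has $\lvert A_i\cap\{c_4,c_5,c_6\}\rvert=2$ while the union contains only $W$'s $3$ candidates, and we need $\beta=2$. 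I then run the same local-embedding argument as in the core proof, noting that $W$ satisfies FPJR in $\mathcal I'$.

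\textbf{Main obstacle.} The delicate step is verifying that $W\in\mathrm{FPJR}(\mathcal I')$. FPJR compares $\beta$ against the group's union, which for $\lvert N'\rvert\ge 4$ already contains at least $2$ members of $W$. So only groups with $\beta\ge 3$, or small groups, can violate it, and I check these cases by hand.
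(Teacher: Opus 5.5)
\textbf{Verdict: there is a genuine gap.} Most of your proposal is sound. The following parts go through and match the paper's argument (or are slightly more careful than it):
\begin{itemize}
\item the witness-function check, including shrinking $T$ to $T\cap\bigcup_{i\in N'}A_i$;
\item independence of losers via \Cref{general_lem};
\item your direct argument for \ifsv, which is preferable to citing \Cref{general_lem}, since its second part needs cohesiveness-basedness;
\item the chain $\Core\subseteq\mathrm{FJR}\subseteq\mathrm{FPJR}\subseteq\pjr$;
\item individual discontentment and merge-proofness;
\item the monotonicity and approval-swap counterexamples.
\end{itemize}
The gap is in the claim that FPJR has no cohesiveness-based witness.

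\textbf{Why the instance of \Cref{fig:core_cycle} fails.} In $\mathcal I$, the committee $W=\{c_1,c_2,c_3\}$ already satisfies FPJR, so there is no witness from which to start the embedding argument. Each voter approves exactly one member of $W$, and each member of $W$ is approved by exactly two voters.
\begin{itemize}
\item A $(\beta,T)$-cohesive group with $|T|\ge 2$ has at least $4$ voters. Its union therefore contains at least $2$ members of $W$.
\item $\beta\le 2$ throughout. Indeed, $\beta=3$ would force $|T|=3$ and all six voters to approve every candidate in $T$, but no candidate is approved by all six.
\item If $|T|\le 1$, then $\beta\le 1$.
\end{itemize}
In particular, the grand coalition with $T=\{c_4,c_5,c_6\}$ and $\beta=2$ is not an FPJR violation: its union contains all $3\ge 2$ members of $W$. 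Your sentence ``every core deviation there is also an FPJR deviation'' is true for FJR, where each voter's individual utility is $1<2$. It is false for FPJR, which compares $\beta$ with the group's union. Your ``main obstacle'' paragraph in effect proves $W\in\mathrm{FPJR}(\mathcal I)$.

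\textbf{What you need instead.} You need an instance in which FPJR is violated, but only by groups that share no unselected candidate. The paper uses \Cref{fig:laminar}, with $k=6$ and $W=\{c_1,c_6,\dots,c_{10}\}$.
\begin{itemize}
\item The group $\{1,2\}$ with $T=\{c_1,c_2,c_3\}$ and $\beta=2$ is $(\beta,T)$-cohesive, since $2\ge 3\cdot\frac{4}{6}$. Its union contains only $c_1$, so it violates FPJR.
\item $A_3\setminus W=A_4\setminus W=\emptyset$ and $(A_1\setminus W)\cap(A_2\setminus W)=\emptyset$. Hence any cohesiveness-based witness must be $\{1\}$ or $\{2\}$; by symmetry, say $\{1\}$.
\item Change voter $2$'s ballot to $\{c_1\}$. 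Then $W$ satisfies FPJR.
\item Voter $1$'s ballot is unchanged, so $(\{1\},W)$ locally embeds into the modified instance via identity maps.
\end{itemize}
This contradicts property~(ii) together with property~(i) of witness functions.
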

\begin{proof}
FPJR satisfying \textcolor{green!30!gray}{\ifsv} and \textcolor{green!30!gray}{independence of losers} follows analogously to \Cref{general_lem}. \happygreen{Lower quota for party-lists} follows from \Cref{prop:lq} as for any instance it holds that $\ejr \supseteq \mathrm{FPJR} \supseteq \Core$. FJPR satisfying individual discontentment follows in the same way as for the core.

\noindent\textbf{\sadred{Cohesiveness-Basedness}:}
To see that there is no cohesiveness-based witness for FPJR, consider the instance depicted in \Cref{fig:laminar}. Here without loss of generality the only possible witness for FPJR is $\{1\}$. It is however, easy to see that this is not a valid witness. For instance if voter $2$ would only approve $c_1$ the committee would satisfy FPJR.

\noindent\textbf{\sadred{Monotonicity}:}
     \begin{wrapstuff}[r,type=figure,width=5cm]
    \centering
    \begin{tikzpicture}
    [yscale=0.6,xscale=0.8,
    voter/.style={anchor=south}]
    
        \foreach \i in {1,...,4}
    		\node[voter] at (\i-0.5, -1) {$\i$};
        
        \draw[fill=teal!\strstr, ultra thick] (0, 0) rectangle (2, 1);
        \draw[fill=teal!\clrstr] (0, 1) rectangle (2, 2);
        \draw[fill=teal!\strstr, ultra thick] (0, 2) rectangle (1, 3);
        \draw[fill=magenta!\strstr, ultra thick] (2, 0) rectangle (4, 1);
        \draw[fill=magenta!\strstr, ultra thick] (2, 1) rectangle (4, 2);
        \draw[fill=magenta!\strstr, ultra thick] (2, 2) rectangle (4, 3);
        \draw[fill=magenta!\strstr, ultra thick] (2, 3) rectangle (4, 4);
        
        \node at ( 1, 0.5) {$c_{1}$};
        \node at ( 1, 1.5) {$c_{2}$};
        \node at ( 0.5, 2.5) {$c_{3}$};
        \node at ( 3, 0.5) {$c_{4}$};
        \node at ( 3, 1.5) {$c_{5}$};
        \node at ( 3, 2.5) {$c_{6}$};
        \node at ( 3, 3.5) {$c_{7}$};
    \end{tikzpicture}
    \captionof*{figure}{Figure \ref*{fig:pjr_mon}: Example for FPJR and Monotonicity.}
    \label{fig:fpjr_mon}
\end{wrapstuff}
To see that FPJR does not satisfy monotonicity again consider the instance depicted in \Cref{fig:pjr_mon}. The depicted committee $\{c_1, c_3, c_4, c_5, c_6, c_7\}$ satisfies FJR for $k = 6$ as the group of voters $\{1,2\}$ is only $(2, \{c_1, c_2, c_3\})$-cohesive. However, if the voter $2$ additionally approved candidate $c_3$ the committee no longer satisfies FPJR, as now $\{1,2\}$ are $(3, \{c_1, c_2, c_3\})$-cohesive, but only receive two candidates in the outcome.
\wrapstuffclear    

\noindent\textbf{\sadred{Independence of Approval Swaps}:}
\begin{wrapstuff}[r,type=figure,width=5cm]
\centering
\begin{tikzpicture}
[yscale=0.6,xscale=0.8,
voter/.style={anchor=south}]

    \foreach \i in {1,...,4}
        \node[voter] at (\i-0.5, -1) {$\i$};

    \draw[fill=teal!\strstr, ultra thick] (0, 0) rectangle (2, 1);
    \draw[fill=teal!\clrstr] (0, 1) rectangle (2, 2);
    \draw[fill=teal!\strstr, ultra thick] (0, 2) rectangle (1, 3);
    \draw[fill=magenta!\strstr, ultra thick] (2, 0) rectangle (4, 1);
    \draw[fill=magenta!\strstr, ultra thick] (2, 1) rectangle (4, 2);
    \draw[fill=magenta!\strstr, ultra thick] (1, 2) rectangle (4, 3);
    \draw[fill=magenta!\strstr, ultra thick] (2, 3) rectangle (4, 4);

    \node at ( 1, 0.5) {$c_{1}$};
    \node at ( 1, 1.5) {$c_{2}$};
    \node at ( 0.5, 2.5) {$c_{3}$};
    \node at ( 3, 0.5) {$c_{4}$};
    \node at ( 3, 1.5) {$c_{5}$};
    \node at ( 2.5, 2.5) {$c_{6}$};
    \node at ( 3, 3.5) {$c_{7}$};
\end{tikzpicture}
\captionof*{figure}{Figure \ref*{fig:pjr_ias}: Example for FPJR and independence of approval swaps.}
\label{fig:fpjr_ias}
\end{wrapstuff}
For independence of approval swaps, we again consider \Cref{fig:pjr_ias}. Here, the committee $c_1, c_3, c_4, \dots, c_7$ satisfies FPJR for $k = 6$. However, if the voter $c_2$ switches their approval profile to be $\{c_1, c_2, c_3\}$ (and thus still only approving two candidates in the outcome), the group of voters $1$ and $2$ is now $(3, \{c_1, c_2, c_3\})$-cohesive while only receiving two candidates in the outcome. Hence, in this new instance the committee does not satisfy FPJR and thus FPJR does not satisfy independence of approval swaps. 

\noindent\textbf{\happygreen{Merge-Proofness}:}
To see that FPJR together with its natural witness $w$ satisfies merge-proofness, consider any instance $\mathcal I$, committee $W$ and $N' \in w(\mathcal{I}, W)$. We assume that $N'$ is $(\beta, T)$-cohesive and that $\lvert \bigcup A_i \cap W\rvert < \beta$. However, after adding additional candidates from $\bigcup A_i \cap W\rvert$ to the approval sets of voters in $N'$ this group is still $(\beta, T)$-cohesive (as they did not lose any approvals), while $\lvert \bigcup A_i \cap W\rvert < \beta$ continues to hold. Thus, $N'$ remains a witness and FPJR is merge-proof.
\wrapstuffclear
\end{proof}
\subsection{Sub-Core}
\label{app:sub-core}
\begin{definition}[\citep{MSW22a}]
    A committee $W$ is sub-core-stable, if there does not exist any $\ell \in [k]$, group $N' \subseteq N$ of voters of size at least $\lvert N'\rvert \ge \ell \frac{n}{k}$ and set of candidates $C' \subseteq C $ of size at most $\lvert C'\rvert \le \ell$ such that 
    \[
    A_i \cap C' \supset A_i \cap W \text{ for all } i \in N'.
    \]
\end{definition}

\begin{proposition}
    A committee $W$ is in the sub-core if and only if for every subset $C' \subseteq C \setminus W$ of candidates and $N' \subseteq \{i \in N \colon A_i \cap C' \neq \emptyset\}$ of size at least $\ell \frac{n}{k}$ it holds that 
    \[
    \left\lvert\bigcup_{i \in N'} A_i \cap W \right\rvert > \ell - \lvert C'\rvert
    \]
    \label{sub_core_equiv}
\end{proposition}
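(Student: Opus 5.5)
I will prove both directions by contraposition, passing between a sub-core deviation $(\ell, N', C')$ with $A_i\cap C'\supset A_i\cap W$ and a violation of the displayed inequality. Implicitly $\ell$ ranges over $[k]$ in the reformulation as well.

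\emph{From a sub-core deviation to a violation of the inequality.} Suppose $\ell\in[k]$, $|N'|\ge \ell\frac nk$, $|C'|\le \ell$ and $A_i\cap C'\supsetneq A_i\cap W$ for all $i\in N'$. Set $D\coloneqq C'\setminus W$. Each $i\in N'$ has $A_i\cap W\subseteq A_i\cap C'$, and strict containment gives some approved $c\in C'$ with $c\notin A_i\cap W$, hence $c\in D$. So $N'\subseteq\{i\colon A_i\cap D\neq\emptyset\}$. Moreover, every $i\in N'$ satisfies $A_i\cap W\subseteq C'\cap W$. Thus
\[
\Bigl\lvert\bigcup_{i\in N'}A_i\cap W\Bigr\rvert\le |C'\cap W| = |C'|-|D|\le \ell-|D|,
\]
so $(D,N',\ell)$ violates the reformulated condition. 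Note $D\neq\emptyset$ automatically, since $N'\neq\emptyset$ because $\ell\ge1$.

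\emph{From a violation of the inequality to a sub-core deviation.} Suppose $D\subseteq C\setminus W$, $\ell\in[k]$, $N'\subseteq\{i\colon A_i\cap D\ne\emptyset\}$, $|N'|\ge\ell\frac nk$, and $|S|\le \ell-|D|$ where $S\coloneqq\bigcup_{i\in N'}A_i\cap W$. Define $C'\coloneqq S\cup D$; this union is disjoint, so $|C'|\le\ell$. For each $i\in N'$ we have $A_i\cap W\subseteq S\subseteq C'$, so $A_i\cap C'\supseteq A_i\cap W$. Strictness holds because $i$ approves some $d\in D\subseteq C'\setminus W$. Hence $(\ell,N',C')$ is a sub-core deviation.

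The only delicate points are bookkeeping. First, the definition's "$\supset$" must be read as strict containment. Second, I must check that $\ell$ and the size bound transfer unchanged, and that $C'$ meets the size requirement "at most $\ell$". I expect the main (mild) obstacle to be the direction from the sub-core definition to the reformulation. The key there is to choose the witness set $D=C'\setminus W$ rather than $C'$ itself, and to use the bound $|S|\le|C'\cap W|$. This step uses that all voters in $N'$ satisfy $A_i\cap W\subseteq C'$, which follows from the containment assumption.
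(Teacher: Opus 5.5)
Your proof is correct and follows the same approach as the paper. In one direction you enlarge the violating set by $S=\bigcup_{i\in N'}(A_i\cap W)$ to obtain a sub-core deviation, and in the other you strip the committee part off a deviation. Your choice $D=C'\setminus W$ is slightly cleaner than the paper's $C'\setminus S$: it guarantees $D\subseteq C\setminus W$, as the reformulation requires, whereas the paper's set could retain members of $C'\cap W$ that no voter in $N'$ approves.
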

\begin{proof}
    Firstly, assume that $W$ is in the subcore and consider any subset $C' \subseteq C \setminus W$ of candidates and $N' \subseteq \{i \in N \colon A_i \cap C' \neq \emptyset\}$ of size at least $\ell \frac{n}{k}$. Now consider the set $C'' \coloneqq C' \cup (\bigcup_{i \in N'} A_i \cap W)$. As $A_i \cap C'' \supset A_i \cap W$ for every $i \in N'$ and since $W$ is in the sub-core, we get that $\lvert C' \cup (\bigcup_{i \in N'} A_i \cap W)\rvert = \lvert C''\rvert > \ell$ showing the first implication. 

    Secondly, assume that $W$ is not in the sub-core. Then there exists an $\ell \in [k]$, subset $N' \subseteq N$ of voters of size at least $\ell\frac{n}{k}$ and a set of candidates $C' \subseteq C$ of size at most $\lvert C'\rvert \le \ell$ such that $A_i \cap C' \supset A_i \cap W$ for all $i\in N'$. Consider the set $C'' = C' \setminus \left(\bigcup_{i \in N'} A_i \cap W \right)$. By definition every voter in $N'$ approves at least one candidate of $C''$. Further, from the definition, we get 
    \[
    \lvert C''\rvert = \lvert C'\rvert - \left\lvert\bigcup_{i \in N'} A_i \cap W \right\rvert \le \ell - \left\lvert\bigcup_{i \in N'} A_i \cap W \right\rvert
    \] and hence, this direction of the equivalence is true.

\end{proof}

\begin{proposition}
    Let $W$ be a committee in the sub-core. Then $W$ also satisfies FPJR.
\end{proposition}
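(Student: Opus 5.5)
We argue by contraposition: starting from an FPJR violation, we build a sub-core blocking pair in the original form of the definition (the one with $A_i \cap C' \supset A_i \cap W$).

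Suppose $W$ violates FPJR. Then there are $\beta \in [k]$, a set $T \subseteq C$, and a group $N' \subseteq N$ such that $\lvert A_i \cap T\rvert \ge \beta$ for all $i \in N'$ and $\lvert N'\rvert \ge \lvert T\rvert \frac{n}{k}$, while $\bigl\lvert \bigcup_{i\in N'} (A_i\cap W)\bigr\rvert < \beta$. Write $S \coloneqq \bigcup_{i\in N'} (A_i \cap W)$, so $\lvert S\rvert \le \beta - 1$. Take $\ell \coloneqq \lvert T\rvert$. Then $\ell \ge \beta \ge 1$, and $\ell \le k$ as well, since otherwise $\lvert N'\rvert > n$. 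With this choice $N'$ is $\ell$-large.

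The natural deviation set is $C' \coloneqq S \cup (T \setminus S)$, i.e.\ $C' = S \cup T$. Its size may exceed $\ell$, so we instead work with the equivalent characterization from \Cref{sub_core_equiv}. Set $D \coloneqq T \setminus W$, which is a subset of $C \setminus W$. For each $i \in N'$ we have $\lvert A_i \cap T \cap W\rvert \le \lvert A_i \cap W\rvert \le \lvert S\rvert \le \beta-1 < \beta \le \lvert A_i \cap T\rvert$. Hence $A_i \cap D \neq \emptyset$, so $N' \subseteq \{i \colon A_i \cap D \neq \emptyset\}$. By \Cref{sub_core_equiv}, it then suffices to show $\lvert S\rvert \le \ell - \lvert D\rvert = \lvert T \cap W\rvert$.

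This last inequality is the main obstacle, because in general $\lvert T\cap W\rvert$ can be smaller than $\lvert S\rvert$. We fix this by shrinking $T$ first. Replace $T$ by $T^\ast \coloneqq (T\setminus W) \cup (T \cap S)$. This discards members of $W$ in $T$ that no voter of $N'$ approves, so each $\lvert A_i\cap T\rvert$ is unchanged and $N'$ stays $\ell^\ast$-large for $\ell^\ast \coloneqq \lvert T^\ast\rvert \le \lvert T\rvert$. We also want every element of $S$ to be present. Here we use the sub-core's original definition directly: take $C'' \coloneqq (T\setminus W) \cup S$. For $i\in N'$, $A_i\cap W \subseteq S \subseteq C''$, and $A_i \cap C''$ additionally contains a point of $T \setminus W$, so $A_i \cap C'' \supset A_i \cap W$ strictly.

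It remains to bound the size. We have $\lvert C''\rvert = \lvert T\setminus W\rvert + \lvert S\rvert$, and we need this to be at most some $\ell'$ with $\lvert N'\rvert \ge \ell' \frac{n}{k}$; the natural choice is $\ell' = \lvert T\rvert$. So we need $\lvert S\rvert \le \lvert T\cap W\rvert$, which may fail. To handle this, we first choose the violating $(\beta,T)$ minimally and then modify it. Drop elements of $T \cap W$ not in $S$; this only decreases $\lvert T\rvert$ and keeps cohesiveness. If $\lvert S\rvert > \lvert T\cap W\rvert$ still holds, use that each voter has $\lvert A_i \cap T\setminus W\rvert \ge \beta - \lvert A_i\cap W\rvert$. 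Then trim $T \setminus W$ and reduce $\beta$ to $\beta' = \lvert T\cap W\rvert + 1$, keeping only a subset $D' \subseteq T\setminus W$ of size $\lvert T\rvert - \lvert S\rvert$, so that $C''' = D' \cup S$ has size $\le \lvert T\rvert$. We must make sure each voter keeps a point in $D'$; this requires a counting/Hall-type argument and is where I expect the real work to lie. If this fails, fall back to checking whether the statement needs $\lvert C'\rvert\le\ell$ only via the equivalent form with a well-chosen $\ell \le \lvert T\rvert$.
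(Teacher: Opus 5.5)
Your setup is sound and settles the case $\lvert S\rvert\le\lvert T\cap W\rvert$: the choice $\ell=\lvert T\rvert$, the observation that every $i\in N'$ approves some candidate of $T\setminus W$, and the strict containment $A_i\cap C''\supset A_i\cap W$ are all correct. The remaining case $\lvert T\cap W\rvert<\lvert S\rvert$ is not proved, though. You defer it to a ``counting/Hall-type argument'' and sketch a fallback, so the proposal does not yet establish the statement.

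Your own candidate, any $D'\subseteq T\setminus W$ of size $\lvert T\rvert-\lvert S\rvert$, is exactly right. The only missing piece is a one-line pigeonhole count, not a matching argument. The set $R\coloneqq T\setminus D'$ contains $T\cap W$ and has exactly $\lvert S\rvert\le\beta-1$ elements. Hence
\[
\lvert A_i\cap D'\rvert\;\ge\;\lvert A_i\cap T\rvert-\lvert R\rvert\;\ge\;\beta-\lvert S\rvert\;\ge\;1 \qquad\text{for every } i\in N'.
\]
So $A_i\cap(D'\cup S)\supset A_i\cap W$, and $\lvert D'\cup S\rvert=\lvert T\rvert$. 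Thus $N'$, $\ell=\lvert T\rvert$ and $D'\cup S$ violate the sub-core. Such a $D'$ exists because $\lvert T\setminus W\rvert\ge\lvert T\rvert-\lvert S\rvert$ exactly when $\lvert S\rvert\ge\lvert T\cap W\rvert$.

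The paper does the same thing in direct form. It pads $T\cap W$ with arbitrary members of $T$ to a set $T'$ of size $\beta-1$ and lets $C'=T\setminus T'$. Then every voter of $N'$ hits $C'$, since removing at most $\beta-1$ candidates cannot exhaust $\beta$ approvals in $T$. It then reads off $\lvert S\rvert>\ell-(\ell-\beta+1)=\beta-1$ from \Cref{sub_core_equiv}.

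Two of your intermediate moves should be dropped.
\begin{itemize}
\item Replacing $T$ by $T^\ast$ works against you. It shrinks $T\cap W$ to $T\cap S$, so the inequality you need becomes $\lvert S\rvert\le\lvert T\cap S\rvert$, which is harder, not easier.
\item Lowering $\beta$ to $\beta'=\lvert T\cap W\rvert+1$ would destroy the violation. In the problematic case $\lvert T\cap W\rvert<\lvert S\rvert$ it gives $\beta'\le\lvert S\rvert$. No change to $\beta$ and no careful selection of $D'$ is needed.
\end{itemize}
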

\begin{proof}
    Let $N'$ be a $(\beta, T)$-cohesive group and $T' = T \cap W$. If $\lvert T'\rvert < \beta - 1$ we add $\beta - \lvert T'\rvert -  1$ other (arbitrary) candidates from $T$ to $T'$. 
    We let $\lvert T\rvert = \ell$ and hence $\lvert N'\rvert \ge \ell \frac{n}{k}$. Let $C' = T \setminus T'$. By definition, we know that $\lvert C'\rvert \le \ell - \beta + 1$. Further, as every voter in $N'$ approves at least $\beta$ candidates in $T$ they also approve at least one candidate in $C'$. Thus, we know that $N' \subseteq \{i \in N\colon A_i \cap C'\neq \emptyset \}$. As we additionally required that $C' \subseteq T \setminus W$, by \Cref{sub_core_equiv} we get that 
    \[
    \left\lvert\bigcup_{i \in N'} A_i \cap W \right\rvert > \ell - \lvert C'\rvert = \ell - (\ell - \beta + 1) = \beta - 1
    \] and therefore $W$ also satisfies FPJR.
\end{proof}
This as a corollary gives us a simpler proof that priceability implies FPJR.

\begin{proposition}
The sub-core is a \happygreen{witness-based proportionality notion satisfying independence of losers, monotonicity, \ifsv, and is a lower quota extension}. Together with its natural witness the sub-core \happygreen{satisfies individual discontentment and merge-proofness}. The sub-core does not satisfy \sadred{independence of approval swaps} and there is no \sadred{cohesiveness-based} witness for the sub-core.    
\end{proposition}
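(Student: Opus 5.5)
We verify each claim separately. Throughout, we take as the natural witness function $w$ of the sub-core the map sending $(\mathcal I,W)$ to all $N'$ for which there are $\ell\in[k]$ and $C'\subseteq C$ with $|C'|\le \ell$, $|N'|\ge \ell\frac nk$, and $A_i\cap C'\supset A_i\cap W$ for all $i\in N'$. The proof of \Cref{sub_core_equiv} transforms a violating pair $(N',C')$ of one formulation into a violating pair $(N',C'')$ of the other while keeping $N'$ fixed. Hence the witness sets are the same whichever formulation we use, and we may switch between them freely.

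\emph{Witness function.} Property (i) holds by definition. For (ii), first shrink $C'$ to $C'\cap\bigl(W\cup\bigcup_{i\in N'}A_i\bigr)$. This keeps $|C'|\le\ell$ and all the strict-superset conditions, since only approved candidates matter. Then $\Phi_C(C')$ is defined, and because $\Phi_C$ preserves approvals of $N'$ and maps $W$ onto $\hat W$, the pair $\bigl(\Phi_{N'}(N'),\Phi_C(C')\bigr)$ is a violating pair in $\hat{\mathcal I}$ for the same $\ell$. Since $|\hat N|=|N|$, the size bound is preserved.

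\emph{Axioms from the witness.} Independence of losers then follows from \Cref{lem:witness_iol_ifsv}. For \ifsv\ we cannot invoke that lemma, since the sub-core is not cohesiveness-based. Instead, observe that a voter with $A'_i\subseteq A_i\cap W\subseteq W$ satisfies $A'_i\cap C'\subseteq A'_i=A'_i\cap W$, so the strict inclusion fails and such a voter belongs to no witness. Any witness in the modified instance therefore consists of unchanged voters and embeds back into $\mathcal I$ by the identity maps. The same identity-embedding argument as in \Cref{lem:witness_iol_ifsv} then finishes.

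\emph{Monotonicity and merge-proofness.} Both use the formulation of \Cref{sub_core_equiv}. In both modifications, candidates $C'\subseteq C\setminus W$ keep their supporters, because only approvals are added, and for monotonicity these added approvals lie in $W$. Under monotonicity, $\bigl|\bigcup_{i\in N'}A_i\cap W\bigr|$ can only grow, so no new violation appears. Under merge-proofness, the new ballots satisfy $A_i\subseteq A'_i\subseteq\bigcup_{j\in N'}A_j$. Hence $N'$ still lies in the support set of $C'$, and $\bigl|\bigcup_{i\in N'}A'_i\cap W\bigr|=\bigl|\bigcup_{i\in N'}A_i\cap W\bigr|$, so the violation persists.

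\emph{Individual discontentment.} If every voter of $N'$ receives the ballot $A_j$, then $A_j\cap C'\supset A_j\cap W$ holds for each of them, so $N'$ is still a witness with the same $\ell$ and $C'$.

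\emph{Lower quota extension.} A sub-core deviation is a core deviation: a strict superset has strictly larger cardinality. Hence $\Core\subseteq$ sub-core. Restricting the formulation of \Cref{sub_core_equiv} to $|C'|=1$ yields exactly PJR+, so sub-core $\subseteq\pjrp$. On party-list instances \Cref{prop:lq} gives $\Core(\mathcal I)=\pjrp(\mathcal I)=\LQ(\mathcal I)$, and the sub-core is sandwiched between them, so it equals $\LQ(\mathcal I)$.

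\emph{Negative results.} These are the main work, since they require verifying concrete instances.

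For independence of approval swaps, reuse \Cref{fig:pjr_ias} with $k=6$, $n=4$, so the constraint $|N'|\ge\ell\frac nk$ allows $\ell\le 3$ for $|N'|\le 2$.
\begin{itemize}
\item In the original instance, voters $3$ and $4$ are fully satisfied. The group $\{1,2\}$ can only use $C'\subseteq\{c_2\}$, and $\bigl|\{c_1,c_3,c_6\}\bigr|=3>\ell-1$ for every $\ell\le 3$. A single voter has $\ell\le1$ and at least one approved winner. So the sub-core is satisfied.
\item After the swap, $\{1,2\}$ with $c_2$ and $\ell=3$ violates PJR+, hence also the sub-core.
\end{itemize}

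For the absence of a cohesiveness-based witness, use \Cref{fig:laminar} with $W=\{c_1,c_6,\dots,c_{10}\}$.
\begin{itemize}
\item $W$ is not in the sub-core: take $N'=\{1,2\}$, $C'=\{c_2,c_3\}$ and $\ell=3$, which gives $1\not>3-2$.
\item Any cohesiveness-based witness must share an unelected candidate, so it is $\{1\}$ or $\{2\}$; by symmetry say $\{1\}$.
\item Change voter $2$'s ballot to $\{c_1\}$. The witness $(\{1\},W)$ embeds into the new instance by the identity maps. Yet $W$ is now in the sub-core: a singleton has $\ell\le1$ and one approved winner; voter $2$ is fully satisfied; and $\{1,2\}$ together only reach $C'\subseteq\{c_2,\dots,c_5\}$ through voter $1$, which is not a valid pair. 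This contradicts property (i).
\end{itemize}
The delicate step is checking this last instance exhaustively, that is, confirming that no $(N',C',\ell)$ yields a violation after voter $2$ restricts to $\{c_1\}$.
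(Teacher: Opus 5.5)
Your proof is correct and follows essentially the same route as the paper. You transfer the PJR+ arguments through the equivalent formulation of \Cref{sub_core_equiv}, sandwich the sub-core between the core and PJR+ (the paper uses PJR) to get the lower quota extension property, and use \Cref{fig:laminar} with voter $2$ restricted to $\{c_1\}$ to rule out a cohesiveness-based witness. Along the way you supply details the paper leaves implicit: shrinking $C'$ into the domain of $\Phi_C$ for the embedding property, a direct argument for \ifsv\ (since \Cref{lem:witness_iol_ifsv} does not apply), and an explicit approval-swap counterexample from \Cref{fig:pjr_ias}, which the paper's proof of this proposition omits.
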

\begin{proof}
    Using \Cref{sub_core_equiv} the positive results follow almost verbatim to PJR+, simply replacing the single unchosen candidate by the unchosen set of candidates. As the sub-core lies in-between the core and PJR it is also a lower quota extension. 

    To see that the sub-core possesses no cohesiveness-based witness, we can again consider \Cref{fig:laminar} and see that there cannot be any cohesiveness-based witness in this instance.
\end{proof}
\subsection{Priceability}
\begin{definition}[\citep{PeSk20a}]
A committee $W$ is priceable, if there exists a budget $B > 0$ and a function $p\colon N\times C\to[0,1]$ such that 
\begin{itemize}
    \item[C1)] $c \notin A_i \implies p(i,c) = 0$
    \item[C2)] $\sum_{c \in C} p(i,c)\le \frac{B}{n}$ for all $i \in N$
    \item[C3)] $\sum_{i \in N}p(i,c) = 1$ for all $c \in W$
    \item[C4)] $\sum_{i \in N}p(i,c) = 0$ for all $c \notin W$
    \item[C5)]$\sum_{i \in N_c}\left(\frac{B}{n} - \sum_{c \in A_i} p(i,c)\right) \le 1$ for all $c \notin W$. 
\end{itemize}

\end{definition}

\begin{proposition}
Priceability is a \happygreen{witness-based proportionality notion satisfying independence of losers, monotonicity and lower quota for party-lists}. Priceability is not a \sadred{lower quota extension} and does not satisfy \sadred{\ifsv} or \sadred{independence of approval swaps}.  There are no witness functions for priceability satisfying \sadred{individual discontentment}, \sadred{merge-proofness}, or \sadred{cohesiveness-basedness}.    
\end{proposition}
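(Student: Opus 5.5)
We treat the positive and negative parts separately, using the priceability definition (C1)--(C5) directly.

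\textbf{Positive parts.} For witness-basedness we invoke \Cref{prop:wb_char}: anonymity and neutrality are immediate because (C1)--(C5) only refer to the incidence structure. Independence of unapproved candidates holds because an unapproved $c\notin W$ has $N_c=\emptyset$, so its constraint (C5) is vacuous. For independence of losers, deleting $c\notin W$ only removes one instance of (C5) and payments $p(\cdot,c)$, which are $0$ by (C4), so the same $(B,p)$ still certifies $W$. Monotonicity: when approvals for members of $W$ are added, we keep $(B,p)$ unchanged. (C1) still holds (new approvals only make more entries admissible), and (C2)--(C4) are unchanged. In (C5) the supporters $N_c$ of any $c\notin W$ are unchanged, and each voter's leftover $\frac Bn-\sum_{c'\in A_i}p(i,c')$ is unchanged because the payments themselves do not move. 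Lower quota for party-lists follows from the known fact that priceability refines PJR together with \Cref{prop:lq}.

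\textbf{Negative parts via small examples.} First, not a lower quota extension: take a party-list instance in which some lower-quota committee gives a party strictly more than its upper quota, e.g.\ two parties with $n_1=3$, $n_2=1$, $k=2$, $W\subseteq C_2$ of size $2$ (if $|C_2|\ge2$). Lower quota asks only $\lfloor 1.5\rfloor=1$ for party~$1$, so instead use $k=4$, $n_1=n_2=2$, with $W$ giving $3$ to party~1 and $1$ to party~2. Priceability forces every selected candidate to cost $1$ with equal budgets, so party~2's budget of $2B/4\ge$ the budget needed by party~1 for $3$ seats, i.e.\ $B\ge 6$. Then party~2 has leftover at least $2$, violating (C5) for an unselected party-2 candidate. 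This shows an LQ committee that is not priceable.

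Second, \ifsv: start from a priceable committee in which a voter with $A_i\subseteq W$ carries budget that is spent on $W$. Restricting another voter's ballot to $A_i\cap W$ removes their contributions to some winners, and the remaining voters cannot cover the cost of $1$ without raising $B$. Raising $B$ in turn inflates the leftover of the supporters of some unselected candidate beyond $1$. I would adapt the \Cref{fig:laminar}-type instance for this. Independence of approval swaps can be refuted by the same mechanism: swapping which winner a voter approves concentrates the costs.

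Third, the witness properties: for each of individual discontentment, merge-proofness, and cohesiveness-basedness, we take a non-priceable $W$ with few possible witnesses and show, as in the core arguments above, that every candidate witness $N^\star$ locally embeds into an instance (after the permitted modification) where $W$ becomes priceable, contradicting property~(i). For cohesiveness-basedness we reuse \Cref{fig:laminar}: $W$ is not priceable, but the possible cohesive witnesses (subsets of $\{1,2\}$) embed into an instance where voter~2 approves only $c_1$, which is priceable.

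\textbf{Main obstacle.} The hardest step is the non-existence results for individual discontentment and merge-proofness. They require exhausting all candidate witnesses and exhibiting, for each, a priceable modified instance, i.e.\ explicitly constructing budgets and payment functions. I would first try a small instance with one unselected candidate so that the witness candidates are few, and check priceability via explicit $B$ and $p$.
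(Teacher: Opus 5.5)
The positive half is fine and matches the paper. You are even a bit more careful, since you check independence of unapproved candidates before invoking \Cref{prop:wb_char}. The negative half, however, consists entirely of counterexamples, and the two you commit to are wrong.

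\emph{Lower quota extension.} Your instance has $n_1=n_2=2$ and $k=4$. The (C5) violation needs an unselected party-2 candidate, so $|C_2|\ge 2$. Lower quota then demands $\min(\lfloor 4\cdot\tfrac{2}{4}\rfloor,|C_2|)=2$ seats for party~2, while your $W$ gives it one, so $W\notin\LQ(\mathcal I)$. If instead $|C_2|=1$, $W$ is priceable with $B=6$. The upper-quota heuristic is the wrong lens here. On party-lists, $W$ is priceable iff some $t>0$ satisfies $s_x\le n_x t$ for every party and $n_x t\le s_x+1$ for every party with an unselected candidate, i.e.\ iff $W$ is a D'Hondt allocation, and D'Hondt can exceed upper quota. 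What breaks priceability is giving a small party a seat that lower quota permits but D'Hondt would not award. Your discarded instance works with the committee giving one seat to each party: $n_1=3$, $n_2=1$, $k=2$, where lower quota asks only for $(1,0)$. The lone party-2 voter forces $B\ge 4$, leaving party~1 a leftover of at least $2>1$. The paper uses the same idea with $100$ versus $1$ votes.

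\emph{Cohesiveness-basedness.} The cohesive candidate witnesses in \Cref{fig:laminar} are $\{1\}$ and $\{2\}$, not $\{1,2\}$, since $c_1\in W$. Letting voter~2 approve only $c_1$ does not make $W$ priceable. Voters $3,4$ still fund $c_6,\dots,c_{10}$ alone, so $B\ge 10$. Voter~1 then has budget at least $2.5$ but can spend at most $1$, and (C5) fails for $c_2$. Voter~2 would have to share that cost (e.g.\ $A_2=\{c_6,\dots,c_{10}\}$, $B=20/3$, leftover $2/3$). Alternatively, as the paper does, apply \Cref{lem:witness_iol_ifsv}: a cohesiveness-based witness function forces \ifsv, which priceability fails.

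That shifts the burden onto \ifsv. There, as for independence of approval swaps, individual discontentment and merge-proofness, you give no instance at all. ``Adapt the laminar instance'' is not an argument, since that committee is not even priceable. What you call the main obstacle is exactly the content of these claims: each needs an explicit instance, with an explicit price system before the modification and an impossibility argument after it.

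For \ifsv, the paper takes $A_1=\{c_1,c_2\}$, $A_2=\{c_1,c_3,c_4\}$, $A_3=\{c_2,c_3,c_4\}$, $A_4=\{c_4\}$, $k=3$, $W=\{c_1,c_2,c_3\}$. This is priceable with $B=4$, with voters $1,2,3$ paying for $c_2,c_1,c_3$. Once voters $2,3$ restrict to $\{c_3\}$, voter~1 alone must fund $c_1,c_2$, so $B\ge 8$, and voter~4's leftover of at least $2$ violates (C5) for $c_4$.

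Swaps are analogous: take $A_1=\{c_1,c_2\}$, $A_2=\{c_1,c_3\}$, $A_3=\{c_2,c_3\}$, $A_4=\{c_3\}$, $W=\{c_1,c_2\}$. After voter~3 swaps $c_2$ for $c_1$, voter~1 alone funds $c_2$, so $B\ge 4$, and $c_3$'s supporters keep a leftover of at least $2$.

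For individual discontentment and merge-proofness, the paper exhausts the admissible witnesses in small non-priceable instances. It then shows that the permitted modification (copying one witness member's ballot, resp.\ merging $c_3$ into voter~4's ballot) yields a priceable committee.
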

\begin{proof}
    First, it is easy to see that priceability satisfies independence of losers. Deleting candidates outside the committee does not change a ``valid'' price-system. As priceability is also anonymous and neutral it is a witness-based proportionality notion by \Cref{prop:wb_char}.

    To see that priceability is monotone, we also observe that a price-system stays valid after adding additional approvals for the chosen committee. Further, lower quota for party-lists follows from the fact that every committee satisfying priceability also satisfies PJR \citep{PeSk20a}.

    \noindent\textbf{\sadred{Lower Quota Extension}:} To see that priceability is not a lower quota extension, consider an apportionment instance with two parties, the first party receiving 100 votes, and the second party 1 vote. For $k = 2$ giving one seat to each party satisfies lower-quota for party-lists. It, however, does not lead to a valid price-system: for the lone supporter of the second party to have enough budget to purchase a candidate of the second party, it must hold that $B\ge 101$, i.e., $\frac Bn \ge 1$. Therefore, the voters voting for the first party possess at least a budget of $100$. Even after spending a total budget of $1$ to purchase a candidate from the first party, they have a remaining budget of $99>1$ to purchase further candidates from this party, and therefore violate condition C5).

\begin{wrapstuff}[r,type=figure,width=5cm]
\centering
\begin{tikzpicture}
[yscale=0.6,xscale=0.8,
voter/.style={anchor=south}]

    \foreach \i in {1,...,4}
        \node[voter] at (\i-0.5, -1) {$\i$};

    \draw[fill=teal!\strstr, ultra thick] (0, 0) rectangle (2, 1);
    \draw[fill=teal!\strstr, ultra thick] (0, 1) rectangle (1, 2);
    \draw[fill=teal!\strstr, ultra thick] (2, 1) rectangle (3, 2);
    \draw[fill=teal!\strstr, ultra thick] (1, 2) rectangle (3, 3);
    \draw[fill=magenta!\clrstr] (1, 3) rectangle (4, 4);

    \node at ( 1, 0.5) {$c_{1}$};
    \node at ( 0.5, 1.5) {$c_{2}$};
    \node at ( 2.5, 1.5) {$c_{2}$};
    \node at ( 2, 2.5) {$c_{3}$};
    \node at ( 2.5, 3.5) {$c_{4}$};
\end{tikzpicture}
\captionof*{figure}{Figure \ref*{fig:pjr_ias}: Example for priceability and \ifsv.}
\label{fig:price_ifsv}
\end{wrapstuff}

 \noindent\textbf{\sadred{Robustness to Fully Satisfied Voters}:} 
This would follow trivially from an example where a candidate on the committee would not be approved by any voter after changing the preference. We can, however, also give an example where every candidate is approved by at least one voter. For this, consider the instance depicted in \Cref{fig:price_ifsv} with the committee $\{c_1, c_2, c_3\}$. This committee is priceable, for instance with $B = 4$ and voters $1,2,3$ paying for candidates $c_2,c_1,c_3$, respectively. However, if the voters $2$ and $3$ change their approval profiles to be $\{c_3\}$ the committee would no longer be priceable: in that instance, voter $1$ would need to buy both $c_2$ and $c_3$. Hence, the total budget $B$ would need to be at least $8$, and thus voter $4$ together with candidate $c_4$ would violate C5).

Further, by priceability violating \ifsv, we also get from \Cref{lem:witness_iol_ifsv} that there does not exist a cohesiveness-based witness function for it.
\wrapstuffclear
\begin{wrapstuff}[r,type=figure,width=5cm]
\centering
\begin{tikzpicture}
[yscale=0.6,xscale=0.8,
voter/.style={anchor=south}]

    \foreach \i in {1,...,4}
        \node[voter] at (\i-0.5, -1) {$\i$};

    \draw[fill=teal!\strstr, ultra thick] (0, 0) rectangle (2, 1);
    \draw[fill=teal!\strstr, ultra thick] (0, 1) rectangle (1, 2);
    \draw[fill=teal!\strstr, ultra thick] (2, 1) rectangle (3, 2);
    \draw[fill=magenta!\clrstr] (1, 2) rectangle (4, 3);

    \node at ( 1, 0.5) {$c_{1}$};
    \node at ( 0.5, 1.5) {$c_{2}$};
    \node at ( 2.5, 1.5) {$c_{2}$};
    \node at ( 2.5, 2.5) {$c_{3}$};
\end{tikzpicture}
\captionof*{figure}{Figure \ref*{fig:pjr_ias}: Example for priceability and independence of approval swaps.}
\label{fig:price_ias}
\end{wrapstuff}
\noindent\textbf{\sadred{Independence of Approval Swaps}:}
For independence of approval swaps, consider the committee, depicted in \Cref{fig:price_ias}. The committee consisting of $c_1$ and $c_2$ is priceable for a budget of $B = 4$. The second voter could pay for $c_1$ and the third for $c_2$. However, if the third voter switches their approval set to be $c_1$, the committee is no longer priceable. The voters approving $c_3$ have a budget of at least $3$. They can, however, spend at most $1$ and thus $c_3$ is still affordable leading to a violation of condition C5).
\wrapstuffclear
\begin{wrapstuff}[r,type=figure,width=5cm]
\centering
\begin{tikzpicture}
[yscale=0.6,xscale=0.8,
voter/.style={anchor=south}]

    \foreach \i in {1,...,4}
        \node[voter] at (\i-0.5, -1) {$\i$};

    \draw[fill=teal!\strstr, ultra thick] (0, 0) rectangle (4, 1);
    \draw[fill=teal!\strstr, ultra thick] (0, 1) rectangle (1, 2);
    \draw[fill=teal!\strstr, ultra thick] (0, 2) rectangle (1, 3);
    \draw[fill=magenta!\clrstr] (2, 1) rectangle (4, 2);

    \node at ( 2, 0.5) {$c_{1}$};
    \node at ( 0.5, 1.5) {$c_{2}$};
    \node at ( 0.5, 2.5) {$c_{3}$};
    \node at ( 3, 1.5) {$c_{4}$};
\end{tikzpicture}
\caption{Example for priceability and witness functions.}
\label{fig:price_wf}
\end{wrapstuff}
For the witness function counter-examples first consider the instance depicted in \Cref{fig:price_wf} and the committee $\{c_1, c_2, c_3\}$ with $k = 3$. We first note that this committee is not priceable. To afford $c_2$ and $c_3$ voter $1$ must have at least a budget of $2$. However, then, $3$ and $4$ must have a budget of at least $3$ left after $c_1$ got bought and hence $c_4$ would violate condition C5).
\noindent\textbf{\sadred{Cohesiveness-Based}:}
First, we note that any subset of $\{3,4\}$ is not a valid witness in this instance. For example, if voter $2$ would additionally approve candidates $c_2$ and $c_3$ the committee $\{c_1, c_2, c_3\}$ would become priceable by giving each voter a budget of $1$ and letting voter $1$ pay for $c_2$, voter $2$ for $c_3$ and voter $3$ for $c_1$. Then voter $4$ is left with a budget of exactly $1$ and hence does not violate condition C5). Thus, as a consequence, there is no cohesiveness-based witness.

\noindent\textbf{\sadred{Individual Discontentment}:}
For individual discontentment, we note that by the cohesiveness-based example, the witness must include voter $1$ or $2$. Further, by local isomorphisms, the witness must contain at least one voter $i\in \{3,4\}$. However, then we could turn this voter into a fully satisfied one by letting her adapt voter $1$ or $2$'s preferences. In the resulting profile, the committee becomes priceable, despite the existence of a witness. This is the desired contradiction.
\wrapstuffclear

\begin{wrapstuff}[r,type=figure,width=5cm]
\centering
\begin{tikzpicture}
[yscale=0.6,xscale=0.8,
voter/.style={anchor=south}]

    \foreach \i in {1,...,4}
        \node[voter] at (\i-0.5, -1) {$\i$};

    \draw[fill=teal!\strstr, ultra thick] (0, 0) rectangle (1, 1);
    \draw[fill=teal!\strstr, ultra thick] (0, 1) rectangle (1, 2);
    \draw[fill=teal!\strstr, ultra thick] (1, 0) rectangle (3, 1);
    \draw[fill=magenta!\clrstr] (3, 0) rectangle (4, 1);

    \node at ( 0.5, 0.5) {$c_{1}$};
    \node at ( 0.5, 1.5) {$c_{2}$};
    \node at ( 2, 0.5) {$c_{3}$};
    \node at ( 3.5, 0.5) {$c_{4}$};
\end{tikzpicture}
\caption{Example for priceability and merge-proofness.}
\label{fig:price_mp}
\end{wrapstuff}
\noindent\textbf{\sadred{Merge-Proofness}:}
To show that there is no witness for priceability satisfying merge-proofness, consider the instance depicted in \Cref{fig:price_mp} with the committee $\{c_1, c_2, c_3\}$. This committee is not priceable. One can verify that every possible witness, except for $\{1,2,3,4\}$ and $\{2,3,4\}$ can be locally embedded into another instance, committee pair, in which the committee is priceable. (E.g., the witness must contain voter $4$ as otherwise, we can locally embed the witness into an instance where voters only approve candidates from $W$. Further case distinctions concern whether $1$ is contained in the witness and what happens if the witness is only of cardinality $2$.)
For $\{1,2,3,4\}$ and $\{2,3,4\}$ we can let voter $4$ additionally approve $c_3$ by merge-proofness. However, then letting voter $4$ pay for $c_3$ and voter $1$ for $c_1$ and $c_2$ leads to a valid price-system with $B=8$. \qedhere

\wrapstuffclear
\end{proof}

\end{document}